\documentclass{article}
\usepackage[utf8]{inputenc}
\usepackage[margin=2cm]{geometry}
\usepackage{enumerate}
\usepackage{amsmath,amsthm,amsfonts,amssymb, mathtools,mathrsfs, xcolor}
\usepackage{bm}
\usepackage{bbm}
\usepackage{hyperref}
\usepackage[style=numeric-comp, maxnames=4,
    minnames=4]{biblatex}
\usepackage{todonotes}
\usepackage{centernot}

\theoremstyle{plain}
\newtheorem{theorem}{Theorem}
\newtheorem{lemma}{Lemma}
\newtheorem{definition}{Definition}
\newtheorem{proposition}{Proposition}

\theoremstyle{definition}
\newtheorem{remark}{Remark}

\newcommand{\en}[1]{\omega_{\textbf{#1}}}
\newcommand{\di}[0]{\text{d}}
\newcommand{\massh}[0]{H_m}

\newcommand{\real}[0]{\operatorname{Re}}
\newcommand{\immag}[0]{\operatorname{Im}}

\newcommand{\RealSub}[1]{\mathscr{K}_{#1}}
\newcommand{\Fock}[1]{\mathscr{F}^s(#1)}
\newcommand{\Hil}[0]{\mathscr{H}}
\newcommand{\RealHil}[0]{\mathcal{H}}
\newcommand{\ComplexProduct}[2]{\left\langle#1,#2\right\rangle}
\newcommand{\RealProduct}[2]{\left(#1,#2\right)}
\newcommand{\Sympl}[2]{\sigma\left(#1,#2\right)}
\newcommand{\ComplexStr}[0]{\beta}

\newcommand{\Ocal}{\mathcal{O}}
\newcommand{\UOcal}[0]{ \mathcal{U}_\Ocal}
\newcommand{\VOcal}[0]{ \mathcal{V}_\Ocal}

\newcommand{\Cauchy}[0]{\Sigma}
\newcommand{\Mink}[0]{\mathbb{M}}
\newcommand{\Cdiamond}[0]{C}
\newcommand{\subMink}[0]{\textbf{B}}
\newcommand{\complexconj}[0]{\operatorname{C}}
\newcommand{\implicitop}[0]{\operatorname{Z}^\beta}
\newcommand{\opham}[0]{h}
\newcommand{\boundedop}[1]{\mathcal{B}(#1)}

\newcommand{\operatornorm}[1]{\|#1\|_{\operatorname{op}}}
\newcommand{\Fphi}[0]{\mathscr{F}_{\varphi}}
\newcommand{\Fpi}[0]{\mathscr{F}_{\pi}}
\newcommand{\Abweyl}[0]{\mathscr{W}}
\newcommand{\spec}[1]{\rho(#1)}
\newcommand{\spanc}[0]{\operatorname{span}_\mathbb{C}}

\renewcommand{\thefootnote}{\fnsymbol{footnote}}

\begin{document}
\par 
\bigskip 
\LARGE 
\noindent 
\textbf{Haag Duality in the Thermal Sector} 
\bigskip \bigskip
\par 
\rm 
\normalsize 
 
\large
\noindent 
{\bf Stefano Galanda\footnote{Corresponding author}$^{1,a}$}, {\bf Leonardo Sangaletti$^{2,b}$}\\
\par
\small

\noindent$^1$Department of Mathematics, University of York - 
York YO10 5GH, United Kingdom. \smallskip

\noindent$^2$Dipartimento di Fisica, Universit\`a di Genova - Via Dodecaneso, 33, I-16146 Genova, Italy. \smallskip
\smallskip

\noindent E-mail: 
$^a$stefano.galanda@york.ac.uk, 
$^b$leonardo.sangaletti@edu.unige.it

\bigskip

\noindent 
\small 
\renewcommand{\thefootnote}{\arabic{footnote}}
\setcounter{footnote}{0}
{\bf Abstract.}  We prove that the net of localised von Neumann algebras associated with a real scalar field propagating on Minkowski spacetime, in the KMS representation, satisfies a generalised version of Haag duality. Our proof combines ideas from existing arguments for the ground-state representation with purification techniques.

\bigskip

\noindent 
\small
{\bf Keywords.} Haag duality, von Neumann algebras, KMS states, Araki-Woods representation, Tomita-Takesaki modular theory, Algebraic Quantum Field Theory

\section{Introduction}
Haag duality is a central structural property of algebraic quantum field theory (AQFT) as formulated by the Haag-Kastler axioms~\cite{HaaKas:algebraic_approach}, expressing a precise correspondence between the orthomodular lattices of causally complete spacetime regions and the nets of local algebras. In other words, it establishes an exact equivalence between spacetime localization and algebraic commutation relations. In its standard formulation, Haag duality states that the von Neumann algebra, in an irreducible representation, localised in a \textit{sufficiently regular} open region coincides with the commutant of the algebra associated with its causal complement. This condition ensures that the local algebras are maximal, i.e.~no additional observables can be added without violating the condition of locality.\\

Beyond its conceptual significance, Haag duality plays a crucial technical role in the understanding and proof of several results in quantum field theory. Most notably, it is one of the essential properties needed to carry out the analysis of superselection sectors as carried out by Doplicher–Haag–Roberts~\cite{DHRa, DHRb, Dop:superselection_1, DopLon:superselection_2}. Moreover, it also underpins results on the modular structure of local algebras, such as the Bisognano–Wichmann theorem~\cite{BisognanoWichmann:1976}, thereby linking locality with spacetime symmetries.\\

Duality for the local net of von Neumann algebras was first rigorously established for free massive fields in the ground state representation on Minkowski spacetime. The first proof was given by Araki~\cite{ArakiHD1, ArakiHD2} who showed it for a free real scalar field localised in a double cone (causal diamond). Subsequently, alternative proofs of the same statement were given by Eckmann and Osterwalder using Tomita–Takesaki modular theory for standard subspaces~\cite{Osterwalder1973, EckmannOsterwalder1973} and by Rieffel~\cite{Rieffel1974}, see also~\cite{Garbarz_2022} for a review. This result was further extended to bosonic fields with spin by Dell'Antonio in~\cite{DellAntonio1968}. In the analogous case of free fermionic fields, a twisted version of Haag duality (due to canonical anticommutation relations) was proven in the Appendix of~\cite{DHRa}. Further proofs have been given for conformal field theories~\cite{BuchholzSchulz-Mirbach:1990, BrunettiGuidoLongo1993} and no-go results have been provided for gauge theories due to the existence of global charges~\cite{LeylandRobertsTestard1978}. Finally, these results can be generalised to globally hyperbolic spacetimes, in the representation induced by Hadamard states as proven and pioneered in the construction presented in \cite{Verch:1993, Verch_1997}.\\

Beyond continuum fields, Haag duality in the ground state representation has found important applications in lattice models of quantum physics, where it serves as a rigorous bridge between AQFT ideas and condensed matter systems. Existing proofs treat the case of one-dimensional spin systems~\cite{KeylMatsuiSchlingemannWerner2006} and two-dimensional lattice models such as the toric code~\cite{Naaijkens2012} and more general Kitaev quantum double models for cone-like regions of localisation~\cite{FiedlerNaaijkens2015, OgataPerezGarciaRuizDeAlarcon2025}. These results enabled the classification of anyonic superselection sectors via a Doplicher–Haag–Roberts–type analysis on the infinite lattice~\cite{Ogata2020Z2Index, Ogata2021H3Index, Ogata2022FermionSPT}. An approximate version of Haag duality~\cite{Ogata2022BraidedCategories} has also been used to prove the quantisation of the Hall conductance for an infinite plane~\cite{BachmannCorbelliFraasOgata2025}. Finally, let us mention that Haag duality has found an important application also in the recent program of formulating quantum information using algebraic methods~\cite{Luijk2025EntanglementCMP, LuijkPurificationHaag}.\\

However, despite its evident relevance across various areas of mathematical physics, a formulation of this duality in representations other than those induced by pure quasi-free states (such as the ground state) appears, to the best of our knowledge, to be missing. The duality relation for reducible representations $\pi^\beta$ induced by thermal equilibrium states has recently gained some interest~\cite{Bostelmann:2025bfr}. One would expect that the usual Haag-Duality relation should be modified in a natural way to take into account the reducibility of the representation (see~\cite[Def.~$37$]{FewsterRejzner}). The aim of this paper is to prove this relation. Consider the representation of the quasi-local $\operatorname{C}^*$-algebra induced by the ground state, denoted by $\mathcal{A}$. Since the representation is irreducible, the corresponding von Neumann algebra $\mathcal{A}''$ coincides with $\mathcal{B}(\Hil)$, namely the von Neumann algebra of bounded linear operators acting on a certain Hilbert space $\Hil$. Since Haag duality holds in this representation, we have that $\mathcal{M}(\Ocal)'=\mathcal{M}(\Ocal')$, with $\Ocal$ being a sufficiently regular bounded open region of spacetime and $\mathcal{M}(\Ocal)$ the corresponding local von Neumann algebra. Consider now a second copy of the quasi-local von Neumann algebra $\mathcal{B}(\Hil)$ and its (von Neumann algebra) tensor product $\mathcal{B}(\Hil)\otimes\mathcal{B}(\Hil)$. The commutant of the local algebra, identified with $\mathcal{M}(\Ocal)\otimes\mathbbm{1}$ in the doubled quasi-local algebra coincides with $\mathcal{M}(\Ocal')\otimes\mathcal{B}(\Hil)$~\cite[Lem.~$5.4$]{ArakiHD1}. If the doubled ground representation was unitarily equivalent to the thermal one, we could derive our result from the previous analysis. However, as it is well known, the two representations are not unitarily equivalent when the system considered is infinitely extended in space. Intuitively, this inequivalence is understood as a consequence of the infinite number of particles that distinguish the ground representation and a representation at finite temperature, in which the system is described by an infinitely extended ``thermal bath''. More precisely, it follows from the fact that the ground representation of the quasi-local $\operatorname{C}^*$-algebra generates a von Neumann algebra of Type I (coincides with $\mathcal{B}(\Hil)$), while in the finite temperature case with a von Neumann algebra of Type III. In order to circumvent the inequivalence between the two representations we first consider the problem at the one particle level. At this level, the ``infinite number of particles'' that distinguish the ground and the thermal representation play no role. Therefore, the two representations are related by a well defined Bogoliubov transformation. Once the result at the one particle level is obtained, we lift it at the level of the represented algebras by second quantization techniques~\cite{ArakiHD1,ArakiHD2,EckmannOsterwalder1973}.\\

The paper is organised as follows. In Section~\ref{sec: Setup} we recall the definition of the net of local von Neumann algebras associated with a free real scalar field in the ground and KMS representations, and we state the main result of our paper. In Section~\ref{sec: General}, following the original works of Araki and Eckmann–Osterwalder and combining them with the technique of purifying the representation induced by a KMS state (see~\cite{Umezawa:1982nv} for a review), we provide a general proof of duality in the KMS representation. In particular, the reducibility of the original representation gives rise to an additional non-trivial component in the commutant of a local algebra, beyond the usual geometric one. At the one-particle level, the real subspaces labelling the local algebras are embedded into a larger ambient space via a Bogoljubov transformation. It is therefore necessary to analyse how the relative position of these real subspaces is affected by the doubling procedure used to purify the thermal equilibrium state. Finally, in Section~\ref{sec: ExMink} we apply the general results obtained in the previous section to the specific example of a real massive scalar field on Minkowski spacetime. The Appendices collect, due to their relevance for our proof, the statement of the main theorem in~\cite{EckmannOsterwalder1973}, two technical lemmata and a proof of the so called \textit{pre-cyclicity property} of local real subspaces of a scalar field in a thermal representation.

\subsection{Notation}
For $f_1,f_2 \in S(\mathbb{R}^4,\mathbb{R})$ Schwartz functions, we denote their Fourier and inverse Fourier transform on Minkowski spacetime $\mathbb{M}$ respectively by
\begin{equation*}
    \hat{f}_1(p) \coloneqq \int f_1(x) e^{-ipx} \di^4 x \, , \qquad \check{f}_2(x) \coloneqq \frac{1}{(2\pi)^{4}} \int f_2(p) e^{ipx} \di^4 p
\end{equation*}
where in the exponential $px=p_\mu x^\mu$ is the product of the corresponding four-vectors in the mostly plus convention for the metric.\\
For $g_1,g_2 \in S(\mathbb{R}^3,\mathbb{R})$ we denote the spatial Fourier and spatial inverse Fourier transform by
\begin{equation*}
    \mathcal{F}(g_1)(\mathbf{p}) = \int g_1(\mathbf{x}) e^{-i\mathbf{p} \mathbf{x}}\di^3 \mathbf{x} \, , \qquad \mathcal{F}^{-1}(g_2)(\mathbf{x}) = \frac{1}{(2 \pi)^3}\int g_2(\mathbf{p}) e^{i\mathbf{p} \mathbf{x}}\di^3 \mathbf{p}.
\end{equation*}

\section{Setup and results}\label{sec: Setup}

\subsection{Real scalar field}\label{Sec: RealScalarField}
The model considered in this paper is a free scalar field on Minkowski spacetime. In the $\operatorname{C}^*$-algebraic setting, canonical commutation relations (CCR) are encoded by the algebraic relations among the generators of the so called \textit{Weyl algebra}
\begin{definition}
Let $(\mathcal{D},\sigma)$ be a real symplectic space consisting of a real linear vector space $\mathcal{D}$ and a non-degenerate symplectic form $\sigma: \mathcal{D} \times \mathcal{D} \to \mathbb{R}$. The Weyl $\operatorname{C}^*$-algebra associated with the real symplectic space $(\mathcal{D},\sigma)$, denoted $\mathcal{A}(\mathcal{D},\sigma)$, is the $\operatorname{C}^*$-algebra generated by symbols $\Abweyl(f),\;f \in \mathcal{D}$ satisfying the following relations for any $f,g\in\mathcal{D}$
\begin{align*}
    &\Abweyl(f)^* = \Abweyl(-f)\\
    &\Abweyl(f) \Abweyl(g) = e^{-\frac{i\sigma(f,g)}{2}}\Abweyl(f+g)\\
    &\Abweyl(0) = \mathrm{id}_{\mathcal{A}(\mathcal{D},\sigma)}
\end{align*}
and endowed with a $\operatorname{C}^*$-norm (see~\cite{bratteli1979operator}).
\end{definition}
It is well known that the Weyl $\operatorname{C}^*$-algebra exists and is unique up to $^*$-isomorphisms~\cite{Slawny:1972iq}. As we are interested in studying a free real massive scalar field on Minkowski spacetime $\mathbb{M}$, we consider the bilinear antisymmetric degenerate form on $ C^{\infty}_c(\mathbb{M},\mathbb{R})$
\begin{equation*}
    \sigma(f,g) \coloneqq \operatorname{E}(f,g)
\end{equation*}
where $\operatorname{E}: C^{\infty}_c(\mathbb{M},\mathbb{R}) \times C^{\infty}_c(\mathbb{M},\mathbb{R}) \to \mathbb{R}$ is the unique causal propagator associated with the Klein-Gordon equation with mass $m > 0$. The pair $(C^{\infty}_c(\mathbb{M},\mathbb{R})/\ker (\operatorname{E}),\operatorname{E})$ defines a symplectic space and we denote by $\mathcal{A}(\mathbb{M})=\mathcal{A}(C^\infty_c(\mathbb{M},\mathbb{R})/\ker \operatorname{E},\operatorname{E})$ the corresponding Weyl algebra. \\

Subordinated to the above quasi-local definition, we introduce a local net of $\operatorname{C}^*$-algebras~\cite{HaaKas:algebraic_approach}. This net associates to any open, bounded region of spacetime $\Ocal\in\Mink$ a $\operatorname{C}^*$-algebra $\mathcal{A}(\Ocal)$ via
\begin{equation*}
    \Ocal\mapsto\mathcal{A}(\Ocal)=\mathcal{A}(C^\infty_c(\Ocal,\mathbb{R})/\ker \operatorname{E},\operatorname{E}),
\end{equation*}
where $C^\infty_c(\Ocal,\mathbb{R})$ denotes the linear space of real valued test functions whose support is contained in $\Ocal$. The causal properties of $\operatorname{E}$ imply that this net is local, i.e $[\mathcal{A}(\Ocal_1), \mathcal{A}(\Ocal_2)]=\{0\}$ if $\Ocal_1,\Ocal_2$ are causally separated.
The quasi-local Weyl algebra $\mathcal{A}(\mathbb{M})$ coincides with $\operatorname{C}^*$-inductive limit of the net
\begin{equation*}
    \mathcal{A}(\mathbb{M}) = \overline{\bigcup_{\mathcal{O} \subset \mathbb{M}} \mathcal{A}(\mathcal{O})}^{\operatorname{C}^*},
\end{equation*}
By construction, this net satisfies isotony, i.e. $\Ocal_1 \subset \Ocal_2$ implies $\mathcal{A}(\Ocal_1) \subset \mathcal{A}(\Ocal_2)$. Furthermore, the invariance of the symplectic form under the action of the identity connected component of the Poincaré group $\mathscr{G}$ implies the covariance of the net; namely it exists a group of automorphisms $\alpha_\mathfrak{g}$ of the net such that for any $\mathfrak{g}\in\mathscr{G}$ the geometric action
\begin{equation*}
    \alpha_\mathfrak{g}\mathcal{A}(\mathcal{O})=\mathcal{A}(\mathfrak{g}(\mathcal{O}))
\end{equation*}
holds.\\

A state over an abstract $\operatorname{C}^*$-algebra, such as the Weyl algebra, is a positive, linear and normalised functional $\omega: \mathcal{A} \to \mathbb{C}$. Once a state on the $\operatorname{C}^*$-algebra $\mathcal{A}$ is chosen, a corresponding representation of its element as bounded operators on a Hilbert space is obtained via the GNS construction. To any state $\omega$ on $\mathcal{A}$, we associate a triple $(\mathscr{H}_{\omega}, \pi_{\omega}, \Omega_{\omega})$ where $\mathscr{H}_\omega$ is a complex Hilbert space, $\pi_{\omega}: \mathcal{A} \to \boundedop{\mathscr{H}_{\omega}}$ is a $^*$-homomorphism and $\Omega_{\omega} \in \mathscr{H}_{\omega}$ a cyclic vector that implements the state $\omega$
\begin{equation*}
    \omega(A)=(\Omega_\omega,\pi_\omega(A)\Omega_\omega)_{\mathscr{H}_\omega}, \quad\forall A\in\mathcal{A}.
\end{equation*}
This triple is unique up to unitarily equivalence.\\

If a state $\omega$ is quasi-free the corresponding GNS representation of the Weyl algebra $\mathcal{A}(\mathcal{D},\sigma)$ is of Fock type~\cite[Prop.~3.1]{KW91}. Usually, the free \textit{ground state} $\omega^{\infty}$ and the free \textit{KMS states} $\omega^{\beta}$ at inverse temperature $0<\beta<\infty$ defined on the quasi-local Weyl algebra with respect to the free dynamics are assumed to be quasi-free. We recall that a \textit{dynamics} on a $\operatorname{C}^*$-algebra $\mathcal{A}$ is a one-parameter family of $^*$-automorphism $\alpha_t,\;t\in\mathbb{R}$ of $\mathcal{A}$. Specifically, the \textit{free dynamics} on $\mathcal{A}(\Mink)$ acts on the generators as
\begin{equation*}
    \alpha_t \Abweyl(f) = \Abweyl(\mathscr{T}_t(f)) \qquad \forall t \in \mathbb{R}
\end{equation*}
where $\mathscr{T}_t$ is the one-parameter family of symplectomorphisms of $(C^{\infty}_c(\mathbb{M}/\ker \operatorname{E},\mathbb{R}), \operatorname{E})$ that implements the time shift isometries of $\Mink$, namely the Killing flow associated with the global timelike killing vector $\partial_t$
\begin{equation*}
    \mathscr{T}_t(f)(x^0,x^1,x^2,x^3)=f(x^0-t,x^1,x^2,x^3).
\end{equation*}
\\

The GNS representation of $\mathcal{A}(\Mink)$ induced by $\omega^\beta$ can be obtained by following~\cite{Kay:1985yx,Kay:purification} (see also~\cite{ArakiWoods}). First of all, we recall the definition of \textit{ground one-particle structure} over a symplectic space $(\mathcal{D},\sigma)$ equipped with a one-parameter family of linear symplectomorphisms $\mathscr{T}(t)$
\begin{definition}[\cite{Kay:1985yx} \textbf{Definition \boldmath$1\textbf{a}$\unboldmath}]\label{def: groundrep}
A ground one-particle structure $(K^{\infty}, \mathscr{H}^\infty, e^{-it h})$ over $(\mathcal{D},\sigma, \mathscr{T}(t))$ consists of a complex Hilbert space $\mathscr{H}^\infty$, a map $K^{\infty}: \mathcal{D} \to \mathscr{H}^\infty$ and a strongly continuous unitary group $e^{-it h}$ on $\mathscr{H}^\infty$ such that
\begin{enumerate}
    \item The image of $K^{\infty}$ is a real-linear subspace of $\mathscr{H}^\infty$ and is such that for any $f_1,f_2 \in \mathcal{D}$:
    \begin{equation*}
        2 \mathrm{Im}\langle K^{\infty} f_1, K^{\infty} f_2 \rangle_{\mathscr{H}^\infty}  = \sigma(f_1,f_2).
    \end{equation*}
    \item $K^{\infty} \mathcal{D}$ is dense in $\mathscr{H}^\infty$.
    \item $K^{\infty} \mathscr{T}(t) = e^{-it h} K^{\infty}$ on $\mathcal{D}$, where $h$ is a, possibly unbounded, strictly positive linear operator on $\mathscr{H}^\infty$.\footnote{By strictly positive we mean that its spectrum is contained in $(0,+\infty)$. More generally, an analogous definition can be given requiring $h$ to be only positive, self-adjoint and with no zero eigenvalue; however, for the purpose of our proof, the request of strict positivity turns out to be necessary. Therefore, we adopt it from the very beginning.}
\end{enumerate}
\end{definition}
The ground one-particle structure is unique up to unitary equivalence~\cite[Thm.~1a.]{Kay:1985yx}.
It follows that the symmetric Fock space constructed on the ground one-particle Hilbert space $\mathscr{H}^\infty$ coincides with the Hilbert space of the GNS representation of $\mathcal{A}(\mathcal{D},\sigma)$ induced by the quasi-free ground state $\omega^{\infty}$
\begin{equation}\label{Eq: ground_fock}
    \mathscr{H}_{\omega^{\infty}} = \bigoplus_{n=0}^{\infty} (\mathscr{H}^\infty)^{\otimes_s n}.
\end{equation}
On it, the ground state $\omega^\infty$ is implemented by the cyclic vacuum vector $\Omega_\infty\in(\mathscr{H}^\infty)^{\otimes_s0}=\mathbb{C}$.\\
We now recall the definition of \textit{KMS one-particle structure}
\begin{definition}[\cite{Kay:1985yx} \textbf{Definition \boldmath$1\textbf{b}$\unboldmath}]\label{Def: KMS-1ps}
A KMS one-particle structure $(K^{\beta}, \mathscr{H}^\beta, e^{-it \Tilde{h}})$ over $(\mathcal{D},\sigma, \mathscr{T}(t))$, consists of a complex Hilbert space $\mathscr{H}^\beta$, a map $K^{\beta}: \mathcal{D} \to \mathscr{H}^{\beta}$ and a strongly continuous unitary group $e^{-it \Tilde{h}}$ on $\mathscr{H}^\beta$ such that:
\begin{enumerate}
    \item The image of $K^{\beta}$ is a real-linear subspace of $\Hil^\beta$ and is such that for any $f_1,f_2 \in \mathcal{D}$
    \begin{equation*}
        2 \mathrm{Im}\langle K^{\beta} f_1, K^{\beta} f_2 \rangle_{\mathscr{H}^\beta}  = \sigma(f_1,f_2).
    \end{equation*}
    \item $K^{\beta} \mathcal{D} + i K^{\beta} \mathcal{D}$ is dense in $\mathscr{H}^\beta$.
    \item $K^{\beta} \mathscr{T}(t) = e^{-it\Tilde{h}} K^{\beta}$ on $\mathcal{D}$, with $\Tilde{h}$ having trivial kernel and satisfying the one-particle KMS condition 
    \begin{equation*}
        \langle e^{-it \Tilde{h}} K^\beta f_1,K^\beta f_2 \rangle_{\mathscr{H}^\beta} = \langle  e^{-\beta \Tilde{h}/2} K^\beta f_1, e^{-it \Tilde{h}} e^{-\beta \Tilde{h}/2}K^\beta f_2 \rangle_{\mathscr{H}^\beta} \qquad \forall t \in \mathbb{R} \, , \forall f_1,f_2 \in \mathcal{D}.
    \end{equation*}
    
\end{enumerate}
\end{definition}

Analogously to the ground state case, the KMS one-particle structure is also unique up to unitary equivalence~\cite[Thm.~1b]{Kay:1985yx}.\\

Assume now that a ground one-particle structure exists and note that the regularity condition $K^\infty\mathcal{D}\subset \operatorname{dom}(h^{-1/2})$ is trivially satisfied since $h^{-1}$ is bounded by assumption. Further assume that it exists a preferred complex conjugation $\complexconj$ on $\mathscr{H}^\infty$ such that $\complexconj e^{-ith}=e^{ith}\complexconj$. Then, the (unique) KMS one-particle structure is obtained by doubling the ground one-particle structure. More in details, we have $\mathscr{H}^\beta=\mathscr{H}^\infty\oplus\mathscr{H}^\infty$ and the action of $K^\beta$ is 
\begin{equation}\label{eq: formaKbeta}
     K^{\beta} f = \complexconj \sinh (Z^{\beta}) f \oplus \cosh (Z^{\beta}) f,\quad f \in\mathcal{D},
\end{equation}
where $\implicitop$ is defined implicitly on $\Hil^\infty$ by $\tanh (\implicitop) = e^{-\beta \opham /2}$. Finally, it holds 
\begin{equation*}
        e^{-it\Tilde{h}} = e^{it h}\oplus e^{-it h}.
\end{equation*}
Accordingly, the complex Hilbert space of the GNS representation of $\mathcal{A}(\mathcal{D},\sigma)$ induced by the quasi-free state $\omega^\beta$ is
\begin{equation}\label{Eq: KMS_fock}
    \mathscr{H}_{\omega^{\beta}} = \bigoplus_{n=0}^{\infty} (\mathscr{H}^\infty\oplus \mathscr{H}^\infty)^{\otimes_s n},
\end{equation}
and the vacuum vector $\Omega_\beta\in(\mathscr{H}^\infty\oplus\mathscr{H}^\infty)^{\otimes_s0}=\mathbb{C}\oplus\mathbb{C}$ implements the KMS state $\omega^\beta$ on the represented Weyl quasi-local algebra. In addition, the vector state $\Omega_\beta$ extends to a pure state on the irreducible $\operatorname{C}^*$-algebra $\mathcal{B}(\mathscr{H}_{\omega^\beta})$ and, for this reason, the construction we have shortly described is often referred to in the literature as purification. 

\vspace{3mm}

Given a representation $\pi_\omega$ of the quasi-local Weyl algebra $\mathcal{A}(\mathbb{M})$ and the corresponding represented net of $\operatorname{C}^*$-algebras, we construct a net of von Neumann algebras by defining the assignment
\begin{equation*}
    \mathcal{O}\mapsto \mathcal{M}_\omega(\Ocal)\coloneqq\pi_\omega(\mathcal{A}(\Ocal))'',
\end{equation*}
where $\pi_\omega(\mathcal
A(\Ocal))'$ denotes the commutant of $\pi_\omega(\mathcal
A(\Ocal))$, i.e. the set of operators in $\mathcal{B}(\mathscr{H}_\omega)$ that commutes with every element of $\pi_\omega(\mathcal{A}(\Ocal))$.
Clearly this net continues to satisfy the causal structure, namely $\mathcal{M}_\omega(\Ocal_1)\subset \mathcal{M}_\omega(\Ocal)' $ if $\Ocal_1\subset\Ocal'$, where $\Ocal'$ denotes the interior of the causal complement of $\Ocal$. We also denote by $\mathcal{M}_\omega(\mathbb{M})=\pi_\omega(\mathcal{A}(\mathbb{M}))''$. If the state $\omega$ is pure, so that the corresponding GNS representation is irreducible, we have $\pi_\omega(\mathcal{A})''=\mathcal{B}(\mathscr{H}_\omega)$. This is the case of the ground state representation of the scalar field, for which $\mathcal{M}_{\omega^\infty}(\mathbb{M})\equiv\mathcal{M}_\infty(\mathbb{M})=\mathcal{B}(\mathscr{H}_{\omega^\infty})$. In this particular model, for spacetime regions $\Ocal \subset \mathbb{M}$ corresponding to causal diamonds, Haag duality holds~\cite{ArakiHD2}. Namely
\begin{equation*}
    \mathcal{M}_{\omega^\infty}(\Ocal)'\equiv\mathcal{M}_\infty(\Ocal)'=\mathcal{M}_\infty(\Ocal'),
\end{equation*}
where $\mathcal{M}_\infty(\Ocal')$ denotes the double commutant of the quasi-local algebra induced by the $\operatorname{C}^*$-inductive limit of the represented $\operatorname{C}^*$-algebras localized in $\Ocal'$.

\subsection{Main result}
In the thermal case, instead, the situation is quite different. Since the thermal equilibrium state is not pure, the corresponding GNS representation is reducible. The vector $\Omega_{\omega^\beta}$ extends the state $\omega^\beta$ to a normal state on the von Neumann algebra $\mathcal{M}_{\omega^\beta}(\mathbb{M})$. If the time evolution is implemented in the representation by a strongly continuous one-parameter unitary group~\footnote{Note that this condition is matched if the representation coincides with the one constructed over a KMS one-particle structure.}, the vector state $\Omega_{\omega^\beta}$ defines a KMS state on $\mathcal{M}_{\omega^\beta}(\mathbb{M})\equiv \mathcal{M}_\beta(\mathbb{M})$ (see the proof of~\cite[Cor.~5.3.4.]{BraRob:qsm2}) and, therefore, it is not only cyclic (by GNS construction), but also separating for the represented quasi-local von Neumann algebra~\cite[Cor.~5.3.9.]{BraRob:qsm2}. Consequently Tomita-Takesaki modular theory for the pair $(\mathcal{M}_{\beta}(\mathbb{M}),\Omega_{\omega^\beta})$ can be constructed and we denote by $J,\Delta$ the corresponding modular data. The commutant $\mathcal{M}_{\beta}(\mathbb{M})'$ is then obtained as
\begin{equation*}
    \mathcal{M}_{\beta}(\mathbb{M})'=J\mathcal{M}_{\beta}(\mathbb{M})J.
\end{equation*}
Consequently, the following relation holds
\begin{equation*}
    J\mathcal{M}_{\beta}(\mathbb{M})J\vee \mathcal{M_\beta}(\Ocal')=\left(J\mathcal{M}_{\beta}(\mathbb{M})J\cup\mathcal{M_\beta}(\Ocal')\right)''\subset \mathcal{M}_\beta(\Ocal)'.
\end{equation*}
Here $\vee$ denotes the von Neumann algebra generated by the set-theoretic union, $\mathcal{M}_\beta(\Ocal)\equiv\mathcal{M}_{\omega^\beta}(\Ocal)$ and $\mathcal{M}_{\beta}(\Ocal')$ is defined analogously to the ground state case. The main goal of this paper is to identify sufficient conditions that imply the following generalised form of Haag duality
\begin{equation*}
    J\mathcal{M}_{\beta}(\mathbb{M})J\vee \mathcal{M_\beta}(\Ocal')=\mathcal{M}_\beta(\Ocal)',
\end{equation*}
assuming that Haag duality holds in the ground representation. Our main theorem is the following

\begin{theorem}\label{thm: main}
Let $\Ocal \mapsto \mathcal{A}(\Ocal)$ be the abstract net of Weyl $\operatorname{C}^*$-algebras associated with a free real massive scalar field. Let $\Cauchy_t$ be the Cauchy surface at fixed time coordinate $t$ and $\Cdiamond(\subMink)$ be a (open) causal diamond with base $\subMink\subset\Cauchy_t$\footnote{For the precise definition see Equation \eqref{eq: C(B)}}. Consider $\omega^{\beta}$, for $0< \beta < \infty$, the KMS state with respect to the free dynamics on the quasi-local algebra $\mathcal{A}(\mathbb{M})$ and the correspondingly induced net of von Neumann algebras $\mathcal{O} \mapsto \mathcal{M}_\beta(\Ocal)$
\begin{equation*}
    \mathcal{M}_\beta(\Ocal) \coloneqq \pi_{\omega^{\beta}}(\mathcal{A}(\Ocal))''.
\end{equation*}
Then, generalised Haag duality holds in the following sense:
\begin{equation*}
    \mathcal{M}_\beta(\Cdiamond(\subMink))' = \mathcal{M_\beta}(\Cdiamond(\subMink)') \vee J\mathcal{M}_{\beta}(\mathbb{M})J.
\end{equation*}
\end{theorem}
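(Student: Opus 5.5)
The plan is to reduce the statement to a problem about closed real subspaces of the one-particle space $\Hil^\beta=\Hil^\infty\oplus\Hil^\infty$, and then lift it back to the von Neumann algebras by second quantization.

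For the lift we use the Araki/Eckmann–Osterwalder lattice results for the Fock representation (the theorem recalled in the Appendix). For a closed real subspace $\RealSub{}\subset\Hil^\beta$, write $\mathcal{M}(\RealSub{})$ for the von Neumann algebra generated by the Weyl operators $W(\psi)$, $\psi\in\RealSub{}$. These results give
\begin{equation*}
\mathcal{M}(\RealSub{})'=\mathcal{M}(\RealSub{}'),\qquad \mathcal{M}(\RealSub{1})\vee\mathcal{M}(\RealSub{2})=\mathcal{M}\big(\overline{\RealSub{1}+\RealSub{2}}\big),
\end{equation*}
where $\RealSub{}'$ is the symplectic complement with respect to $\immag\ComplexProduct{\cdot}{\cdot}_{\Hil^\beta}$. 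Set $\RealSub{\beta}(\Ocal):=\overline{K^\beta C^\infty_c(\Ocal,\mathbb{R})}$, so that $\mathcal{M}_\beta(\Ocal)=\mathcal{M}(\RealSub{\beta}(\Ocal))$. Since $\mathcal{M}_\beta(\Mink)'=J\mathcal{M}_\beta(\Mink)J$, we also get $J\mathcal{M}_\beta(\Mink)J=\mathcal{M}(\RealSub{\beta}(\Mink)')$. The theorem is therefore equivalent to the one-particle identity
\begin{equation*}
\RealSub{\beta}(\Cdiamond(\subMink))'=\overline{\RealSub{\beta}(\Cdiamond(\subMink)')+\RealSub{\beta}(\Mink)'} .
\end{equation*}
Here $\RealSub{\beta}(\Cdiamond(\subMink)')$ is the closure of the union over bounded regions inside $\Cdiamond(\subMink)'$. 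The inclusion "$\supset$" is immediate from locality of $\operatorname{E}$ and from $\RealSub{\beta}(\Cdiamond(\subMink))\subset\RealSub{\beta}(\Mink)$.

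To prove "$\subset$", I will transport the ground one-particle Haag duality through a Bogoliubov transformation. On $\Hil^\infty\oplus\Hil^\infty$, viewed as a real Hilbert space, take the symplectic form $\sigma_\infty\oplus(-\sigma_\infty)$ with $\sigma_\infty=2\immag\ComplexProduct{\cdot}{\cdot}_{\Hil^\infty}$, and define
\begin{equation*}
T(u\oplus v)=\complexconj(\sinh(\implicitop)u+\cosh(\implicitop)\complexconj v)\oplus(\cosh(\implicitop)u+\sinh(\implicitop)\complexconj v).
\end{equation*}
The exact placement of $\complexconj$ will be fixed so that $T$ is symplectic onto $\Hil^\beta$. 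Two facts make $T$ well behaved. First, $\complexconj$ commutes with all functions of $h$, because $\complexconj e^{-ith}=e^{ith}\complexconj$ and $\complexconj$ is antilinear. Second, the mass gap gives $h\ge m>0$, so $\implicitop$ is bounded, and then $\cosh(\implicitop)$ and $\sinh(\implicitop)$ are bounded with $\cosh^2-\sinh^2=1$. From these, $T$ is a bounded real-linear symplectic bijection with bounded inverse. Consequently it maps closed subspaces to closed subspaces, preserves closures of sums, and satisfies $T(L)'=T(L')$.

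By \eqref{eq: formaKbeta} we have $K^\beta f=T(K^\infty f\oplus 0)$, hence $\RealSub{\beta}(\Ocal)=T(\RealSub{\infty}(\Ocal)\oplus 0)$. In $\Hil^\infty\oplus\Hil^\infty$, the symplectic complement of $\RealSub{\infty}(\Ocal)\oplus0$ is $\RealSub{\infty}(\Ocal)'\oplus\Hil^\infty$. Ground-state one-particle duality for double cones (Araki, or Eckmann–Osterwalder, which is exactly what underlies $\mathcal{M}_\infty(\Ocal)'=\mathcal{M}_\infty(\Ocal')$) gives $\RealSub{\infty}(\Cdiamond(\subMink))'=\RealSub{\infty}(\Cdiamond(\subMink)')$. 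By density of $K^\infty\mathcal{D}$ in $\Hil^\infty$, we also have $\RealSub{\beta}(\Mink)'=T(\Hil^\infty\oplus0)'=T(0\oplus\Hil^\infty)$. Applying $T$ then yields
\begin{equation*}
\RealSub{\beta}(\Cdiamond(\subMink))'=T\big(\RealSub{\infty}(\Cdiamond(\subMink)')\oplus\Hil^\infty\big)=\overline{\RealSub{\beta}(\Cdiamond(\subMink)')+\RealSub{\beta}(\Mink)'} ,
\end{equation*}
which is the required identity.

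The main obstacle is the bookkeeping around $T$. I need to check that $T$ really is symplectic, which requires the correct sign of the form on the second summand and the correct insertion of $\complexconj$. I also need $T$ to be a homeomorphism, so that closures and complements commute with it; this is where strict positivity of $h$ enters. A second point is that the second-quantization lattice identities need standardness, or at least some care for non-standard subspaces. The KMS cyclic-separating property handles $\RealSub{\beta}(\Mink)$, and for local subspaces I would invoke the pre-cyclicity result proved in the Appendix. If the map $T$ turns out to be awkward, the fallback is to verify the one-particle identity directly via the Eckmann–Osterwalder characterization, using the explicit form of $\RealSub{\beta}(\Mink)'=\{\complexconj\cosh(\implicitop)g\oplus\sinh(\implicitop)g\}$.
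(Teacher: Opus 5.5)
Your argument is correct, and it reaches the theorem by a genuinely different and more economical route than the paper.

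The paper works in the Weyl formulation. For time-reflection-invariant diamonds it writes $\mathcal{M}_\beta(\Ocal)=\mathcal{R}_{\mathrm F}(\UOcal,\VOcal)$ with $\UOcal,\VOcal\subset\RealSub{}^\infty\oplus\RealSub{}^\infty$. It then computes their \emph{orthogonal} complements using the bounded operator $A$, which is your $T^{-1}$ followed by swapping the two summands. After that it:
\begin{enumerate}
\item verifies the generic-position hypotheses and applies Eckmann--Osterwalder together with Araki's Section~5 reduction;
\item identifies the extra factor $\mathcal{R}_{\mathrm F}(\tilde{\mathcal U}_{\Mink},\tilde{\mathcal V}_{\Mink})$ with $J\mathcal{M}_\beta(\Mink)J$ by computing the one-particle modular data $j,\delta$ explicitly;
\item inputs Araki's initial-data lemma $F_R(\subMink)^\perp=F_I(\subMink^c)$;
\item uses time-translation covariance to reach diamonds based on $\Cauchy_t$.
\end{enumerate}

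You instead use Araki's Segal-form duality $\mathcal{M}(K)'=\mathcal{M}(K')$, which holds for every closed real subspace in the Fock representation. So no standardness or pre-cyclicity input is needed, but you should cite Araki's general theorem rather than the generic-position statement in the Appendix. You combine this with one global bounded symplectic bijection $T$ with bounded inverse. Then $J\mathcal{M}_\beta(\Mink)J=\mathcal{M}(\RealSub{\beta}(\Mink)')=\mathcal{M}(T(0\oplus\Hil^\infty))$ comes for free. No modular computation, generic-position analysis, or symmetric/antisymmetric splitting is required.

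A minor bookkeeping point: for your displayed $T$, the cross terms in $\immag\langle T\,\cdot,T\,\cdot\rangle$ are real, so the pulled-back form is $\sigma_\infty\oplus\sigma_\infty$, not $\sigma_\infty\oplus(-\sigma_\infty)$. Since you only use $(L\oplus 0)'=L'\oplus\Hil^\infty$, either sign works.

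A real gain of your route is that it never uses time-reflection symmetry or the time-translation invariance of $\omega^\beta$. It therefore proves the generalised duality for every region where ground-state one-particle duality holds, for example boosted diamonds, which the paper's closing remark leaves open. In exchange, the paper's route gives explicit Weyl-form descriptions of all the subspaces and of the one-particle modular objects. It also isolates how the relative positions of subspaces change under the doubling (Proposition~2).
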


\section{General results for the thermal one-particle structure}\label{sec: General}

Although the main goal of this paper is to study the duality of the local net of von Neumann algebras in the KMS representation associated with a scalar quantum field theory on a globally hyperbolic spacetime manifold, in this section we investigate its general properties without referring to the specific features of the underlying symplectic space. Our results are model-independent in the sense that we analyse only the relative positions of the corresponding one-particle Hilbert space structures and examine how they are affected by a general Bogoliubov transformation.

\subsection{Weyl and Segal formulation of the CCR}\label{sec: RealfromComp}
Since it will be convenient to work within the Weyl formulation of the representation of the CCR algebra, we shortly recall the standard construction of a real Hilbert space $\RealHil$ starting from a complex Hilbert space $\Hil$ with complex inner product $\ComplexProduct{\cdot}{\cdot}_\Hil$ (antilinear in the first entry). 
The real vector space $\RealHil$ coincides with $\Hil$ as a set, but it is given the structure of a vector space over the real field $\mathbb{R}$. The real inner product is $\RealProduct{\cdot}{\cdot}_\RealHil=\real\ComplexProduct{\cdot}{\cdot}_\Hil$ and, since obviously $\|\cdot\|_\RealHil=\|\cdot\|_\Hil$, $\RealHil$ is complete in the norm topology so that it defines a real Hilbert space. The complex part of the inner product defines a symplectic form $\Sympl{\cdot}{\cdot}_\RealHil=2\immag\ComplexProduct{\cdot}{\cdot}_\Hil$ on $\RealHil$. The real and complex part of the inner product are related by a unique (by the Riesz representation theorem) canonical complex structure $\ComplexStr_\RealHil:\RealHil\mapsto\RealHil$, $\ComplexStr_\RealHil=-\ComplexStr_\RealHil^*$ (where $^*$ denotes here the adjoint w.r.t. $\RealProduct{\cdot}{\cdot}_\RealHil$), $\ComplexStr_\RealHil^2=-\mathbbm{1}_\RealHil$ by
\begin{equation}\label{Eq: cs_relates_scalarproducts}
    \frac{1}{2}\Sympl{v}{w}_\RealHil=\immag\ComplexProduct{v}{w}_\Hil=-\real \ComplexProduct{v}{
    iw}_\Hil=\RealProduct{v}{\ComplexStr_\RealHil w}_\RealHil,\quad\forall v,w\in\RealHil.
\end{equation}
We finally denote by $\RealSub{}$ a closed real linear subspace of $\RealHil$ such that $\RealSub{}\perp\ComplexStr_\RealHil\RealSub{}$ (orthogonal with respect to $\RealProduct{\cdot}{\cdot}_\RealHil$) and $\RealHil=\RealSub{}\oplus\beta_{\RealHil{}}\RealSub{}$ (this decomposition always exists, but is not unique, see~\cite[Footnote~13]{ArakiHD1})\footnote{For $A$ and $B$ linear subspaces of a Hilbert space $C$ we use the symbol $A+ B$ to denote the (not necessarily closed) subspace generated by the linear span of $A\cup B$. If in addition $A\cap B=\{ 0 \}$, we use the notation $A\oplus B$. Note that this notation does not necessarily imply that $A\perp B$; if $A\perp B$, it holds that $\overline{A\oplus B}=\overline{A}\oplus\overline{B}$. Finally, we also use the notation $C\oplus D$ to denote the external direct sum of Hilbert spaces and correspondingly the notation $A\oplus B$ for the direct sum of $A\oplus\{0\}\subset C\oplus D$ and $\{0\}\oplus B \subset C\oplus D$. When this notation could be source of confusion, we make use of the symbol $\oplus_\mathrm{e}$ to denote the external direct sum.}.\\

This standard construction applies also to a doubled complex Hilbert space $\Hil\oplus\Hil$ with inner product $\ComplexProduct{\cdot}{\cdot}$. We denote by $\RealProduct{\cdot}{\cdot}$ the real scalar product on the corresponding real Hilbert space, which clearly coincides with $\RealHil\oplus\RealHil$. The complex structure $\ComplexStr_{\RealHil\oplus\RealHil}$ on $\RealHil\oplus\RealHil$ is related to $\ComplexStr_\RealHil$ by $\ComplexStr_{\RealHil\oplus\RealHil}=\ComplexStr_\RealHil\oplus\ComplexStr_\RealHil$. Indeed $\ComplexStr_\RealHil\oplus\ComplexStr_\RealHil$ is obviously a linear map from $\RealHil\oplus\RealHil$ to itself, satisfying $(\ComplexStr_\RealHil\oplus\ComplexStr_\RealHil)^2=-\mathbbm{1}$, $(\ComplexStr_\RealHil\oplus\ComplexStr_\RealHil)^* = -\ComplexStr_\RealHil\oplus\ComplexStr_\RealHil$ (with the adjoint defined with respect to the inner product $\RealProduct{\cdot}{\cdot}$) and for every $v_1,v_2,w_1,w_2\in\RealHil$ it holds
\begin{align*}
    \frac{1}{2}\Sympl{v_1\oplus v_2}{w_1\oplus w_2}&\coloneqq\immag\ComplexProduct{v_1\oplus v_2}{w_1\oplus w_2}\\
    &=\immag\ComplexProduct{v_1}{w_1}_\Hil+\immag\ComplexProduct{v_2}{w_2}_\Hil\\
    &=\RealProduct{v_1}{\ComplexStr_\RealHil w_1}_\RealHil+\RealProduct{v_2}{\ComplexStr_\RealHil w_2}_\RealHil\\
    &=\RealProduct{v_1\oplus v_2}{\ComplexStr_\RealHil\oplus \ComplexStr_\RealHil (w_1\oplus w_2)}\\
    &=\RealProduct{v_1\oplus v_2}{\ComplexStr_{\RealHil\oplus\RealHil} (w_1\oplus w_2)}.
\end{align*}
The decomposition $\RealHil\oplus\RealHil=(\RealSub{}\oplus_\mathrm{e}\RealSub{})\oplus \ComplexStr_{\RealHil\oplus\RealHil}(\RealSub{}\oplus_\mathrm{e}\RealSub{})$ holds, with $(\RealSub{}\oplus_\mathrm{e}\RealSub{})\perp \ComplexStr_{\RealHil\oplus\RealHil}(\RealSub{}\oplus_\mathrm{e}\RealSub{})$ w.r.t. the real scalar product $\RealProduct{\cdot}{\cdot}$.\\

Following~\cite[Sec.~2]{ArakiHD1} we now make use of the spaces $\Hil$, $\RealHil$ and $\RealSub{}$ to describe two equivalent formulations of the Fock space representation of the canonical commutation relations (CCR). First let us recall that the symmetrised Fock space $\Fock{\Hil}$ is defined as
\begin{equation*}
    \Fock{\Hil}=\bigoplus_{n=0}^\infty\Hil^{\otimes_sn},
\end{equation*}
with $\Hil^{\otimes_s0}=\mathbb{C}$. On this Hilbert space we define the usual creation and annihilation operators $a^\dagger(v),a(w),\;v,w\in\Hil$
\begin{align*}
    a^\dagger(v)v_1\otimes_s\ldots \otimes_s v_n&=(n+1)^{1/2}v\otimes_sv_1\otimes_s\ldots \otimes_s v_n;\\
    a(w)v_1\otimes_s\ldots \otimes_s v_n&=n^{-1/2}\sum_{i=1}^n\ComplexProduct{w}{v_i}_{\Hil}v_1\otimes_s\ldots\widehat{v_i}\ldots \otimes_s v_n;\\
    a^\dagger(v)\Omega&=v;\qquad a(w)\Omega=0,
\end{align*}
where $\Omega\in\Hil^{\otimes_s0}$ is the vacuum vector and only in this case $\widehat{h_i}$ means that vector is to be omitted. Note that the annihilation operator is by definition antilinear in its entry $w$. The linear extension of these operators is defined on the dense domain $F_0$ of vectors in $\Fock{\Hil}$ having finitely many non vanishing components. On this domain the operators satisfy the CCR
\begin{equation}\label{Eq:CCR_unbound}
    [a(w),a^\dagger(v)]\Psi=\ComplexProduct{w}{v}_\Hil\Psi,\quad\Psi\in F_0.
\end{equation}
The field operator $\phi(v)$ with dense domain $F_0$ is defined as
\begin{equation*}
    \phi(v)=a(v)+a^{\dagger}(v).
\end{equation*}
The set $F_0$ is a dense set of analytic vectors for $\phi(v)$. Therefore, Nelson's theorem implies that $\phi(v)$ is essentially self-adjoint and we denote with the same symbol its unique self-adjoint extension. 

In the Segal formulation of the Fock representation of the CCR we construct the representation starting from the real Hilbert space $\RealHil$ and the complex structure $\ComplexStr_{\RealHil}$ associated to $\Hil$. We consider the filed operator $\phi(v)$, but we now identify its argument $v$ with an element of $\RealHil$. The commutation relations~\eqref{Eq:CCR_unbound} needs then to be replaced by
\begin{equation*}
    [a(w),a^\dagger(v)]\Psi=\RealProduct{w}{v}_\RealHil\Psi+i\RealProduct{w}{\ComplexStr_\RealHil v}_\RealHil \Psi,\quad\Psi\in F_0.
\end{equation*}
The Weyl operators $W(v)=e^{i\phi(v)}$ are the unitary operators generated by the self-adjoint fields $\phi(v)$ and it follows~\cite[Prop.~5.2.4.]{BraRob:qsm2} that they satisfy the CCR in the Segal form
\begin{equation*}
W(v)W(w)=W(v+w)e^{-i\RealProduct{v}{\ComplexStr_\RealHil w}_\RealHil}=W(v+w)e^{-\frac{i}{2}\Sympl{v}{w}_\RealHil}.
\end{equation*}

As already mention before, it always exists a (non unique) real Hilbert space $\RealSub{}\subset\RealHil$ such that $\RealHil=\RealSub{}\oplus\ComplexStr_\RealHil\RealSub{}$. For any $v_1,v_2\in\RealSub{}$ we define the operators
\begin{align*}
\varphi(v_1)&=a^\dagger(v_1)+a(v_1),\\
\pi(v_2)&=i(a^\dagger(v_2)-a(v_2))
\end{align*}
and, from now on, we use the same symbol to denote their unique self-adjoint closure. On the subspace $F_0$ they satisfy the commutation relations
\begin{align*}
    [\varphi(v_1),\varphi(v_2)]\Psi &=[\pi(v_1),\pi(v_2)]\Psi=0,\quad \Psi\in F_0,\\
    [\varphi(v_1),\pi(v_2)]\Psi&=2i\RealProduct{v_1}{v_2}_\RealHil\Psi,\quad\Psi\in F_0
\end{align*}
and consequently the unitary operators $U(v_1)\coloneqq e^{i\varphi(v_1)}$ and $V(v_2)=e^{i\pi(v_2)}$ satisfy 
\begin{equation*}
U(v_1)V(v_2)U(v_3)V(v_4)=U(v_1+v_3)V(v_2+v_4)e^{2i\RealProduct{v_2}{v_3}_\RealHil},
\end{equation*}
namely the CCR in the Weyl form. Given a generic element $v\in\RealHil$, it admits a unique decomposition $v=v_1+\ComplexStr_\RealHil v_2$, with $v_1,v_2\in\RealSub{}$, and the following equalities relating the field operators in the Segal and Weyl formulation are easily verified
\begin{align}\label{Eq: connection_weyl_segal}
\phi(v)&=\varphi(v_1)+\pi(v_2),\\
    W(v)&=U(v_1)V(v_2)e^{i\RealProduct{v_1}{v_2}_\RealHil}.
\end{align}

Let now $\RealSub{1},\RealSub{2}$ be two linear subspaces of $\RealSub{}$ and $\mathcal{H}_1$ a linear subspace of $\mathcal{H}$. Denoting by
\begin{equation}\label{eq: SegalWeyl}
    \mathcal{R}_{\mathrm{S}}(\mathcal{H}_1)\coloneqq \left\{W(v);v\in\mathcal{H}_1\right\}'';\quad \mathcal{R}_\mathrm{F}(\RealSub{1},\RealSub{2})\coloneqq \left\{U(v_1)V(v_2); v_1\in\RealSub{1},v_2\in\RealSub{2}\right\}''
\end{equation}
the Segal and Weyl von Neumann subalgebra associated to the subspaces, Equation~\eqref{Eq: connection_weyl_segal} shows that
\begin{equation}\label{eq: chiusuravN}
 \mathcal{R}_\mathrm{F}(\RealSub{1},\RealSub{2})=\mathcal{R}_\mathrm{S}(\RealSub{1}\oplus\ComplexStr_\RealHil\RealSub{2}).
\end{equation}
Vice versa, it can be proven~\cite[Thm.~3.]{ArakiHD1} that for any $\mathcal{H}_1\subset\RealHil$ there exists subspaces $\RealSub{},\RealSub{1},\RealSub{2}$ of $\RealHil$ such that $\RealHil=\RealSub{}\oplus\ComplexStr_\RealHil\RealSub{}$, $\mathcal{H}_1=\RealSub{1}\oplus\ComplexStr_\RealHil\RealSub{2}$ with $\RealSub{1},\RealSub{2}\subset\RealSub{}$. Therefore, we can always decide if to work with the Weyl or with Segal formulation of these subalgebras depending on which of the two formulations is more convenient. 
Note that the following equalities hold
\begin{equation}\label{Eq: tacke_closure}
    \mathcal{R}_\mathrm{S}(\mathcal{H}_1)=\mathcal{R}_\mathrm{S}(\overline{\mathcal{H}_1}), \quad \mathcal{R}_\mathrm{F}(\RealSub{1},\RealSub{2})=\mathcal{R}_\mathrm{F}(\overline{\RealSub{1}},\overline{\RealSub{2}}),
\end{equation}
where the overline denotes the closure in the norm topology of $\RealHil$. In order to prove the previous equalities first observe that the operator $W(w)$ is strongly continuous in $w$ with respect to the norm topology of $\RealHil$, and the operators $U(v_1),V(v_2)$ are strongly continuous in $v_1,v_2$ with respect to the norm topology of $\RealSub{}$. This standard result follows from the fact that the self-adjoint generators $\phi(v),\varphi(v_1),\pi(v_2)$ are strongly continuous in their argument on the core $F_0$ (see for instance~\cite[Prop.~5.2.4.,(4)]{BraRob:qsm2}). Since by the bicommutant theorem a von Neumann algebra is closed in the strong topology, this observation shows that the generators of the von Neumann algebras on the left and on the right hand side of Equations~\eqref{Eq: tacke_closure} coincide. 

\subsection{Results on one-particle Hilbert spaces and duality}\label{sec: dualityabstract}
Let $\Hil$ be a complex Hilbert space, $h$ a strictly positive (possibly unbounded) operator on $\Hil$ and $\complexconj$ a complex conjugation on $\Hil$; assume that $\complexconj$ and $h$ commute, in the sense that $\complexconj e^{ith}=e^{-ith}\complexconj$. As mentioned in Section~\ref{Sec: RealScalarField}, if $\Hil$ describes a ground one-particle Hilbert space, then under the previous assumptions the KMS one-particle Hilbert space coincides with $\Hil\oplus\Hil$. 
Since $h$ is strictly positive, the operator $e^{-\beta h},\;\beta\in [0,+\infty)$ is bounded and self-adjoint.

As explained in the previous section, we denote by $\RealHil$ the restriction of $\Hil$ to the field $\mathbb{R}$ and by $\ComplexStr_\RealHil$ the induced complex structure on $\RealHil$. Let $\RealSub{}\in\RealHil$ be a (non-unique) closed subspace such that $\RealHil=\RealSub{}\oplus\ComplexStr_\RealHil\RealSub{}$. We assume that $\RealSub{}$ is \textit{invariant} under the action of $e^{-\beta h}$ and $\complexconj$, identified as bounded self-adjoint operators on $\RealHil$, i.e.
\begin{equation}
    e^{-\beta h}P_{\RealSub{}}= P_{\RealSub{}}e^{-\beta h}P_{\RealSub{}},\qquad \complexconj P_{\RealSub{}}=P_{\RealSub{}}\complexconj P_{\RealSub{}},
\end{equation}
where $P_{\RealSub{}}$ denotes the unique orthogonal projector on $\RealSub{}$. This implies that $[e^{-\beta h},P_{\RealSub{}}]=0$ and, by the spectral theorem, $[\sinh (\implicitop),P_{\RealSub{}}]=[\cosh (\implicitop),P_{\RealSub{}}]=0$ with
\begin{equation*}
    \cosh(\implicitop)=\frac{1}{\sqrt{1-e^{-\beta h}}},\qquad\quad \sinh(\implicitop)= \frac{1}{\sqrt{e^{\beta h}-1}}=\frac{e^{-\frac{\beta h}{2}}}{\sqrt{1-e^{-\beta h}}}\;.
\end{equation*}
In particular, the closed subspace $\RealSub{}$ is invariant under the action of the operators $\sinh(\implicitop),\cosh (\implicitop)$. 

In this setting, we consider two real subspaces $\RealSub{1},\RealSub{2}\subset\RealSub{}$ and their image in the doubled real Hilbert space $\RealHil\oplus\RealHil$ under the action of a bounded Bogoljubov transform. First of all, we prove the following proposition concerning their orthogonal in $\RealSub{}\oplus\RealSub{}$   
\begin{proposition}\label{prop: GenericSub}
    Let $\RealSub{i}\subset\RealSub{}$, for $i= 1,2$, be a real subspace and $\RealSub{i}^\perp\in\RealSub{}$ be its orthogonal in $\RealSub{}$. Let
    \begin{align*}
         \mathcal{U}(\RealSub{1})&\coloneqq\left\{u_1\oplus u_2\in\RealSub{}\oplus\RealSub{}:u_1=\complexconj\sinh (\implicitop)u, u_2 = \cosh (\implicitop) u,\; u\in\RealSub{1}\right\}.\\
         \mathcal{V}(\RealSub{2})&\coloneqq\left\{u_1\oplus u_2\in\RealSub{}\oplus\RealSub{}:u_1=-\complexconj\sinh (\implicitop)u, u_2 = \cosh (\implicitop) u,\; u\in\RealSub{2}\right\}.\\
         \tilde{\mathcal{V}}&\coloneqq\left\{v_1\oplus v_2\in\RealSub{}\oplus\RealSub{}:v_1=\cosh (\implicitop) v, v_2 = -\complexconj\sinh (\implicitop)v,\; v\in\RealSub{}\right\}.\\
         \tilde{\mathcal{U}}&\coloneqq\left\{v_1\oplus v_2\in\RealSub{}\oplus\RealSub{}:v_1=\cosh (\implicitop) v, v_2 = \complexconj\sinh (\implicitop)v,\; v\in\RealSub{}\right\}.
    \end{align*}
    Then 
    \begin{align*}
      \mathcal{U}(\RealSub{1})^\perp=&\;\mathcal{V}(\RealSub{1}^\perp)\oplus\tilde{\mathcal{V}},\\
      \mathcal{V}(\RealSub{2})^\perp=&\;\mathcal{U}(\RealSub{2}^\perp)\oplus \tilde{\mathcal{U}},
    \end{align*}
    where $\mathcal{U}(\RealSub{1})^\perp,\mathcal{V}(\RealSub{2})^\perp$ denote the orthogonal subspaces in $\RealSub{}\oplus\RealSub{}$.
\end{proposition}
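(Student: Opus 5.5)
The plan is to reduce the statement to a $2\times 2$ operator-matrix computation on $\RealSub{}\oplus\RealSub{}$, using only the commutation properties already established before the proposition.

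Write $s\coloneqq\complexconj\sinh(\implicitop)|_{\RealSub{}}$ and $c\coloneqq\cosh(\implicitop)|_{\RealSub{}}$. By the invariance assumptions these are well-defined bounded real-linear operators $\RealSub{}\to\RealSub{}$. Boundedness uses that $h$ is strictly positive: its spectrum is a closed subset of $(0,+\infty)$, hence bounded away from $0$, so $1-e^{-\beta h}$ has a bounded inverse.

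First I check the basic algebraic identities for $s$ and $c$:
\begin{itemize}
\item The relation $\complexconj e^{ith}=e^{-ith}\complexconj$ with $\complexconj$ antilinear gives $\complexconj h=h\complexconj$, so $\complexconj$ commutes with every real Borel function of $h$.
\item $\complexconj$ is a real-orthogonal involution on $\RealHil$, because $\real\ComplexProduct{\complexconj v}{\complexconj w}=\real\ComplexProduct{v}{w}$.
\item $\sinh(\implicitop)$ and $\cosh(\implicitop)$ are real-self-adjoint.
\end{itemize}
Together these give that $s$ and $c$ are self-adjoint on $\RealSub{}$, commute, and satisfy $c^2-s^2=\mathbbm{1}$.

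Next I define bounded maps $\RealSub{}\to\RealSub{}\oplus\RealSub{}$ by
\begin{equation*}
\mathsf{U}u=su\oplus cu,\qquad \mathsf{V}w=-sw\oplus cw,\qquad \tilde{\mathsf{V}}v=cv\oplus(-sv),\qquad \tilde{\mathsf{U}}v=cv\oplus sv .
\end{equation*}
A direct computation with the real product gives:
\begin{itemize}
\item $\RealProduct{\mathsf{U}u}{\mathsf{V}w}=\RealProduct{(c^2-s^2)u}{w}=\RealProduct{u}{w}$;
\item $\RealProduct{\mathsf{U}u}{\tilde{\mathsf{V}}v}=\RealProduct{(sc-cs)u}{v}=0$ for all $u,v$.
\end{itemize}
Hence $\mathcal{V}(\RealSub{1}^\perp)\perp\mathcal{U}(\RealSub{1})$ and $\tilde{\mathcal{V}}\perp\mathcal{U}(\RealSub{1})$. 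This gives the inclusion "$\supseteq$".

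For the reverse inclusion I show that the map
\begin{equation*}
L:\RealSub{}\oplus\RealSub{}\to\RealSub{}\oplus\RealSub{},\qquad L(w,v)=\mathsf{V}w+\tilde{\mathsf{V}}v,
\end{equation*}
is a bounded bijection. Its operator matrix is $\begin{pmatrix}-s&c\\ c&-s\end{pmatrix}$. Since the entries commute and $s^2-c^2=-\mathbbm{1}$, the candidate inverse is $\begin{pmatrix}s&c\\ c&s\end{pmatrix}$, which I verify by direct multiplication. Consequently every $x\in\RealSub{}\oplus\RealSub{}$ is uniquely of the form $\mathsf{V}w+\tilde{\mathsf{V}}v$; in particular the sum $\mathcal{V}(\RealSub{})+\tilde{\mathcal{V}}$ is direct.

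Now take $x=\mathsf{V}w+\tilde{\mathsf{V}}v$ orthogonal to $\mathcal{U}(\RealSub{1})$. By the two orthogonality computations, for every $u\in\RealSub{1}$,
\begin{equation*}
0=\RealProduct{\mathsf{U}u}{x}=\RealProduct{u}{w}.
\end{equation*}
So $w\in\RealSub{1}^\perp$ and $x\in\mathcal{V}(\RealSub{1}^\perp)\oplus\tilde{\mathcal{V}}$. The right-hand side is automatically closed, being the image under the bounded invertible $L$ of the closed subspace $\RealSub{1}^\perp\oplus\RealSub{}$. This matters because $\RealSub{1}$ itself need not be closed.

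The second identity, $\mathcal{V}(\RealSub{2})^\perp=\mathcal{U}(\RealSub{2}^\perp)\oplus\tilde{\mathcal{U}}$, follows the same way with $s\mapsto -s$. The pairings are $\RealProduct{\mathsf{V}u}{\mathsf{U}w}=\RealProduct{u}{w}$ and $\RealProduct{\mathsf{V}u}{\tilde{\mathsf{U}}v}=0$. The relevant map has matrix $\begin{pmatrix}s&c\\ c&s\end{pmatrix}$, whose inverse is $\begin{pmatrix}-s&c\\ c&-s\end{pmatrix}$.

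The main obstacle is the first step: justifying carefully that $s$ is real-self-adjoint and commutes with $c$. This requires handling the antiunitary $\complexconj$ as a real-linear orthogonal operator and transferring the relation $\complexconj e^{ith}=e^{-ith}\complexconj$ to functions of $h$, via the spectral theorem or Stone's theorem. Once $[s,c]=0$, $s=s^*$ and $c^2-s^2=\mathbbm{1}$ are in place, everything else is the $2\times2$ matrix computation above.
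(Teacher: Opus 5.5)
Your proposal is correct and is essentially the paper's proof. The paper's operator $A$ is your $L$ with its two input slots swapped ($L(w,v)=A(v\oplus w)$), and the paper's steps $A\,\mathcal{U}(\RealSub{1})=\{0\}\oplus\RealSub{1}$, $A=A^*$ and Lemma~\ref{Lem: bounded_inverse}, giving $\mathcal{U}(\RealSub{1})^\perp=A(\RealSub{}\oplus\RealSub{1}^\perp)$, are an abstract packaging of your pairings $\RealProduct{\mathsf{U}u}{\mathsf{V}w}=\RealProduct{u}{w}$ and $\RealProduct{\mathsf{U}u}{\tilde{\mathsf{V}}v}=0$ together with the bijectivity of $L$. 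Your direct computation has two small advantages: it makes the real self-adjointness of $\complexconj\sinh(\implicitop)$ explicit, and it does not need $\RealSub{1}$ to be closed. In fact closedness of the right-hand side already follows from the set equality, so your separate argument for it is not needed.
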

\begin{proof}
  Let us define the real-linear operator $A$ 
    \begin{equation*}
        A \coloneqq\begin{pmatrix}
            \cfrac{1}{\sqrt{1-e^{-\beta h}}} &-\complexconj\cfrac{e^{-\frac{\beta h}{2}}}{\sqrt{1-e^{-\beta h}}} \\
            -\complexconj\cfrac{e^{-\frac{\beta h}{2}}}{\sqrt{1-e^{-\beta h}}} & \cfrac{1}{\sqrt{1-e^{-\beta h}}} 
        \end{pmatrix},
    \end{equation*}
    on the Hilbert space $\Hil\oplus\Hil$. This operator is bounded, with
        \begin{equation*}
            \operatornorm{A}\leq 2\sup_{i,j\in[1,2]}\operatornorm{A_{ij}}=2\frac{1}{\sqrt{1-e^{-\beta m}}}, \quad m=\min\spec{h},
        \end{equation*}
        with $\spec{h}$ denoting the spectrum of the operator $h$. The operator $A$ is invertible with bounded inverse
        \begin{equation*}
            A^{-1} \coloneqq\begin{pmatrix}
            \cfrac{1}{\sqrt{1-e^{-\beta h}}} &\complexconj\cfrac{e^{-\frac{\beta h}{2}}}{\sqrt{1-e^{-\beta h}}} \\
            \complexconj\cfrac{e^{-\frac{\beta h}{2}}}{\sqrt{1-e^{-\beta h}}} & \cfrac{1}{\sqrt{1-e^{-\beta h}}},
        \end{pmatrix},\quad \operatornorm{A^{-1}}\leq\frac{1}{\sqrt{1-e^{-\beta m}}}.
        \end{equation*}
        From now on we identify $A,A^{-1}$ with the real-linear operators that they induce on $\RealHil\oplus\RealHil$. On $\RealHil\oplus\RealHil$ the operators $A,A^{-1}$ are self-adjoint. In addition,
        since $\RealSub{}$ is an invariant subspace of $e^{-\beta h}$ and $\complexconj$ (also identified with the real-linear operators induced on $\RealHil$), it follows by the spectral theorem that $\RealSub{}\oplus\RealSub{}$ is an invariant subspace of $A,A^{-1}$. We denote by $A_{\RealSub{}},A^{-1}_{\RealSub{}}$ the restriction of $A,A^{-1}$ to the Hilbert space $\RealSub{}\oplus \RealSub{}$. Recalling that $\complexconj e^{i\beta h}=e^{-i\beta h}\complexconj$
        \begin{equation*}
            A_{\RealSub{}}\mathcal{U}(\RealSub{1})=\left\{0\right\}\oplus_\mathrm{e}\RealSub{1}.
        \end{equation*}
        Its orthogonal in $\RealSub{}\oplus\RealSub{}$ is
        \begin{equation*}
             \left(A_{\RealSub{}}\mathcal{U}(\RealSub{1})\right)^\perp=\RealSub{}\oplus_\mathrm{e}\RealSub{1}^\perp.
        \end{equation*}
        We now apply Lemma~\ref{Lem: bounded_inverse} to the operator $A_{\RealSub{}}$ and conclude that
        \begin{equation*}
            \mathcal{U}(\RealSub{1})^\perp=A_{\RealSub{}}\left(A_{\RealSub{}}\mathcal{U}(\RealSub{1})\right)^\perp=\mathcal{V}(\RealSub{1}^\perp)+\tilde{\mathcal{V}}=\mathcal{V}(\RealSub{1}^\perp)\oplus\tilde{\mathcal{V}},
        \end{equation*}
       where the last equality follows since $A$ is injective. 
        The second equality is proven in the same way, substituting $A_{\RealSub{}}$ with $A^{-1}_{\RealSub{}}$.
\end{proof}

The proof of the duality property of a net of Weyl algebras is simplified if the closed real subspaces $\RealSub{1},\RealSub{2}$ that label the algebra are in \textit{generic position}~\cite{EckmannOsterwalder1973}. By this we mean that the intersection of any pair of the real closed subspaces (of $\RealSub{}$) $\RealSub{1},\RealSub{2},\RealSub{1}^\perp,\RealSub{2}^\perp$ are trivial, i.e.~equal to $\{0\}$. This condition is generally matched when considering local von Neumann algebras associated to a bounded open region $\Ocal$ of Minkowski spacetime in the ground representation of a free massive real scalar field. However, in the present more generic situation the relative position between the corresponding real subspaces is not necessarily preserved. More in details we have
\begin{proposition}\label{prop: nongeneric}
    Let $\RealSub{1},\RealSub{2}\subset\RealSub{}$ be closed real subspaces of $\RealSub{}$. Then $\mathcal{U}(\RealSub{1}),\mathcal{V}(\RealSub{2})$ are closed real subspaces of $\RealSub{}\oplus\RealSub{}$ and
    \begin{itemize} 
    \item $\mathcal{U}(\RealSub{1})\wedge\mathcal{V}(\RealSub{2})=\{0\} \oplus \{0\}$ (even if $\RealSub{1}\wedge\RealSub{2}\not\subset\{0\}$). 
    \item $\mathcal{U}(\RealSub{1})\wedge\mathcal{V}(\RealSub{2})^\perp=\{0\}\oplus\{0\}$ iff $\RealSub{1}\wedge \RealSub{2}^\perp=\{0\}$ and $\mathcal{U}(\RealSub{1})^\perp\wedge\mathcal{V}(\RealSub{2})=\{0\}\oplus\{0\}$ iff $\RealSub{1}^\perp\wedge \RealSub{2}=\{0\}$.
    \item $\RealSub{1},\RealSub{2}$ in generic position $\centernot\implies$ $\mathcal{U}(\RealSub{1})^\perp\wedge \mathcal{V}(\RealSub{2})^\perp =\{0\}\oplus\{0\}$.
    \end{itemize}
\end{proposition}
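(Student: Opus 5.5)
Write $s\coloneqq\sinh(\implicitop)$ and $c\coloneqq\cosh(\implicitop)$, viewed as bounded operators on $\RealSub{}$. They are self-adjoint and commute with $\complexconj$ on $\RealSub{}$, since $\complexconj$ commutes with real functions of $h$. Moreover $c\ge 1$ is invertible with $c^{-1}$ bounded, and $c^2-s^2=\mathbbm{1}$. The plan is to reduce every intersection to an equation in $\RealSub{}$ and then use these identities.

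\textbf{Closedness.} $\mathcal{U}(\RealSub{1})$ is the image of the closed subspace $\RealSub{1}$ under the bounded injective map $T_+\colon u\mapsto \complexconj s u\oplus c u$. This map is bounded below, since $\|T_+u\|\ge\|cu\|\ge\|u\|$. A bounded-below operator maps closed sets to closed sets, so $\mathcal{U}(\RealSub{1})$ is closed. The same argument with $T_-\colon u\mapsto -\complexconj s u\oplus cu$ handles $\mathcal{V}(\RealSub{2})$.

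\textbf{First item.} Suppose $\complexconj s u\oplus cu=-\complexconj s w\oplus cw$ with $u\in\RealSub{1}$ and $w\in\RealSub{2}$.
\begin{itemize}
\item The second components give $cu=cw$, so $u=w$ because $c$ is invertible.
\item The first components then give $\complexconj s u=-\complexconj s u$, hence $su=0$.
\item Since $s=e^{-\beta h/2}c$ and $e^{-\beta h/2}$ is injective, $s$ is injective, so $u=0$.
\end{itemize}
This holds regardless of $\RealSub{1}\wedge\RealSub{2}$.

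\textbf{Second item.} By Proposition~\ref{prop: GenericSub}, $\mathcal{V}(\RealSub{2})^\perp=\mathcal{U}(\RealSub{2}^\perp)\oplus\tilde{\mathcal{U}}$. An element of $\mathcal{U}(\RealSub{1})$ lying there satisfies
\[
\complexconj s u\oplus cu=(\complexconj s w+cv)\oplus(cw+\complexconj s v),\qquad u\in\RealSub{1},\ w\in\RealSub{2}^\perp,\ v\in\RealSub{}.
\]
The natural approach is to apply the operator $A_{\RealSub{}}$ from the proof of Proposition~\ref{prop: GenericSub}, which kills the first component of $\mathcal{U}$-type vectors. Concretely:
\begin{itemize}
\item $A_{\RealSub{}}\mathcal{U}(\RealSub{1})=\{0\}\oplus_\mathrm{e}\RealSub{1}$.
\item $A_{\RealSub{}}\mathcal{U}(\RealSub{2}^\perp)=\{0\}\oplus_\mathrm{e}\RealSub{2}^\perp$.
\item $A_{\RealSub{}}\tilde{\mathcal{U}}=\{(c^2-s^2)v\oplus 0\}=\RealSub{}\oplus_\mathrm{e}\{0\}$, using $\complexconj^2=\mathbbm{1}$ and commutation.
\end{itemize}
The $\tilde{\mathcal{U}}$ component must therefore vanish, so $v=0$. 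The equation reduces to $u=w\in\RealSub{1}\wedge\RealSub{2}^\perp$, and since $A$ is injective this gives the equivalence. The statement with the roles swapped is proved the same way, using $A^{-1}_{\RealSub{}}$ and the first identity of Proposition~\ref{prop: GenericSub}.

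\textbf{Third item.} This is the main obstacle. One needs a counterexample: closed $\RealSub{1},\RealSub{2}$ in generic position such that
\[
\mathcal{U}(\RealSub{1})^\perp\wedge\mathcal{V}(\RealSub{2})^\perp=\bigl(\mathcal{V}(\RealSub{1}^\perp)\oplus\tilde{\mathcal{V}}\bigr)\wedge\bigl(\mathcal{U}(\RealSub{2}^\perp)\oplus\tilde{\mathcal{U}}\bigr)\neq\{0\}.
\]
A quick dimension count suggests why this should happen: the two summands $\tilde{\mathcal{V}}$ and $\tilde{\mathcal{U}}$ are both graphs over all of $\RealSub{}$, so their sums with the other pieces are large. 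I would first try a finite-dimensional toy model with $\RealSub{}=\mathbb{R}^2$, $h$ a multiple of the identity, and $\complexconj=\mathbbm{1}$ on $\RealSub{}$. Take $\RealSub{1},\RealSub{2}$ to be distinct non-orthogonal lines, which are in generic position. Then $\RealSub{}\oplus\RealSub{}$ is $4$-dimensional and the two orthogonal complements each have dimension $4-1=3$, so they must intersect in dimension at least $2$. This gives the counterexample outright, and the argument extends to infinite-dimensional examples by embedding.
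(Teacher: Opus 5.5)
Your proof is correct. Closedness and the first two items follow the paper in substance. The paper gets closedness from $\mathcal{U}(\RealSub{1})=A^{-1}_{\RealSub{}}(\{0\}\oplus\RealSub{1})$ and $\mathcal{V}(\RealSub{2})=A_{\RealSub{}}(\{0\}\oplus\RealSub{2})$ rather than from your bounded-below estimate. For the second item it computes $A_{\RealSub{}}\bigl(\mathcal{U}(\RealSub{1})\wedge\mathcal{V}(\RealSub{2})^\perp\bigr)=(\{0\}\oplus\RealSub{1})\wedge(\RealSub{}\oplus\RealSub{2}^\perp)$, which is what your component-wise computation with $A_{\RealSub{}}\tilde{\mathcal{U}}=\RealSub{}\oplus_\mathrm{e}\{0\}$ amounts to.

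The genuine difference is the third item. The paper takes $h$ bounded, so that $\sinh(\implicitop)$ and $\cosh(\implicitop)$ have bounded inverses. It then shows that $\complexconj\sinh(\implicitop)w\oplus\cosh(\implicitop)w$ with $w\notin\RealSub{1}$ is not in $\overline{\mathcal{U}(\RealSub{1})+\mathcal{V}(\RealSub{2})}$, because an approximating sequence would force $u_i\to w$ inside the closed $\RealSub{1}$. Hence $(\mathcal{U}(\RealSub{1})^\perp\wedge\mathcal{V}(\RealSub{2})^\perp)^\perp$ is proper.

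Your dimension count in $\RealSub{}=\mathbb{R}^2$ is shorter and fully explicit. You should state the realisation $\Hil=\mathbb{C}^2$ with $\complexconj$ the standard conjugation, so the standing hypotheses are visibly met. One admissible counterexample is all a non-implication needs, so the embedding remark can be dropped. In fact, in any nonzero finite dimension $n$, generic position forces $\dim\RealSub{1}=\dim\RealSub{2}=n/2$. By the first item, $\mathcal{U}(\RealSub{1})+\mathcal{V}(\RealSub{2})$ is then $n$-dimensional inside the $2n$-dimensional $\RealSub{}\oplus\RealSub{}$, so the failure is automatic.

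What the paper's argument buys is validity for every bounded $h$, including infinite-dimensional $\RealSub{}$. It also identifies the mechanism, namely bounded invertibility of $\sinh(\implicitop)$. This is what gives content to the subsequent remark: in the Minkowski model $h$ is unbounded, and pre-cyclicity makes $\mathcal{U}(\RealSub{1})+\mathcal{V}(\RealSub{2})$ dense.
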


\begin{proof}
    The subspaces $\mathcal{U}(\RealSub{1}),\mathcal{V}(\RealSub{2})$ coincide with $A^{-1}_{\RealSub{}}(\{0\}\oplus\RealSub{1}),A_{\RealSub{}}(\{0\}\oplus\RealSub{2}))$ and are closed since both $A_{\RealSub{}}$ and $A_{\RealSub{}}^{-1}$ are bounded operators.\\
   We prove the first statement. The operators $\cosh (\implicitop),\sinh (\implicitop)$ are obviously injective, so that
    \begin{align*}
        \cosh( {\implicitop})u&=\cosh (\implicitop)v,\quad u,v\in\RealSub{}\implies u=v;\\
        \complexconj\sinh( {\implicitop})u&=-\complexconj\sinh (\implicitop)v,\quad u,v\in\RealSub{}\implies u=-v.
    \end{align*}
     By definition of $\mathcal{U}(\RealSub{1}),\mathcal{V}(\RealSub{2})$ it follows that they have trivial intersection. \\
     We prove the second statement. Applying the operator $A_{\RealSub{}}$ on $\mathcal{U}(\RealSub{1})\wedge\mathcal{V}(\RealSub{2})^\perp$ we obtain
     \begin{equation*}
         A_{\RealSub{}}\left(\mathcal{U}(\RealSub{1})\wedge\mathcal{V}(\RealSub{2})^\perp\right)=\left(\{0\}\oplus\RealSub{1}\right)\wedge (\RealSub{}\oplus \RealSub{2}^\perp)= \{0\} \oplus (\RealSub{1}\wedge\RealSub{2}^\perp),
     \end{equation*}
     where the first equality follows since $A_{\RealSub{}}$ is injective. Once again, since $A_{\RealSub{}}$ is injective this proves the first part of the statement. The second part is proven in the same way using the operator $A^{-1}_{\RealSub{}}$. \\
     In order to prove the final statement we show that $(\mathcal{U}(\RealSub{1})^\perp\wedge \mathcal{V}(\RealSub{2})^\perp)^\perp$ is not necessarily dense in $\RealSub{}\oplus\RealSub{}$
     \begin{equation*}
         \RealSub{} \oplus \RealSub{}\not\subseteq \overline{\mathcal{U}(\RealSub{1}) + \mathcal{V}(\RealSub{2})}.
     \end{equation*}
     It is enough to show it in a particular case, so let us assume $h$ to be further a bounded operator. By definition
     \begin{equation*}
         \mathcal{U}(\RealSub{1})+ \mathcal{V}(\RealSub{2}) = \left\{u_1\oplus u_2\in\RealSub{}\oplus\RealSub{}:u_1=\complexconj\sinh (\implicitop)(u-v), u_2 = \cosh (\implicitop) (u+v),\; u\in\RealSub{1}, v \in \RealSub{2}\right\}
     \end{equation*}
     Since the subspaces $\RealSub{1},\RealSub{2}$ are in generic position, there are vectors $w\in\RealSub{}$ such that $w\not\in\RealSub{1}$.  
     Consider a vector $\psi\in \RealSub{}\oplus\RealSub{}$ of the form
     \begin{equation*}
         \psi=w_1\oplus w_2\in\RealSub{}\oplus\RealSub{},\;w_1=\complexconj\sinh (\implicitop)w, w_2 = \cosh (\implicitop) w,\; w \not\in \RealSub{1}.
     \end{equation*}
     Suppose that $\psi\in \overline{\mathcal{U}(\RealSub{1})+\mathcal{V}(\RealSub{2})}$. It follows that there exist two sequences $\{u_i\}_{i\in\mathbb{N}},\{v_i\}_{i\in\mathbb{N}}$ such that $u_i\in\RealSub{1},v_i\in\RealSub{2}$ and
     \begin{equation*}
        \complexconj\sinh(\implicitop) ( u_i-v_i)\xrightarrow[i\to\infty]{}\complexconj\sin(\implicitop) w,\quad\cosh(\implicitop)( u_i+v_i)\xrightarrow[i\to\infty]{} \cosh(\implicitop)w.
     \end{equation*}
     Since $h$ is bounded, both $\cosh (\implicitop)$ and $\sinh (\implicitop)$ are bounded invertible operators. Therefore, both sequences of vectors $u_i+v_i$ and $u_i-v_i$ are convergent and so
     \begin{equation*}
         \lim_{i\to \infty} u_i=\frac{1}{2}\lim_{i\to \infty} (u_i+v_i+u_i-v_i)=\frac{1}{2}\lim_{i\to \infty} (u_i+v_i)+\frac{1}{2}\lim_{i\to \infty} (u_i-v_i)=w.
     \end{equation*}
     Since $w\not\in\RealSub{1}$, this contradiction proves that $\psi\not\in \overline{\mathcal{U}(\RealSub{1})+\mathcal{V}(\RealSub{2})}$ and therefore the claim.
\end{proof}
\begin{remark}
Note that in the explicit model considered later in Section~\ref{sec: ExMink}, it is well known that the Reeh-Schlieder property applies~\cite{Jakel:1999ji,StrohmaierVerch_2002}. Therefore, it follows that the so called pre-cyclicity property is also satisfied~\cite[Thm.~1.2]{FiglioliniGuido1991} (see also Theorem~\ref{thm: Reeh} in the Appendix), namely
\begin{equation*}
    \overline{\mathcal{U}(\RealSub{1})+\ComplexStr_{\RealHil\oplus\RealHil}\mathcal{V}(\RealSub{2})+i\left(\mathcal{U}(\RealSub{1})+\ComplexStr_{\RealHil\oplus\RealHil}\mathcal{V}(\RealSub{2})\right)}^{\Hil\oplus\Hil}=\Hil\oplus\Hil,
\end{equation*}
and consequently 
\begin{equation*}
    \overline{\mathcal{U}(\RealSub{1})+\mathcal{V}(\RealSub{2})}=\RealSub{}\oplus\RealSub{}.
\end{equation*}
Here $\mathcal{U}(\RealSub{1}),\mathcal{V}(\RealSub{2})$ are the real subspaces that label the von Neumann algebra in the thermal representation associated with a open bounded region of Minkowski.
For consistency with the proof of Proposition~\ref{prop: nongeneric} we observe that in the examples the operator $h$ is not bounded. 
\end{remark}
Following the discussion in Section $4$ and Section $5$ of~\cite{ArakiHD1}, even if the relevant subspaces are not guaranteed to be in generic positions, for the proof of duality we can always reduce to the case in which this condition is satisfied. In particular, one can prove that the represented von Neumann algebras admit tensor product decomposition, where the only factor with a non-trivial commutant consists of (see~\cite[Lemma $5.2$]{ArakiHD1} and the discussion after it)
\begin{equation*}
    \mathcal{R}_{\mathrm{F}}(\mathcal{U}(\RealSub{1})', \mathcal{V}(\RealSub{2})' \backslash \mathcal{U}\mathcal{V}),
\end{equation*}
i.e.~the Weyl algebra in Weyl form labelled by the subspaces $\mathcal{U}(\RealSub{1})', \mathcal{V}(\RealSub{2})' \subset \mathcal{U}\mathcal{V} \subset \RealSub{} \oplus \RealSub{}$ understood as subspaces relative to $\mathcal{U}\mathcal{V}$. 
The definition of $\mathcal{U}\mathcal{V}$ is given in \cite[Section 5]{ArakiHD1} by
\begin{equation*}
    \mathcal{U}\mathcal{V} \coloneqq \left(\mathcal{U}(\RealSub{1}) \wedge \mathcal{V}(\RealSub{2}) + \mathcal{U}(\RealSub{1})^{\perp} \wedge \mathcal{V}(\RealSub{2}) + \mathcal{U}(\RealSub{1}) \wedge \mathcal{V}(\RealSub{2})^{\perp} + \mathcal{U}(\RealSub{1})^{\perp} \wedge \mathcal{V}(\RealSub{2})^{\perp}\right)^{\perp}
\end{equation*}
and correspondingly the closed subspaces
\begin{align*}
    \mathcal{U}(\RealSub{1})' &\coloneqq \mathcal{U}(\RealSub{1}) \wedge \mathcal{U}\mathcal{V} \\
    \mathcal{V}(\RealSub{2})' &\coloneqq \mathcal{V}(\RealSub{2}) \wedge \mathcal{U}\mathcal{V},
\end{align*}
with orthogonal relative to $\mathcal{U}\mathcal{V}$ given by $(\mathcal{U}(\RealSub{1})')^{\perp} \wedge \mathcal{U}\mathcal{V}$ and $(\mathcal{V}(\RealSub{2})')^{\perp} \wedge \mathcal{U}\mathcal{V}$. Then, it is possible to check that relative to $\mathcal{U}\mathcal{V}$ the closed subspaces $\mathcal{U}(\RealSub{1})',\mathcal{V}(\RealSub{2})', (\mathcal{U}(\RealSub{1})')^{\perp} \wedge \mathcal{U}\mathcal{V}$ and $(\mathcal{V}(\RealSub{2})')^{\perp} \wedge \mathcal{U}\mathcal{V}$ all have trivial intersections (see~\cite[Theorem 2']{ArakiHD1}). 
\begin{remark}\label{rem: refer}
Consider the limiting case in which $\RealSub{1} = \RealSub{}$ and $\RealSub{2} = \{ 0 \}$. This corresponds to the abelian von Neumann algebra generated by all and only the represented time zero fields. According to the above definitions it holds
\begin{equation*}
    \mathcal{U}\mathcal{V}  = \{ 0 \oplus 0 \}
\end{equation*}
and the closed subspaces $\mathcal{U}(\RealSub{1})',\mathcal{V}(\RealSub{2})'$ as well as their orthogonal relative to $\mathcal{U}\mathcal{V}$ are trivial.
\end{remark}
Motivated by the applications of the next section, we now assume $\RealSub{1}$ and $\RealSub{2}$ to be in generic position. In this case, by using the results proved in Proposition \ref{prop: nongeneric}, we obtain
\begin{equation}\label{eq: defsubgen}
\begin{aligned}
    &\mathcal{U}\mathcal{V}  = \overline{\mathcal{U}(\RealSub{1}) + \mathcal{V}(\RealSub{2})}\\
    &\mathcal{U}(\RealSub{1})' = \mathcal{U}(\RealSub{1})\\
    &\mathcal{V}(\RealSub{2})' = \mathcal{V}(\RealSub{2}).
\end{aligned}
\end{equation}
As a consistency check, let us show that relative to $\mathcal{U}\mathcal{V}$ the closed subspaces $\mathcal{U}(\RealSub{1})',\mathcal{V}(\RealSub{2})'$ are indeed in generic position. Actually, we have the slightly stronger result
\begin{proposition}\label{prop: generic}
Suppose that $\RealSub{1}\wedge\RealSub{2}^\perp=\{0\}$ and $\RealSub{1}^\perp\wedge\RealSub{2}=\{0\}$. Then the subspaces $\mathcal{U}(\RealSub{1})', \mathcal{V}(\RealSub{2})', (\mathcal{U}(\RealSub{1})')^{\perp} \wedge \mathcal{U}\mathcal{V}, (\mathcal{V}(\RealSub{2})')^{\perp} \wedge \mathcal{U}\mathcal{V}$ as defined above in \eqref{eq: defsubgen}, have all trivial intersections.
\end{proposition}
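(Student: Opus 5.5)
The plan is to check the six pairwise intersections one at a time. Throughout, all orthogonal complements are taken in $\RealSub{}\oplus\RealSub{}$, and we use the identifications \eqref{eq: defsubgen}: $\mathcal{U}(\RealSub{1})'=\mathcal{U}(\RealSub{1})$, $\mathcal{V}(\RealSub{2})'=\mathcal{V}(\RealSub{2})$ and $\mathcal{U}\mathcal{V}=\overline{\mathcal{U}(\RealSub{1})+\mathcal{V}(\RealSub{2})}$. By Proposition~\ref{prop: nongeneric}, both $\mathcal{U}(\RealSub{1})$ and $\mathcal{V}(\RealSub{2})$ are closed, so all four subspaces in the statement are closed subspaces of $\mathcal{U}\mathcal{V}$.

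\emph{Pairs involving only $\mathcal{U}$ and $\mathcal{V}$ and their own complements.}
\begin{enumerate}
\item The pair $\mathcal{U}(\RealSub{1})\wedge\mathcal{V}(\RealSub{2})$ is trivial by the first item of Proposition~\ref{prop: nongeneric}, which holds with no hypothesis on $\RealSub{1},\RealSub{2}$.
\item The pairs $\mathcal{U}(\RealSub{1})\wedge\left(\mathcal{U}(\RealSub{1})^\perp\wedge\mathcal{U}\mathcal{V}\right)$ and $\mathcal{V}(\RealSub{2})\wedge\left(\mathcal{V}(\RealSub{2})^\perp\wedge\mathcal{U}\mathcal{V}\right)$ are trivial, because any vector in them is orthogonal to itself.
\end{enumerate}

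\emph{Mixed pairs.} For these we use monotonicity of $\wedge$ together with the second item of Proposition~\ref{prop: nongeneric}.
\begin{enumerate}
\item We have $\mathcal{U}(\RealSub{1})\wedge\left(\mathcal{V}(\RealSub{2})^\perp\wedge\mathcal{U}\mathcal{V}\right)\subset\mathcal{U}(\RealSub{1})\wedge\mathcal{V}(\RealSub{2})^\perp$. The right-hand side is $\{0\}\oplus\{0\}$ exactly because $\RealSub{1}\wedge\RealSub{2}^\perp=\{0\}$.
\item Symmetrically, $\mathcal{V}(\RealSub{2})\wedge\left(\mathcal{U}(\RealSub{1})^\perp\wedge\mathcal{U}\mathcal{V}\right)\subset\mathcal{U}(\RealSub{1})^\perp\wedge\mathcal{V}(\RealSub{2})$. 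The right-hand side is trivial because $\RealSub{1}^\perp\wedge\RealSub{2}=\{0\}$.
\end{enumerate}
These two hypotheses are the only place where the assumptions of the proposition enter. This is why the statement is slightly stronger than generic position: no assumption on $\RealSub{1}\wedge\RealSub{2}$ or $\RealSub{1}^\perp\wedge\RealSub{2}^\perp$ is needed.

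\emph{The last pair.} This is the only step that is not a direct quotation, namely $\left(\mathcal{U}(\RealSub{1})^\perp\wedge\mathcal{U}\mathcal{V}\right)\wedge\left(\mathcal{V}(\RealSub{2})^\perp\wedge\mathcal{U}\mathcal{V}\right)=\mathcal{U}(\RealSub{1})^\perp\wedge\mathcal{V}(\RealSub{2})^\perp\wedge\mathcal{U}\mathcal{V}$. Here one must not try to use the third item of Proposition~\ref{prop: nongeneric}, since $\mathcal{U}(\RealSub{1})^\perp\wedge\mathcal{V}(\RealSub{2})^\perp$ may indeed be non-trivial. Instead, we use the elementary lattice identity
\begin{equation*}
A^\perp\wedge B^\perp=(A+B)^\perp=\left(\overline{A+B}\right)^\perp .
\end{equation*}
Applied to $A=\mathcal{U}(\RealSub{1})$ and $B=\mathcal{V}(\RealSub{2})$, it gives $\mathcal{U}(\RealSub{1})^\perp\wedge\mathcal{V}(\RealSub{2})^\perp=(\mathcal{U}\mathcal{V})^\perp$. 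Intersecting with $\mathcal{U}\mathcal{V}$ then gives $\{0\}\oplus\{0\}$.

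This covers all six pairs and concludes the argument. The only conceptual point is to recognise that the possibly non-trivial piece $\mathcal{U}(\RealSub{1})^\perp\wedge\mathcal{V}(\RealSub{2})^\perp$ is exactly what is removed by passing from $\RealSub{}\oplus\RealSub{}$ to the subspace $\mathcal{U}\mathcal{V}$.
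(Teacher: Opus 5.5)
Your proof is correct and takes the same route as the paper. You use the first item of Proposition~\ref{prop: nongeneric} for $\mathcal{U}(\RealSub{1})\wedge\mathcal{V}(\RealSub{2})$ and the second item for the two mixed pairs; you argue by monotonicity of $\wedge$ where the paper says ``distributivity'', though only associativity of meets is really used. For the last pair you use $\mathcal{U}(\RealSub{1})^\perp\wedge\mathcal{V}(\RealSub{2})^\perp=(\mathcal{U}\mathcal{V})^\perp$, as the paper does, and your explicit check of the self-orthogonal pairs, which the paper leaves implicit, is a harmless addition.
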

\begin{proof}
Following the above definition we have
\begin{align*}
    (\mathcal{U}(\RealSub{1})')^{\perp} \wedge \mathcal{U}\mathcal{V} &\coloneqq \mathcal{U}(\RealSub{1})^{\perp} \wedge (\mathcal{U}(\RealSub{1}) \vee \mathcal{V}(\RealSub{2}))\\
    (\mathcal{V}(\RealSub{2})')^{\perp} \wedge \mathcal{U}\mathcal{V} &\coloneqq \mathcal{V}(\RealSub{2})^{\perp} \wedge (\mathcal{U}(\RealSub{1}) \vee \mathcal{V}(\RealSub{2})).
\end{align*}
By the first item of Proposition~\ref{prop: nongeneric} we know that
\begin{equation*}
    \mathcal{U}(\RealSub{1})' \wedge \mathcal{V}(\RealSub{2})' = \{ 0 \} \oplus \{ 0 \}.
\end{equation*}
Using the second time of Proposition~\ref{prop: nongeneric} and the distributivity of $\wedge$ we also get
\begin{equation*}
    \mathcal{U}(\RealSub{1})' \wedge ((\mathcal{V}(\RealSub{2})')^{\perp} \wedge \mathcal{U}\mathcal{V}) = (\mathcal{U}(\RealSub{1}) \wedge \mathcal{V}(\RealSub{2})^{\perp}) \wedge (\mathcal{U}(\RealSub{1}) \vee \mathcal{V}(\RealSub{2})) = \{0\} \oplus \{0\}
\end{equation*}
and analogously
\begin{equation*}
    \mathcal{V}(\RealSub{2})' \wedge ((\mathcal{U}(\RealSub{1})')^{\perp} \wedge \mathcal{U}\mathcal{V}) = \{0\} \oplus \{0\}.
\end{equation*}
Finally, we check that
\begin{align*}
    \left( (\mathcal{U}(\RealSub{1})')^{\perp} \wedge \mathcal{U}\mathcal{V} \right) \wedge \left( (\mathcal{V}(\RealSub{2})')^{\perp} \wedge \mathcal{U}\mathcal{V} \right) &= \left(\mathcal{U}(\RealSub{1})^{\perp} \wedge \mathcal{V}(\RealSub{2})^{\perp}\right) \wedge \left(\mathcal{U}(\RealSub{1}) \vee \mathcal{V}(\RealSub{2})\right)\\
    &= \left(\mathcal{U}(\RealSub{1}) \vee \mathcal{V}(\RealSub{2})\right)^{\perp} \wedge \left(\mathcal{U}(\RealSub{1}) \vee \mathcal{V}(\RealSub{2})\right)\\
    &= \{0\} \oplus \{0\}.
\end{align*}
\end{proof}
Duality is now proven by first applying~\cite[Theorem $2$]{EckmannOsterwalder1973} (reported here as Theorem~\ref{app: EO})
\begin{equation}\label{Eq: gen_pos_comm}
    \mathcal{R}_{\mathrm{F}}\left(\mathcal{U}(\RealSub{1})', \mathcal{V}(\RealSub{2})' \backslash \mathcal{U}\mathcal{V}\right)' = \mathcal{R}_{\mathrm{F}}\left((\mathcal{V}(\RealSub{2})')^{\perp} \wedge \mathcal{U}\mathcal{V}, (\mathcal{U}(\RealSub{1})')^{\perp} \wedge \mathcal{U}\mathcal{V} \backslash \mathcal{U}\mathcal{V}\right),
\end{equation}
valid for subspaces in generic position and then using the results of~\cite[Section $5$]{ArakiHD1}.
\begin{theorem}\label{Thm: GS}
 The following equality holds
 \begin{equation}\label{eq: dualitaF}
    \mathcal{R}_{\mathrm{F}}\left(\mathcal{U}(\RealSub{1}), \mathcal{V}(\RealSub{2})\right)' =\mathcal{R}_\mathrm{F}(\mathcal{U}(\RealSub{2}^\perp),\mathcal{V}(\RealSub{1}^\perp))\vee\mathcal{R}_\mathrm{F}(\tilde{\mathcal{U}},\tilde{\mathcal{V}}).
\end{equation}
\end{theorem}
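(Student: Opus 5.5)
The plan is to follow Araki's reduction to generic position and then apply the Eckmann--Osterwalder theorem on the generic part. The commutant will be reassembled from that piece together with the tensor factors coming from the ``degenerate'' subspaces.

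\emph{Step 1: the decomposition.} Apply Araki's decomposition \cite[Sec.~5]{ArakiHD1} to the pair $\mathcal{U}(\RealSub{1}),\mathcal{V}(\RealSub{2})\subset\RealSub{}\oplus\RealSub{}$ (with $\RealSub{}\oplus\RealSub{}$ playing the role of the reference real subspace). This yields the orthogonal splitting
\begin{equation*}
\RealSub{}\oplus\RealSub{}=\mathcal{U}\wedge\mathcal{V}\oplus\mathcal{U}^\perp\wedge\mathcal{V}\oplus\mathcal{U}\wedge\mathcal{V}^\perp\oplus\mathcal{U}^\perp\wedge\mathcal{V}^\perp\oplus\mathcal{U}\mathcal{V},
\end{equation*}
and a corresponding tensor product factorisation of the Fock space and of $\mathcal{R}_{\mathrm{F}}(\mathcal{U}(\RealSub{1}),\mathcal{V}(\RealSub{2}))$. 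Here we abbreviate $\mathcal{U}=\mathcal{U}(\RealSub{1})$ and $\mathcal{V}=\mathcal{V}(\RealSub{2})$.

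By Proposition~\ref{prop: nongeneric}, $\mathcal{U}\wedge\mathcal{V}=\{0\}$. In the application $\RealSub{1}$ and $\RealSub{2}$ are in generic position, so $\mathcal{U}\wedge\mathcal{V}^\perp$ and $\mathcal{U}^\perp\wedge\mathcal{V}$ vanish as well. The only degenerate summand left is $\mathcal{N}\coloneqq\mathcal{U}^\perp\wedge\mathcal{V}^\perp=(\mathcal{U}+\mathcal{V})^\perp$, and $\mathcal{U}\mathcal{V}=\overline{\mathcal{U}+\mathcal{V}}$ as in \eqref{eq: defsubgen}. On $\mathcal{N}$ neither $U$ nor $V$ generators occur, so the factor there is trivial and its commutant is the full $\mathcal{B}(\Fock{\cdot})$. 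Equivalently, in the form of \cite[Lem.~5.4]{ArakiHD1}, that commutant is $\mathcal{R}_{\mathrm{F}}(\mathcal{N},\mathcal{N})$.

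\emph{Step 2: the generic piece.} Proposition~\ref{prop: generic} ensures that the hypotheses of Theorem~\ref{app: EO} hold relative to $\mathcal{U}\mathcal{V}$. By \eqref{Eq: gen_pos_comm}, the commutant on that factor is
\begin{equation*}
\mathcal{R}_{\mathrm{F}}\bigl(\mathcal{V}^\perp\wedge\mathcal{U}\mathcal{V},\,\mathcal{U}^\perp\wedge\mathcal{U}\mathcal{V}\backslash\mathcal{U}\mathcal{V}\bigr).
\end{equation*}
Gluing the factors via the tensor product commutation theorem then gives
\begin{equation*}
\mathcal{R}_{\mathrm{F}}(\mathcal{U},\mathcal{V})'=\mathcal{R}_{\mathrm{F}}\bigl((\mathcal{V}^\perp\wedge\mathcal{U}\mathcal{V})\oplus\mathcal{N},\,(\mathcal{U}^\perp\wedge\mathcal{U}\mathcal{V})\oplus\mathcal{N}\bigr).
\end{equation*}
Since $\mathcal{N}\perp\mathcal{U}\mathcal{V}$, this is exactly $\mathcal{R}_{\mathrm{F}}(\mathcal{V}^\perp,\mathcal{U}^\perp)$. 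This is Araki's general statement $\mathcal{R}_{\mathrm{F}}(\mathcal{K}_1,\mathcal{K}_2)'=\mathcal{R}_{\mathrm{F}}(\mathcal{K}_2^\perp,\mathcal{K}_1^\perp)$. I would cite it directly where possible, since it holds regardless of generic position once the decomposition is in place. In the non-generic case one repeats the same gluing with the extra summands.

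\emph{Step 3: identification via Proposition~\ref{prop: GenericSub}.} Proposition~\ref{prop: GenericSub} gives
\begin{equation*}
\mathcal{V}(\RealSub{2})^\perp=\mathcal{U}(\RealSub{2}^\perp)\oplus\tilde{\mathcal{U}},\qquad \mathcal{U}(\RealSub{1})^\perp=\mathcal{V}(\RealSub{1}^\perp)\oplus\tilde{\mathcal{V}}.
\end{equation*}
Substituting these, the algebra $\mathcal{R}_{\mathrm{F}}(\mathcal{V}^\perp,\mathcal{U}^\perp)$ is generated by $U(a+b)V(c+d)$ with $a\in\mathcal{U}(\RealSub{2}^\perp)$, $b\in\tilde{\mathcal{U}}$, $c\in\mathcal{V}(\RealSub{1}^\perp)$ and $d\in\tilde{\mathcal{V}}$. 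Using $U(a+b)=U(a)U(b)$, $V(c+d)=V(c)V(d)$ and the Weyl relations, each generator is a phase times $U(a)V(c)\,U(b)V(d)$. This gives the inclusion $\subseteq$ into the right-hand side of \eqref{eq: dualitaF}.

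For $\supseteq$, both $\mathcal{R}_{\mathrm{F}}(\mathcal{U}(\RealSub{2}^\perp),\mathcal{V}(\RealSub{1}^\perp))$ and $\mathcal{R}_{\mathrm{F}}(\tilde{\mathcal{U}},\tilde{\mathcal{V}})$ are generated by special cases of these generators, namely $b=d=0$ or $a=c=0$. Hence both are contained in $\mathcal{R}_{\mathrm{F}}(\mathcal{V}^\perp,\mathcal{U}^\perp)$. The closure issue in the sums is handled by \eqref{Eq: tacke_closure}, since the subspaces are closed and their sums are closed by Proposition~\ref{prop: GenericSub}.

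\emph{Main obstacle.} The delicate point is Step 2. One must check that Araki's decomposition and the tensor factorisation apply with $\RealSub{}\oplus\RealSub{}$ as reference subspace, which needs the invariance assumptions so that everything stays in the ``$\varphi/\pi$'' picture. One must also correctly account for the nonzero degenerate summand $\mathcal{N}$, which is present because by Proposition~\ref{prop: nongeneric} the sum $\mathcal{U}+\mathcal{V}$ need not be dense. My first move is to verify directly that $\mathcal{R}_{\mathrm{F}}(\mathcal{U},\mathcal{V})$ acts trivially on the $\mathcal{N}$-factor of the Fock space, and that on this factor $\mathcal{R}_{\mathrm{F}}(\mathcal{N},\mathcal{N})$ is irreducible, so its commutant is all of $\mathcal{B}$.
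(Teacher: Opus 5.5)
Your proposal is correct and follows essentially the same route as the paper. Araki's reduction combined with Eckmann--Osterwalder (Theorem~\ref{app: EO}, via Proposition~\ref{prop: generic}) gives $\mathcal{R}_{\mathrm{F}}(\mathcal{U}(\RealSub{1}),\mathcal{V}(\RealSub{2}))'=\mathcal{R}_{\mathrm{F}}(\mathcal{V}(\RealSub{2})^\perp,\mathcal{U}(\RealSub{1})^\perp)$, and Proposition~\ref{prop: GenericSub} together with the splitting of Weyl generators then yields the stated decomposition; your explicit gluing of the degenerate summand $\mathcal{U}(\RealSub{1})^\perp\wedge\mathcal{V}(\RealSub{2})^\perp$, and your remark that this duality holds without generic position, just spell out what the paper cites from Araki's Section~5.
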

\begin{proof}
By~\cite[Section $5$]{ArakiHD1}, Equation~\eqref{Eq: gen_pos_comm} implies that
\begin{equation*}
    \mathcal{R}_{\mathrm{F}}\left(\mathcal{U}(\RealSub{1}), \mathcal{V}(\RealSub{2})\right)' = \mathcal{R}_{\mathrm{F}}\left(\mathcal{V}(\RealSub{2})^{\perp}, \mathcal{U}(\RealSub{1})^{\perp}\right).
\end{equation*}
which is equivalent to~\cite[Equation $(3.19)$]{ArakiHD1}.
Using Proposition~\ref{prop: GenericSub}, we obtain
\begin{equation*}
    \mathcal{R}_{\mathrm{F}}\left(\mathcal{U}(\RealSub{1}), \mathcal{V}(\RealSub{2})\right)' = \mathcal{R}_{\mathrm{F}}(\mathcal{U}(\RealSub{2}^\perp)\oplus\tilde{\mathcal{U}},\mathcal{V}(\RealSub{1}^\perp)\oplus\tilde{\mathcal{V}}).
\end{equation*}
Finally, the real linearity of the fields $\varphi,\pi$ (see also~\cite[Theorem 1']{ArakiHD1}) implies the statement of the Theorem.
\end{proof}

Finally, we identify the algebra $\mathcal{R}(\tilde{\mathcal{U}},\tilde{\mathcal{V}})$ with the commutant $\mathcal{R}(\mathcal{U}(\RealSub{}),\mathcal{V}(\RealSub{}))'$
\begin{proposition}\label{prop: modcomm}
In the hypothesis of the previous theorem the following equality holds
 \begin{equation}\label{eq: dualitaF1}
    \mathcal{R}_{\mathrm{F}}\left(\mathcal{U}(\RealSub{1}), \mathcal{V}(\RealSub{2})\right)' =\mathcal{R}_\mathrm{F}(\mathcal{U}(\RealSub{2}^\perp),\mathcal{V}(\RealSub{1}^\perp))\vee \mathcal{R}_\mathrm{F}(\mathcal{U}(\RealSub{}),\mathcal{V}(\RealSub{}))'.
\end{equation}
\end{proposition}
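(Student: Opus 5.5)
Given Theorem~\ref{Thm: GS}, it suffices to prove
\begin{equation*}
\mathcal{R}_\mathrm{F}(\tilde{\mathcal{U}},\tilde{\mathcal{V}})=\mathcal{R}_\mathrm{F}(\mathcal{U}(\RealSub{}),\mathcal{V}(\RealSub{}))'.
\end{equation*}
The plan is to apply the duality machinery again, now in the degenerate case $\RealSub{1}=\RealSub{2}=\RealSub{}$, and then compute the orthogonal complements with Proposition~\ref{prop: GenericSub}.

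First, with $\RealSub{1}=\RealSub{}$ we have $\RealSub{1}^\perp=\{0\}$, so Proposition~\ref{prop: GenericSub} gives
\begin{equation*}
\mathcal{U}(\RealSub{})^\perp=\mathcal{V}(\{0\})\oplus\tilde{\mathcal{V}}=\tilde{\mathcal{V}},\qquad \mathcal{V}(\RealSub{})^\perp=\tilde{\mathcal{U}}.
\end{equation*}
The target identity therefore reads
\begin{equation*}
\mathcal{R}_\mathrm{F}(\mathcal{U}(\RealSub{}),\mathcal{V}(\RealSub{}))'=\mathcal{R}_\mathrm{F}(\mathcal{V}(\RealSub{})^\perp,\mathcal{U}(\RealSub{})^\perp).
\end{equation*}
This is exactly the Araki form of duality, \cite[Eq.~(3.19)]{ArakiHD1}, for the pair of closed subspaces $\mathcal{U}(\RealSub{}),\mathcal{V}(\RealSub{})\subset\RealSub{}\oplus\RealSub{}$. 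Araki's result (\cite[Sections~4--5]{ArakiHD1}) holds for arbitrary closed subspaces, not only those in generic position: one splits $\RealSub{}\oplus\RealSub{}$ into the four pairwise intersections plus the generic part $\mathcal{U}\mathcal{V}$, and applies Theorem~\ref{app: EO} on the generic part. So the main step is to check that the hypotheses of that general statement are met, and, for transparency, to identify the decomposition explicitly.

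For the explicit decomposition, Proposition~\ref{prop: nongeneric} gives $\mathcal{U}(\RealSub{})\wedge\mathcal{V}(\RealSub{})=\{0\}$. Since $\RealSub{}\wedge\{0\}=\{0\}$, it also gives $\mathcal{U}(\RealSub{})\wedge\mathcal{V}(\RealSub{})^\perp=\{0\}$ and $\mathcal{U}(\RealSub{})^\perp\wedge\mathcal{V}(\RealSub{})=\{0\}$. The remaining intersection is
\begin{equation*}
\tilde{\mathcal{V}}\wedge\tilde{\mathcal{U}}=\left(\overline{\mathcal{U}(\RealSub{})+\mathcal{V}(\RealSub{})}\right)^\perp.
\end{equation*}
By the injectivity argument used in the proof of the first item of Proposition~\ref{prop: nongeneric}, $\cosh v=\cosh v'$ and $\complexconj\sinh v=-\complexconj\sinh v'$ force $v=v'$ and $v=-v'$, hence $v=0$. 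So this intersection is also trivial, $\mathcal{U}\mathcal{V}=\RealSub{}\oplus\RealSub{}$, and the four subspaces $\mathcal{U}(\RealSub{}),\mathcal{V}(\RealSub{}),\tilde{\mathcal{V}},\tilde{\mathcal{U}}$ are in generic position. In this case Theorem~\ref{app: EO} applies directly, with no reduction needed, and yields the identity above. This is consistent with the fact, recalled before the theorem, that the commutant of $\mathcal{M}_\beta(\mathbb{M})$ is nontrivial, in contrast with Remark~\ref{rem: refer}.

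Substituting the identification $\mathcal{R}_\mathrm{F}(\tilde{\mathcal{U}},\tilde{\mathcal{V}})=\mathcal{R}_\mathrm{F}(\mathcal{U}(\RealSub{}),\mathcal{V}(\RealSub{}))'$ into \eqref{eq: dualitaF} then gives \eqref{eq: dualitaF1}.

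The main obstacle I expect is verifying the remaining hypotheses of Theorem~\ref{app: EO}: besides generic position, Eckmann--Osterwalder require that the relevant Weyl-form algebra be built from subspaces of a single real space $\RealSub{}\oplus\RealSub{}$ complementary to its image under the complex structure. This should follow from the decomposition $\RealHil\oplus\RealHil=(\RealSub{}\oplus_\mathrm{e}\RealSub{})\oplus\ComplexStr_{\RealHil\oplus\RealHil}(\RealSub{}\oplus_\mathrm{e}\RealSub{})$ established in Section~\ref{sec: RealfromComp}. If some condition there proves problematic, I would fall back on a modular argument instead. Since $\mathcal{R}_\mathrm{F}(\mathcal{U}(\RealSub{}),\mathcal{V}(\RealSub{}))$ is standard in the Fock vacuum by the pre-cyclicity remark, its commutant equals $J\mathcal{R}_\mathrm{F}J$. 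The Tomita conjugation is the second quantisation of the one-particle conjugation, which swaps the two summands and applies $\complexconj$; applied to $\mathcal{U}(\RealSub{})$ and $\mathcal{V}(\RealSub{})$ it should reproduce $\tilde{\mathcal{U}}$ and $\tilde{\mathcal{V}}$.
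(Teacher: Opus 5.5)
Your argument is correct, but your main route differs from the paper's; your fallback is essentially what the paper does. The paper obtains $\mathcal{R}_\mathrm{F}(\mathcal{U}(\RealSub{}),\mathcal{V}(\RealSub{}))'=\mathcal{R}_\mathrm{F}(\tilde{\mathcal{U}},\tilde{\mathcal{V}})$ by one-particle modular theory, in three steps. First, it imports from Kay that $\mathcal{U}(\RealSub{})\oplus\ComplexStr_{\RealHil\oplus\RealHil}\mathcal{V}(\RealSub{})$ is a standard subspace of $\Hil\oplus\Hil$; in the abstract setting this, not the local pre-cyclicity remark, is the input your fallback would need. Second, it computes the polar decomposition of the Tomita operator explicitly, with $j$ swapping the two copies and applying $\complexconj$. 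Third, it uses that $J$ is the second quantisation of $j$, so the commutant is $\mathcal{R}_\mathrm{S}$ of the $j$-image, namely $\tilde{\mathcal{U}}\oplus\ComplexStr_{\RealHil\oplus\RealHil}\tilde{\mathcal{V}}$.

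You instead stay inside the Eckmann--Osterwalder machinery already used for Theorem~\ref{Thm: GS}. With $\RealSub{1}=\RealSub{2}=\RealSub{}$, Proposition~\ref{prop: GenericSub} gives $\mathcal{U}(\RealSub{})^\perp=\tilde{\mathcal{V}}$ and $\mathcal{V}(\RealSub{})^\perp=\tilde{\mathcal{U}}$. Proposition~\ref{prop: nongeneric} kills three of the four intersections, and your injectivity argument kills $\tilde{\mathcal{V}}\wedge\tilde{\mathcal{U}}$; consistently, $\mathcal{U}(\RealSub{})+\mathcal{V}(\RealSub{})=\complexconj\sinh(\implicitop)\RealSub{}\oplus\cosh(\implicitop)\RealSub{}$ is dense. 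So the pair is in generic position in $\RealSub{}\oplus\RealSub{}$. The hypotheses of Theorem~\ref{app: EO} then hold thanks to the decomposition of $\RealHil\oplus\RealHil$ in Section~\ref{sec: RealfromComp}, and the claim follows with no reduction to $\mathcal{U}\mathcal{V}$.

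Your route is more economical and self-contained. It shows the proposition is just the degenerate case $\RealSub{1}=\RealSub{2}=\RealSub{}$ of the same one-particle duality, and it needs neither Kay's standardness result nor the explicit polar decomposition. The paper's route gives more in exchange: the explicit modular data $j,\delta$, and hence a concrete picture of the extra summand as the ``swap-and-conjugate'' image of $\mathcal{M}_\beta(\mathbb{M})$ in the purification. On your route, rewriting the commutant as $J\mathcal{M}_\beta(\mathbb{M})J$ in Theorem~\ref{thm: main} needs only the abstract Tomita--Takesaki relation. That relation holds because $\Omega_\beta$ is cyclic and separating, as recalled before that theorem.
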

\begin{proof}
By Equation~\eqref{eq: chiusuravN} we consider the von Neumann algebra $\mathcal{R}_\mathrm{S}(\mathcal{U}(\RealSub{})\oplus\ComplexStr_{\RealHil\oplus\RealHil}\mathcal{V}(\RealSub{}))=\mathcal{R}_\mathrm{F}(\mathcal{U}(\RealSub{}),\mathcal{V}(\RealSub{}))$. The real linear closed subspace $\mathcal{U}(\RealSub{})\oplus\ComplexStr_{\RealHil\oplus\RealHil}\mathcal{V}(\RealSub{})\subset\Hil\oplus\Hil$ is a standard subspace, namely
\begin{align*}
       \overline{\mathcal{U}(\RealSub{})\oplus\ComplexStr_{\RealHil\oplus\RealHil}\mathcal{V}(\RealSub{})+i(\mathcal{U}(\RealSub{})\oplus\ComplexStr_{\RealHil\oplus\RealHil}\mathcal{V}(\RealSub{}))}^{\Hil\oplus\Hil}&=\Hil\oplus\Hil,\\ 
       \mathcal{U}(\RealSub{})\oplus\ComplexStr_{\RealHil\oplus\RealHil}\mathcal{V}(\RealSub{})\cap i(\mathcal{U}(\RealSub{})\oplus\ComplexStr_{\RealHil\oplus\RealHil}\mathcal{V}(\RealSub{}))&=\{0\}\oplus\{0\},
\end{align*}
as proven in~\cite{Kay:1985yx,Kay:purification} (see in particular~\cite[App.~2]{Kay:1985yx} and~\cite[Thm.~2]{Kay:purification}, recalling that $\RealSub{}\oplus\ComplexStr_\RealHil \RealSub{}=\RealHil$). We define the closed, antilinear operator $\mathfrak{s}$ over $\Hil\oplus\Hil$ by the following action on its core
   \begin{align*}
       \mathfrak{s}:\; \mathcal{U}(\RealSub{})\oplus\ComplexStr_{\RealHil\oplus\RealHil}\mathcal{V}(\RealSub{})+i(\mathcal{U}(\RealSub{})\oplus\ComplexStr_{\RealHil\oplus\RealHil}\mathcal{V}(\RealSub{}))&\to \mathcal{U}(\RealSub{})\oplus\ComplexStr_{\RealHil\oplus\RealHil}\mathcal{V}(\RealSub{})+i(\mathcal{U}(\RealSub{})\oplus\ComplexStr_{\RealHil\oplus\RealHil}\mathcal{V}(\RealSub{})),\\
       h+ik&\mapsto h-ik.
   \end{align*}
   By the uniqueness of its polar decomposition $\mathfrak{s} = j\delta^{1/2}$, it follows that\footnote{Note that different convention in the definition of $\mathfrak{s}$ and consequently in the overall sign in the operator $j$, if compared with~\cite{Kay:purification}, is consistent with the different convention we adopted in the definition of the field operators compared with~\cite[Sec.~2.3.]{Kay:1985yx}.}
   \begin{equation*}
       \delta^{1/2}=\begin{pmatrix}
           e^{\beta\frac{h}{2}}&0\\
           0&e^{-\beta\frac{h}{2}}
       \end{pmatrix},\quad
       j=\begin{pmatrix}
           0&\complexconj\\
           \complexconj&0
       \end{pmatrix}.     
   \end{equation*}
Since the modular data $J,\Delta$ associated with the Weyl algebra $\mathcal{R}_\mathrm{S}(\mathcal{U}(\RealSub{})\oplus\ComplexStr_{\RealHil\oplus\RealHil}\mathcal{V}(\RealSub{}))$ are the second quantisation of $j,\delta$ (see~\cite{EckmannOsterwalder1973}), Tomita-Takesaki theorem implies that
\begin{equation}
    \mathcal{R}_\mathrm{S}(\mathcal{U}(\RealSub{})\oplus\ComplexStr_{\RealHil\oplus\RealHil}\mathcal{V}(\RealSub{}))'=J\mathcal{R}_\mathrm{S}(\mathcal{U}(\RealSub{})\oplus\ComplexStr_{\RealHil\oplus\RealHil}\mathcal{V}(\RealSub{}))J=\mathcal{R}_\mathrm{S}(j(\mathcal{U}(\RealSub{})\oplus\ComplexStr_{\RealHil\oplus\RealHil}\mathcal{V}(\RealSub{}))).
\end{equation}
To conclude, we now only have to note that
\begin{equation*}
    j(\mathcal{U}(\RealSub{})\oplus\ComplexStr_{\RealHil\oplus\RealHil}\mathcal{V}(\RealSub{}))=\tilde{\mathcal{U}}\oplus\beta_{\RealHil\oplus\RealHil}\tilde{\mathcal{V}},
\end{equation*}
since $\complexconj (\RealSub{}\oplus\ComplexStr_\RealHil\RealSub{})=\RealSub{}\oplus\ComplexStr_\RealHil\RealSub{}$.
\end{proof}
\begin{remark}
For the specific case discussed in Remark~\ref{rem: refer}, Theorem~\ref{Thm: GS} simply proves the well known fact that the commutant of the abelian von Neumann algebra generated by time zero fields is generated by the algebra itself and the commutant of the quasi-local Weyl algebra.
\end{remark}

\section{Scalar field on Minkowski spacetime in the KMS representation}\label{sec: ExMink}
In this section we prove that Haag duality holds for the von Neumann algebras of a real massive scalar field on Minkowski spacetime in a KMS representation. As open bounded regions of spacetime we always consider causal diamonds with base in $\Cauchy_t$. 

\subsection{Fock-KMS representation of local net}
Working on Minkowski spacetime we make use of Fourier integrals to get explicit expressions for the subspaces introduced in the previous sections.\\

As already mentioned, the GNS representation induced by a quasi-free state coincides with the Fock representation constructed on the corresponding one-particle Hilbert space. On Minkowski spacetime, the one-particle ground space is isomorphic to the space of square-integrable functions on the positive mass-hyperboloid $\massh=\left\{p\in\mathbb{R}^4\vert p^\nu p_\nu=-m^2, p^0> 0\right\}$ with respect to the Lorentz invariant measure $\mu_\mathrm{L}$
\begin{equation*}
    \Hil^\infty = L^2(\massh,\di\mu_\mathrm{L}),\quad \di\mu_\mathrm{L}(\textbf{p})=\frac{\di^3\textbf{p}}{(2\pi)^32\en{p}},
\end{equation*}
where $\en{p}=\sqrt{\|\textbf{p}\|^2+m^2}$ and we fixed the mostly plus convention on the signature of the metric. Then, the map $K^{\infty}: C^{\infty}_c(\mathbb{M},\mathbb{R})/\ker (\operatorname{E}) \to \Hil^{\infty}$, introduced in Definition~\ref{def: groundrep} and whose image is a dense subspace in $\Hil^{\infty}$, acts explicitly as follow
\begin{equation*}\label{Eq: fourier_transform_on_mh}
    K^\infty(f)=\left.\hat{f}(p^0,\textbf{p})\right\vert_{\massh}.
\end{equation*}
Following the discussion of Section~\ref{sec: RealfromComp}, we turn the complex Hilbert space $\Hil^{\infty} = L^2(\massh,\di \mu_\mathrm{L})$ into a real Hilbert space $\RealHil^\infty=L^2(\massh,\di\mu_\mathrm{L};\mathbb{R})\oplus L^2(\massh,\di\mu_\mathrm{L};\mathbb{R})$ by identifying every complex valued function $f\in L^2(\massh,\di\mu_\mathrm{L})$ with its real and imaginary parts with respect to the canonical conjugation operator. On $\RealHil^\infty$, the complex structure $\ComplexStr_{\RealHil^\infty}$ that implements the multiplication by $i$ and satisfies the relation~\eqref{Eq: cs_relates_scalarproducts} coincides with the operator
\begin{equation*}
    \ComplexStr_{\RealHil^\infty}=\begin{pmatrix}
        0&-1\\1 &0
    \end{pmatrix}.
\end{equation*}
 $\RealHil^\infty$ can be decomposed as $\RealHil^\infty=\RealSub{}^\infty\oplus \ComplexStr_{\RealHil^\infty}\RealSub{}^\infty$ where $\RealSub{}^\infty=L_\mathrm{s}^2(\massh,\di\mu_\mathrm{L};\mathbb{R})\oplus L_\mathrm{a}^2(\massh,\di\mu_\mathrm{L};\mathbb{R})$ and $L_{\mathrm{s}/\mathrm{a}}^2(\massh,\di\mu_\mathrm{L};\mathbb{R})$ denotes the closed subspace of symmetric/antisymmetric square integrable functions; the two subspaces are orthogonal with respect the real scalar product due to the symmetry of the coefficient $\en{p}^{-1}$ appearing in the differential of the measure. This decomposition corresponds to the decomposition of the complex Hilbert space $L^2(\massh,\di\mu_\mathrm{L})$ into the direct sum of the subspaces of complex functions that satisfy the relation\footnote{Note that this unique decomposition is explicitly given by $\hat{f
}=(\real\hat{f}^\mathrm{s}+i\immag\hat{f}^\mathrm{a})+(\real\hat{f
}^\mathrm{a}+i\immag\hat{f}^\mathrm{s})$.} $\Bar{{\hat{f}}}(\textbf{p})=\hat{f}(-\textbf{p})$ and the relation $\Bar{{\hat{f}}}(\textbf{p})=-\hat{f}(-\textbf{p})$.\\\\ 

A unitarily equivalent way to characterize the real Hilbert space $\RealHil^\infty$~\cite[Prop.~3.1]{KW91},~\cite[Sec.~2]{ArakiHD2} is via $\RealHil_{\mathrm{pos}}^\infty$, that is the completion of the real vector space $C^{\infty}_c(\mathbb{M},\mathbb{R})/\ker (\operatorname{E})$ in the norm induced by the real inner product defined by the symmetric part $\mu^\infty$ of the two-point function $\omega^\infty_2$ of the ground state \cite[Equation $(2.6)$ and $(2.5)$]{ArakiHD2}. On this real Hilbert space we define a complex structure. It acts on the dense subspace $C^{\infty}_c(\mathbb{M},\mathbb{R})/\ker (\operatorname{E})$ as the operator that implements the multiplication by $i$ on the Fourier transform evaluated on the positive mass-hyperboloid (and therefore by $-i$ on the negative one~\cite[Footnote~6]{ArakiHD2})
\begin{equation*}
\left.\widehat{\ComplexStr_{\RealSub{\mathrm{pos}}^\infty }f}\right\vert_{\massh}=i\left.\hat{f}\right\vert_{\massh},\quad f\in\RealHil^\infty_\mathrm{pos}.
\end{equation*}
Note that this operator is a well defined linear map from $\RealHil^\infty_\mathrm{pos}$ to itself and satisfies all the properties of a complex structure. Using this complex structure we turn $\RealHil_{\mathrm{pos}}^\infty$ into a complex Hilbert space $\Hil_\mathrm{pos}^\infty$ with complex inner product implemented by the ground state two-point function $\omega_2^\infty=\mu^\infty+\frac{i}{2}\sigma$. The isomorphism between the complex Hilbert spaces $\Hil^\infty$ and $\Hil_{\mathrm{pos}}^\infty$ is then implemented on the dense subspace $C^\infty_c(\Mink,\mathbb{R})/\ker(\operatorname{E})$ by the map~\eqref{Eq: fourier_transform_on_mh} and we use the same symbol to denote its extension by continuity to $\Hil^\infty$.\\

The real Hilbert space $\RealHil^\infty_\mathrm{pos}$ can be decomposed in the direct sum of the real subspaces of functions symmetric and antisymmetric in the time variable. Denoting by $\RealSub{_\mathrm{pos}}^\infty$ the subspace generated by symmetric test functions in the time variable, the space generated by antisymmetric test functions coincides with $\ComplexStr_{\Hil^\infty_{\mathrm{pos}}}\RealSub{_\mathrm{pos}}^\infty$ so that we can write $\RealHil^\infty_\mathrm{pos}=\RealSub{_\mathrm{pos}}^\infty\oplus\ComplexStr_{\Hil^\infty_{\mathrm{pos}}}\RealSub{_\mathrm{pos}}^\infty$; the relation $\RealSub{_\mathrm{pos}}^\infty\perp\ComplexStr_{\Hil^\infty_{\mathrm{pos}}}\RealSub{_\mathrm{pos}}^\infty$ directly follows from the symmetry of $\mu^\infty$.This decomposition corresponds, via the isomorphism~\eqref{Eq: fourier_transform_on_mh}, to the decomposition of $\RealHil^\infty=\RealSub{}^\infty\oplus \ComplexStr_{\RealHil^\infty}\RealSub{}^\infty$. Indeed, for any real valued function $f_+(x)$ symmetric in the time variable $x^0$ we have
\begin{equation*}
    \left.\Bar{\hat{f}}_+(p_0,\textbf{p})\right\vert_{\massh}=\int f_+(-x^0,\textbf{x})e^{ix^0\en{p}+i\textbf{x}\cdot\textbf{p}}\di^4x=\left.\hat{f}_+(p_0,-\textbf{p})\right\vert_{\massh}.
\end{equation*}
Finally, let us define the notation for the two following subspaces of $\RealSub{\mathrm{pos}}^\infty$ and $\mathcal{H}^\infty_\mathrm{pos}$
\begin{align*}
 \RealSub{\mathrm{pos}}^\infty(\Ocal)&\coloneqq\overline{\left\{f\in C^\infty_c(\Ocal),\;f(x^0,\textbf{x})=f(-x^0,\textbf{x})\right\}}^{\RealHil^\infty_\mathrm{pos}},\\  (\ComplexStr_{\RealHil_\mathrm{pos}^\infty}\RealSub{\mathrm{pos}}^\infty)(\Ocal)&\coloneqq \overline{\left\{f\in C^\infty_c(\Ocal),\;f(x^0,\textbf{x})=-f(-x^0,\textbf{x})\right\}}^{\RealHil^\infty_\mathrm{pos}},\\
    \mathcal{H}^\infty_\mathrm{pos}(\Ocal)&\coloneqq \overline{\left\{f\in C^\infty_c(\Ocal)\right\}}^{\RealHil^\infty_\mathrm{pos}}.
\end{align*}
Note that if a region $\mathcal{O}$ is invariant under time reflections the decomposition $\RealHil^\infty_{\mathrm{pos}}=\RealSub{\mathrm{pos}}^\infty\oplus\ComplexStr_{\RealHil^\infty_{\mathrm{pos}}}\RealSub{\mathrm{pos}}^\infty$ is compatible with the localization of a real subspace $\mathcal{H}^\infty_\mathrm{pos}(\Ocal)$, in the sense that it is always possible to find subspaces $\RealSub{1},\RealSub{2}$ of $\RealSub{\mathrm{pos}}^\infty$ such that $\mathcal{H}^\infty_\mathrm{pos}(\Ocal)=\RealSub{1}\oplus\ComplexStr_{\RealHil^\infty_\mathrm{pos}}\RealSub{2}$ with $\RealSub{1}=\RealSub{\mathrm{pos}}^\infty(\Ocal),\ComplexStr_{\RealHil^\infty_\mathrm{pos}} \RealSub{2}= (\ComplexStr_{\RealHil_\mathrm{pos}^\infty}\RealSub{\mathrm{pos}}^\infty)(\Ocal)$ generated by functions supported in $\mathcal{O}$.\\

Relying on these definitions we now identify the local von Neumann algebras in the thermal representation of the scalar field, in terms of the corresponding real subspaces as described in the previous section. From now on, the one-particle Hamiltonian $h$ of the ground state is identified with the multiplication operator $\en{p}$ acting on $\Hil^\infty$ and $\complexconj$ with the usual complex conjugation for complex valued functions. The same symbol is also used to denote the real linear operators that they naturally induce on $\RealHil^\infty$. For simplicity, we also restrict our attention to regions $\mathcal{O}$ that are invariant under time reflections.
\begin{lemma}\label{Lem: identification_thermal}
Let $\Ocal \subset \mathbb{M}$ be an open subset of Minkowski spacetime symmetric under time reflections and consider the associated abstract Weyl $\operatorname{C}^*$-algebra $\mathcal{A}(\Ocal)$. Then, the corresponding local von Neumann algebra in the GNS representation induced by a quasi-free KMS state $\omega^{\beta}$ at inverse temperature $\beta \in (0,+\infty)$, is given by
    \begin{equation*}
    \begin{aligned}
         \mathcal{M}_{\beta}(\Ocal) &= \mathcal{R}_{\mathrm{F}}(\UOcal, \VOcal)\\
         &= \left\{U(u),V(v);u\in \UOcal,v\in\VOcal\right\}'',
    \end{aligned}
    \end{equation*}
    where the subspaces $\mathcal{U}_\Ocal,\mathcal{V}_\Ocal$ are subspaces of $\RealSub{}^\infty\oplus \RealSub{}^\infty$ defined by
    \begin{align*}
    &\UOcal\coloneqq \left\{ u_1\oplus u_2 \in  \RealSub{}^\infty\oplus\RealSub{}^\infty : u_1 = \complexconj \sinh ( \implicitop ) K^\infty f , \,\, u_2 = \cosh ( \implicitop ) K^\infty f , \,\, \mathrm{for} \,\, f \in \RealSub{\mathrm{pos}}^\infty(\Ocal)\right\},\\
    &\VOcal \coloneqq \left\{ v_1\oplus v_2 \in  \RealSub{}^\infty\oplus\RealSub{}^\infty : v_1 = -\ComplexStr_{\RealHil^\infty} \complexconj \sinh ( \implicitop )  K^\infty g , \,\, v_2 = -\ComplexStr_{\RealHil^\infty} \cosh ( \implicitop ) K^\infty g , \,\, \mathrm{for} \,\, g \in (\ComplexStr_{\RealHil_\mathrm{pos}^\infty} \RealSub{\mathrm{pos}}^\infty)(\Ocal)\right\}.
\end{align*}
The algebra acts on the Hilbert space $\mathscr{H}_{\omega^{\beta}}$ given in Equation~\ref{Eq: KMS_fock}.
\end{lemma}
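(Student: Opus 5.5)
The plan is to identify the represented Weyl generators explicitly in the Fock--KMS representation and then split each local test function into its time-symmetric and time-antisymmetric parts. By construction, $\pi_{\omega^\beta}(\Abweyl(f)) = W(K^\beta f)$ on $\Fock{\Hil^\infty\oplus\Hil^\infty}$, where $K^\beta f = \complexconj\sinh(\implicitop)K^\infty f \oplus \cosh(\implicitop)K^\infty f$ is given by Equation~\eqref{eq: formaKbeta}. This holds because the vacuum vector reproduces the quasi-free KMS two-point function, and the GNS triple is unique up to unitary equivalence. It follows that
\[
\mathcal{M}_\beta(\Ocal) = \mathcal{R}_\mathrm{S}\bigl(K^\beta(C^\infty_c(\Ocal,\mathbb{R}))\bigr).
\]
By Equation~\eqref{Eq: tacke_closure} we may replace this subspace by its closure.

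The first step is a continuity estimate for $K^\beta$. Since $h=\en{p}\geq m>0$, the operator $\cosh(\implicitop)$ is bounded. Moreover $\sinh(\implicitop)$ is bounded by $(e^{\beta m}-1)^{-1/2}$. Hence
\[
\|K^\beta f\|\leq c\,\|K^\infty f\|,
\]
so $K^\beta$ extends continuously from $\RealHil^\infty_\mathrm{pos}$, and the closure of $K^\beta(C^\infty_c(\Ocal))$ equals $K^\beta(\mathcal{H}^\infty_\mathrm{pos}(\Ocal))$. Here we need that $K^\beta$ composed with the inverse of $K^\beta$ on its range has closed image. This follows from the lower bound $\|K^\beta f\|\geq\|\cosh(\implicitop)K^\infty f\|\geq\|K^\infty f\|$, since $\cosh\geq 1$. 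So $K^\beta$ is bounded below, and the image of a closed subspace is closed.

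Next, because $\Ocal$ is time-reflection symmetric, any $f\in C^\infty_c(\Ocal)$ splits as $f=f_++f_-$, with $f_\pm(x^0,\mathbf{x})=\tfrac12(f(x^0,\mathbf{x})\pm f(-x^0,\mathbf{x}))$, and both parts are still supported in $\Ocal$. Therefore
\[
\mathcal{H}^\infty_\mathrm{pos}(\Ocal)=\RealSub{\mathrm{pos}}^\infty(\Ocal)\oplus(\ComplexStr_{\RealHil^\infty_\mathrm{pos}}\RealSub{\mathrm{pos}}^\infty)(\Ocal).
\]
For $f_+$ we have $K^\beta f_+\in\UOcal$. This uses that $K^\infty f_+\in\RealSub{}^\infty$ by the computation following the decomposition, and that $\sinh(\implicitop)$, $\cosh(\implicitop)$, $\complexconj$ preserve $\RealSub{}^\infty$: they are multiplications by functions of $\en{p}$, which is even in $\mathbf{p}$, and $\complexconj$ maps symmetric to symmetric.

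For $g=f_-$ we write $K^\infty g=\ComplexStr_{\RealHil^\infty}(-\ComplexStr_{\RealHil^\infty}K^\infty g)$, where $-\ComplexStr_{\RealHil^\infty}K^\infty g\in\RealSub{}^\infty$. We then check that $\sinh(\implicitop)$ and $\cosh(\implicitop)$ commute with $\ComplexStr$, since they are complex linear. The map $\complexconj$ anticommutes with $\ComplexStr$. Hence the vector $\complexconj\sinh(\implicitop)K^\infty g\oplus\cosh(\implicitop)K^\infty g$ has the form $\ComplexStr_{\RealHil\oplus\RealHil}(v_1\oplus v_2)$ with $v_1\oplus v_2\in\VOcal$. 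This is the sign bookkeeping that produces the $-\complexconj\sinh$ in $\mathcal{V}$ matching Proposition~\ref{prop: GenericSub}. Consequently
\[
\overline{K^\beta(C^\infty_c(\Ocal))}=\UOcal\oplus\ComplexStr_{\RealHil^\infty\oplus\RealHil^\infty}\VOcal,
\]
and Equation~\eqref{eq: chiusuravN} gives $\mathcal{R}_\mathrm{S}(\UOcal\oplus\ComplexStr\VOcal)=\mathcal{R}_\mathrm{F}(\UOcal,\VOcal)$, which is the claim.

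The main obstacle is the sign and conjugation bookkeeping in the antisymmetric part. One must verify that $\complexconj$ interacts with $\ComplexStr$ so as to produce exactly the stated form of $\VOcal$, with the correct signs, within the chosen decomposition $\RealSub{}^\infty\oplus\RealSub{}^\infty$. One must also check that $\complexconj$ genuinely leaves $\RealSub{}^\infty$ invariant for the chosen identification $\RealHil^\infty\cong L^2_\mathbb{R}\oplus L^2_\mathbb{R}$. I would first settle this on test functions by explicit Fourier computation on $\massh$, and only then extend by continuity.
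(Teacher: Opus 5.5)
Your proposal is correct and follows essentially the same route as the paper: you identify the Segal subspace labelling $\mathcal{M}_\beta(\mathcal{O})$ as the $K^\beta$-image (Equation~\eqref{eq: formaKbeta}) of $\mathcal{H}^\infty_{\mathrm{pos}}(\mathcal{O})$, check that the multiplication operators (even in $\mathbf{p}$) and the conjugation $\complexconj$ preserve the symmetric/antisymmetric splitting with $\complexconj$ anticommuting with the complex structure, and pass to the Weyl form via Equation~\eqref{eq: chiusuravN}. Your extra details are correct and fill in steps the paper leaves implicit: the two-sided bound $\|K^\infty f\|\leq\|K^\beta f\|\leq c\,\|K^\infty f\|$ giving closedness of the image, and the explicit sign computation producing $-\complexconj\sinh$ in $\VOcal$.
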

\begin{proof}
We start by showing that $\mathcal{U}_\Ocal,\mathcal{V}_\Ocal$ are subspaces of $\RealSub{}^\infty\oplus\RealSub{}^\infty$. The map $K^\infty$ maps element of $\RealSub{\mathrm{pos}}^\infty$ (respectively $\ComplexStr_{\RealHil_\mathrm{pos}^\infty} \RealSub{\mathrm{pos}}^\infty$) into elements of $\RealSub{}^{\infty}$ (respectively $\beta_{\mathcal{H}^{\infty}}\RealSub{}^{\infty}$). The multiplicative operators defined on $\Hil^\infty$
\begin{equation*}
    \sinh(\implicitop)  = \frac{1}{\sqrt{e^{\beta \omega_{\mathbf{p}}}- 1}}, \qquad \cosh(\implicitop)  = \frac{1}{\sqrt{1 - e^{-\beta \omega_{\mathbf{p}}}}}.
\end{equation*}
are bounded and, since $\en{p}=\en{-p}$, the real linear operator that they induce on $\RealHil^\infty$ leaves the subspaces $\RealSub{}^{\infty},\beta_{\mathcal{H}^{\infty}}\RealSub{}^{\infty}$ invariant. Clearly the same holds true for the real linear operator induced by $\complexconj$ on $\RealHil^\infty$. Since the complex structure $\ComplexStr_{\RealHil^\infty}$ anti-commutes with the conjugate linear operator $\complexconj$ and commutes with the linear operator $\en{p}$, the inclusion of $\mathcal{U}_\Ocal,\mathcal{V}_\Ocal$ is verified. In order to construct the algebra $\mathcal{M}_\beta(\Ocal)$ we observe that, since $\complexconj e^{i\en{p}}= e^{-i\en{p}}\complexconj$, the Fock representation induced by the KMS state is consistently obtained making use of the doubling procedure described in Section~\ref{Sec: RealScalarField}. The Fock space of the representation coincides with $\mathscr{H}_{\omega^{\beta}}$ as given in Equation~\eqref{Eq: KMS_fock}. Then, making use of Equation~\eqref{eq: formaKbeta}, we obtain the real subspace $\RealHil^\beta_\Ocal$ that labels the represented algebra in the Segal formulation
\begin{equation*}
    \mathcal{H}^{\beta}_{\Ocal} = \left\{ u_1 \oplus u_2 \in \mathcal{H}^{\infty} \oplus \mathcal{H}^{\infty} : u_1 = \complexconj \sinh ( \implicitop ) K^\infty f , \,\, u_2 = \cosh ( \implicitop ) K^\infty f , \,\, \mathrm{for} \,\, f \in \mathcal{H}^\infty_\mathrm{pos}(\Ocal) \right\}.
\end{equation*}
Finally, by using Equations~\eqref{eq: chiusuravN},~\eqref{eq: SegalWeyl} we identify the corresponding subspaces $\RealSub{}^\infty\oplus\RealSub{}^\infty$ in the Weyl formulation, concluding the proof.
\end{proof}
\begin{remark}
The real subspaces $\UOcal$ and $\VOcal$ play here the role of the real subspaces $\mathcal{U}(\RealSub{1})$ and $\mathcal{V}(\RealSub{2})$ introduced in Proposition~\ref{prop: GenericSub} associated respectively with the real subspaces $\RealSub{1} = K^\infty\RealSub{\mathrm{pos}}^\infty(\Ocal)$ and $\RealSub{2} = \ComplexStr_{\RealHil^\infty} K^\infty(\ComplexStr_{\RealHil_\mathrm{pos}^\infty} \RealSub{\mathrm{pos}}^\infty)(\Ocal)$ of the ground state representation.
\end{remark}

\subsection{Haag duality for causal diamonds}
Now that the explicit form of the local von Neumann algebras $\mathcal{M}_{\beta}(\Ocal)$ with $\Ocal$ invariant under time reflections has been determined, it is the purpose of this section to prove Theorem~\ref{thm: main}. As it is well known~\cite{ArakiHD2}, for the proof of Haag-Duality certain regularity properties of the bounded open region $\mathcal{O} \subset \mathbb{M}$ play an important role. Therefore, in order to avoid geometrical and topological complications (see the discussion in~\cite[Sec.~$7$]{ArakiHD2}), we always assume $\Ocal \subset \Mink$ to be an open causal diamond. By definition, $\Ocal$ is a globally hyperbolic subset of $\Mink$, namely it has an embedded Cauchy hypersurface $\subMink$. Moreover, being also $\Mink$ globally hyperbolic, this embedded hypersurface can always be extended to a Cauchy hypersurface of the full spacetime $\Cauchy \subset \Mink$. The Cauchy development of $\subMink=\Sigma\cap \Ocal$ in $\Mink$, denoted by $\Cdiamond(\subMink)$, coincides with the causal diamond $\Ocal$, namely $\Cdiamond(\subMink) = \Ocal$. Note that the equality $\Cdiamond(\subMink')=\Ocal$ holds for any other choice $\subMink'=\Ocal\cap\Sigma'$, with $\subMink'$ being a Cauchy hypersurface of $\Ocal$. From now on we assume $\mathcal{O} = C(\subMink) \subset \mathbb{M}$ to be a causal diamond and, for the sake of simplicity, we assume $C(\subMink)$ to have base on a fixed time Cauchy hypersurface $\Cauchy_t$. Namely we consider regions of the form
\begin{equation}\label{eq: C(B)}
    \Cdiamond(\subMink)=\{x\in\Mink\vert(x-y)\cdot(x-y)>0\;\;\text{for any}\;\;y\in\Cauchy_t,y\not\in\subMink\}.
\end{equation}
Let us further assume (if not differently stated) the diamond $\Cdiamond(\subMink)$ to have base $\subMink$ on the time zero Cauchy surface $\Sigma_0$, so that $\Cdiamond(\subMink)$ is invariant under time reflections. In order to apply the results of Section~\ref{sec: General} and prove our main theorem, we need to identify the orthogonals of the real subspaces $\RealSub{_\mathrm{pos}}^\infty(\mathcal{O})$ and $(\ComplexStr_{\RealHil_\mathrm{pos}^\infty} \RealSub{_\mathrm{pos}}^\infty)(\mathcal{O})$. This is a known result~\cite{ArakiHD2, Garbarz_2022} of which we report a streamlined proof for completeness. To achieve it, we make use of the globally hyperbolicity of causal diamonds. This allows to establish an isomorphism between $\RealSub{_\mathrm{pos}}^\infty(\mathcal{O})$ and the spaces of initial data on the Cauchy surface $\subMink \subset \Ocal$. Let us define the two following Hilbert spaces

\begin{definition}
We define the real Hilbert spaces of initial conditions as
\begin{align*}
    \Fphi \coloneqq \overline{(S(\mathbb{R}^3, \mathbb{R}), \langle \cdot , \cdot \rangle_{\varphi})}\\
    \Fpi \coloneqq \overline{(S(\mathbb{R}^3, \mathbb{R}), \langle \cdot , \cdot \rangle_{\pi})},
\end{align*}
where the completion is taken in the topology induced by the corresponding inner products that, for $f, g \in S(\mathbb{R}^3, \mathbb{R})$, are defined by
\begin{align*}
    \langle f , g \rangle_{\varphi} &\coloneqq \langle \omega^{-\frac{1}{2}} f, \omega^{-\frac{1}{2}} g \rangle,\\
    \langle f , g \rangle_{\pi} &\coloneqq \langle \omega^{\frac{1}{2}} f, \omega^{\frac{1}{2}} g \rangle.
\end{align*}
Here, on the right hand side, the inner product is the standard inner product on $L^2(\mathbb{R}^3,\mathbb{R})$, amended by the presence of multiplicative operators $\omega^{\alpha}: S(\mathbb{R}^3, \mathbb{R}) \to S(\mathbb{R}^3, \mathbb{R})$, for $\alpha \in \mathbb{Q}$, defined as
\begin{equation*}
    \omega^{\alpha} f  \coloneqq \mathcal{F}^{-1} \left(\en{p}^{\alpha} \hat{f}\vert_{\massh}(\mathbf{p})\right).
\end{equation*}
\end{definition}
From the given definitions of the inner products the following inclusions 
\begin{equation*}
    \Fpi \subset L^2(\mathbb{R}^3,\mathbb{R}) \subset \Fphi,
\end{equation*}
follow. They are implemented by the continuous maps
\begin{equation*}
    j_1 : \Fpi \to L^2(\mathbb{R}^3,\mathbb{R}) \, , \qquad j_2 : L^2(\mathbb{R}^3,\mathbb{R}) \to \Fphi.
\end{equation*}
where continuity follows from the estimates 
\begin{equation*}
    \| j_1 f \|_{L^2} \leq m^{-\frac{1}{2}} \| f \|_{\Fpi} \, , \qquad \| j_2 f \|_{\Fphi} \leq m^{-\frac{1}{2}} \| f \|_{L^2}.
\end{equation*} 
We recall that the real Hilbert spaces of initial conditions are isomorphic respectively to $\RealSub{_\mathrm{pos}}^\infty$ and $\ComplexStr_{\RealHil_\mathrm{pos}^\infty} \RealSub{_\mathrm{pos}}^\infty$ (see Lemma~\ref{lem: isoinitial}), the isomorphisms being implemented by the maps
\begin{equation*}
    \delta_0 : \RealSub{_\mathrm{pos}}^\infty \to \Fphi\, , \qquad \delta_1 : \ComplexStr_{\RealHil_\mathrm{pos}^\infty} \RealSub{_\mathrm{pos}}^\infty \to \Fpi.
\end{equation*}
Their explicit action on $f \in S(\mathbb{R}^4,\mathbb{R}) \cap \RealSub{_\mathrm{pos}}^\infty$ and $g \in S(\mathbb{R}^4,\mathbb{R}) \cap \ComplexStr_{\RealHil_\mathrm{pos}^\infty} \RealSub{_\mathrm{pos}}^\infty$ is given by
\begin{equation}\label{eq: azionedelta}
    \delta_0 f \coloneqq \mathcal{F}^{-1}\left( \hat{f}\vert_{\massh}(\mathbf{p}) \right) \, , \qquad \delta_1 g \coloneqq \mathcal{F}^{-1}\left( (i \omega_{\mathbf{p}})^{-1}\hat{g}\vert_{\massh}(\mathbf{p}) \right)
\end{equation}
and then extended by continuity to $\RealSub{_\mathrm{pos}}^\infty$ and $\ComplexStr_{\RealHil_\mathrm{pos}^\infty} \RealSub{_\mathrm{pos}}^\infty$. If $h \in C^{\infty}_c(\mathbb{M},\mathbb{R})$ such that $\mathrm{supp}(h) \subset \mathcal{O}$, $\mathcal{O} = C(\subMink)$, then the support properties of the causal propagator (see Lemma~\ref{lem: isoinitial}) imply
\begin{equation*}\label{rem: isomorphism}
\mathrm{supp}\left( \delta_0\left(\RealSub{_\mathrm{pos}}^\infty (\Ocal)\right)\right) \subset \subMink, \qquad \mathrm{supp}\left(\delta_1\left((\ComplexStr_{\RealHil_\mathrm{pos}^\infty} \RealSub{_\mathrm{pos}}^\infty)(\Ocal)\right)\right) \subset \subMink.
\end{equation*}
Therefore we can characterise the real subspaces in terms of initial conditions on a Cauchy surface. We define the following subspaces of $\Fphi$

\begin{definition}[\protect{\cite[Eqs.~$(5.1),(5.2)$]{ArakiHD2}}]
Let $\Ocal = C(\subMink)$ be a time reflection invariant causal diamond. Then, its associated spaces of initial conditions are
\begin{align*}
    F_R(\subMink) &\coloneqq \overline{j_2 L^2(\subMink, \mathbb{R})}^{\| \cdot \|_{\varphi}}\\
    F_I(\subMink) &\coloneqq \beta_{\pi,\varphi} j_1^{-1} \left(L^2(\subMink, \mathbb{R}) \cap \Fpi \right)
\end{align*}
where $F_R(\subMink) , F_I(\subMink) \subset \Fphi$ and the completion is with respect to the topology induced by the scalar product on $\Fphi$. Moreover, we have introduced the following operator
\begin{equation*}
    \beta_{\pi,\varphi} \coloneqq \delta_0 \circ  \ComplexStr_{\RealHil_\mathrm{pos}^\infty} \circ \delta_1^{-1} : L^2(\mathbb{R}^3,\mathbb{R}) \to \Fphi.
\end{equation*}
\end{definition}

 The action of the isomorphism~\ref{eq: azionedelta} together with de definition of the real subspaces $\UOcal$ and $\VOcal$ in Lemma~\ref{Lem: identification_thermal} imply that
\begin{align}
    &\UOcal = \left\{f_1\oplus f_2\in\RealSub{}^\infty\oplus\RealSub{}^\infty : f_1 =\frac{\mathcal{F}(f)(-\textbf{p})}{\sqrt{e^{\beta\en{p}}-1}}, \;f_2=\frac{\mathcal{F}(f)(\textbf{p})}{\sqrt{1-e^{-\beta\en{p}}}}, \; f\in F_R(\subMink) \right\} \label{eq: 1partR},\\
    &\VOcal = \left\{f_1\oplus f_2\in\RealSub{}^\infty\oplus\RealSub{}^\infty : f_1 =\frac{-\mathcal{F}(f)(-\textbf{p})}{\sqrt{e^{\beta\en{p}}-1}},\;f_2=\frac{\mathcal{F}(f)(\textbf{p})}{\sqrt{1-e^{-\beta\en{p}}}},\; f\in F_I(\subMink) \right\}.
\end{align}
To determine the orthogonals $\mathcal{U}_\Ocal^\perp,\mathcal{V}_\Ocal ^\perp$ in $\mathcal{\RealSub{}}^\infty\oplus\RealSub{}^\infty$ we use the following lemma
\begin{lemma}[\protect{\cite[Lem.~$2$]{ArakiHD2},~\cite[Thm.~$4.1$]{Garbarz_2022}}]\label{Lem: review}
The following equalities hold
\begin{align*}
    F_R(\subMink)^\perp &=  F_I(\subMink^c),\\
    F_I(\subMink)^\perp &=  F_R(\subMink^c),
\end{align*}
where $\perp$ denotes the orthogonal in $\mathscr{F}_\varphi$.
\end{lemma}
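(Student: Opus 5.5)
We need to prove $F_R(\subMink)^\perp = F_I(\subMink^c)$ and $F_I(\subMink)^\perp = F_R(\subMink^c)$ in $\Fphi$. The plan is to reduce everything to the $L^2$ duality pairing between the $\varphi$- and $\pi$-initial data spaces.

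\textbf{Step 1: the operator $\beta_{\pi,\varphi}$.} First we make $\beta_{\pi,\varphi}$ explicit. By \eqref{eq: azionedelta}, $\delta_0$ acts as $\hat f|_{\massh}$ and $\delta_1$ as $(i\en{p})^{-1}\hat g|_{\massh}$. The complex structure acts as multiplication by $i$ on the mass shell. Composing, we expect
\[
\beta_{\pi,\varphi} = \omega,
\]
up to a sign fixed by the conventions, i.e.\ the Fourier multiplier by $\en{p}$ restricted to real functions. Then $F_I(\subMink) = \omega\bigl(j_1^{-1}(L^2(\subMink,\mathbb{R})\cap\Fpi)\bigr)$.

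\textbf{Step 2: the key identity.} For $f\in\Fphi$ and $g\in\Fpi$ we have
\[
\langle f,\omega g\rangle_{\varphi} = \langle \omega^{-1/2}f,\omega^{1/2}g\rangle_{L^2},
\]
which on test functions equals the plain $L^2$ pairing $\int f g\,\di^3\mathbf{x}$. This pairing extends to a perfect duality $\Fphi\times\Fpi\to\mathbb{R}$, and $\omega:\Fpi\to\Fphi$ is unitary. Hence orthogonality in $\Fphi$ to $\omega G$ is equivalent to annihilation under the $L^2$ pairing against $G\subset\Fpi$.

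\textbf{Step 3: the inclusions.} For $f\in L^2(\subMink)$ and $g\in L^2(\subMink^c)\cap\Fpi$, the supports are disjoint, so $\int fg=0$. Taking closures gives
\[
F_I(\subMink^c)\subset F_R(\subMink)^\perp, \qquad F_R(\subMink^c)\subset F_I(\subMink)^\perp .
\]

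\textbf{Step 4: the reverse inclusions.}
\begin{itemize}
\item For the second equality, suppose $f\in\Fphi$ is orthogonal to $F_I(\subMink)$. By Step 2, $f$, viewed as a distribution in $H^{-1/2}$, annihilates every $g\in C_c^\infty(\subMink)$. So $\operatorname{supp} f\subset \subMink^c$. We then need $f\in F_R(\subMink^c)$, i.e.\ that an $H^{-1/2}$ distribution supported in the closed set $\subMink^c$ is an $H^{-1/2}$-limit of $L^2$ functions supported in $\subMink^c$.
\item Symmetrically, for the first equality, take $f=\omega g$ orthogonal to $F_R(\subMink)$. Then $g\in H^{1/2}$ vanishes on the open ball $\subMink$, and we need $g$ to be approximable within $L^2(\subMink^c)\cap\Fpi$. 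This is immediate, since $g$ itself already lies in $L^2(\subMink^c)\cap\Fpi$, which by definition is a subspace of $\Fpi$. So we only need $F_I(\subMink^c)$ to be closed, which holds because $j_1^{-1}(L^2(\subMink^c)\cap \Fpi)$ is closed in $\Fpi$ and $\omega$ is unitary.
\end{itemize}

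\textbf{Main obstacle.} The hardest step is the spectral synthesis statement in the first bullet of Step 4: an $H^{-1/2}$ distribution supported in $\subMink^c$ is the limit of $L^2(\subMink^c)$ functions. For a ball $\subMink$ with smooth boundary this holds. We would argue it by dilation and convolution: dilate outward from the centre of the ball, which preserves support in the complement, and then mollify with a small kernel. Dilation is continuous on $H^{-1/2}$, and the mass $m>0$ causes no issue since $H^{-1/2}$ norms for different masses are equivalent. Alternatively, we would invoke the standard fact that, for Lipschitz domains, $H^{s}$ distributions supported in a closed set $F$ are approximable by smooth functions supported in $F$ (for $|s|\le 1/2$ this uses that boundary traces vanish appropriately).
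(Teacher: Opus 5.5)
Your proof is correct. Steps 1--3 and the second bullet of Step 4 are the standard $L^2$-pairing argument; the paper itself gives no proof here and defers to Araki and Garbarz--Palau. The sign in $\beta_{\pi,\varphi}=\pm\omega$ is indeed irrelevant.

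You depart from the standard argument only in the reverse inclusion $F_I(\subMink)^\perp\subset F_R(\subMink^c)$, and that detour is unnecessary. Your second bullet uses nothing about $\subMink$, so it actually proves $F_R(S)^\perp=F_I(S^c)$ for every measurable $S\subset\mathbb{R}^3$. Taking $S=\subMink^c$ gives $F_R(\subMink^c)^\perp=F_I(\subMink)$. Since $F_R(\subMink^c)$ is closed by definition, $F_I(\subMink)^\perp=F_R(\subMink^c)^{\perp\perp}=F_R(\subMink^c)$. This needs no support or regularity considerations, and with these $L^2$-type definitions the lemma holds for arbitrary measurable bases.

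By testing only against $C_c^\infty(\subMink)$, you discard part of the information in $f\perp F_I(\subMink)$ and then have to recover it through spectral synthesis. Your dilation-and-mollification argument does this correctly for a ball. Centre $\subMink$ at the origin with radius $r$: dilation by $\lambda>1$ moves the support into $\{|\mathbf{x}|\geq\lambda r\}$, and mollifying at scale below $(\lambda-1)r$ keeps it off $\subMink$. Dilations are uniformly bounded and strongly continuous on $H^{-1/2}$, and $m>0$ only replaces the norm by an equivalent one. So there is no gap, only extra work tied to the geometry of $\subMink$.

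That extra work is misplaced rather than wasted. By duality (annihilators under the $L^2$ pairing), your synthesis statement is equivalent to $\overline{C_c^\infty(\subMink,\mathbb{R})}^{\Fpi}=j_1^{-1}\bigl(L^2(\subMink,\mathbb{R})\cap\Fpi\bigr)$. In words, the time-zero data produced by time-antisymmetric test functions localised in $\Ocal$ exhaust $F_I(\subMink)$. That density is exactly the regularity input the paper uses implicitly when it describes $\VOcal$ through $F_I(\subMink)$ in \eqref{eq: 1partR}. It is not needed for this lemma.
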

From this it follows
\begin{proposition}\label{prop: ortFRI}
Given the subspaces $\UOcal$ and $\VOcal$, their orthogonal in $\RealSub{}^\infty \oplus \RealSub{}^\infty$ are
\begin{align*}
    \UOcal^{\perp} &= \mathcal{V}_{\Ocal'} \oplus \Tilde{\mathcal{V}}_{\mathbb{M}}\\
    \VOcal^{\perp} &= \mathcal{U}_{\Ocal'} \oplus \Tilde{\mathcal{U}}_{\mathbb{M}}
\end{align*}
where
\begin{align*}
    \Tilde{\mathcal{U}}_{\mathbb{M}} &\coloneqq \left\{ u_1\oplus u_2 \in  \RealSub{}^\infty\oplus\RealSub{}^\infty : u_1 =  \cosh ( \implicitop )K^\infty f , \,\, u_2 =\complexconj \sinh ( \implicitop )  K^\infty f , \,\, \mathrm{for} \,\, f \in \RealSub{\mathrm{pos}}^\infty\right\},\\
    \Tilde{\mathcal{V}}_{\mathbb{M}} &\coloneqq \left\{v_1\oplus v_2 \in  \RealSub{}^\infty\oplus\RealSub{}^\infty : v_1 =  -\ComplexStr_{\RealHil^\infty} \cosh ( \implicitop )  K^\infty g , \,\, v_2 = -\ComplexStr_{\RealHil^\infty} \complexconj \sinh ( \implicitop ) K^\infty g , \,\, \mathrm{for} \,\, g \in (\ComplexStr_{\RealHil_\mathrm{pos}^\infty} \RealSub{\mathrm{pos}}^\infty) \right\}.\\
   \end{align*}
\end{proposition}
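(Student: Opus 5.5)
The plan is to reduce the statement to Proposition~\ref{prop: GenericSub}, identifying its abstract objects with the concrete Minkowski ones via the Remark after Lemma~\ref{Lem: identification_thermal}. Set $\RealSub{} = \RealSub{}^\infty$, take $h$ to be multiplication by $\en{p}$, and let $\complexconj$ be pointwise complex conjugation. Put $\RealSub{1} = K^\infty\RealSub{\mathrm{pos}}^\infty(\Ocal)$ and $\RealSub{2} = \ComplexStr_{\RealHil^\infty}K^\infty(\ComplexStr_{\RealHil^\infty_\mathrm{pos}}\RealSub{\mathrm{pos}}^\infty)(\Ocal)$.

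First I check the hypotheses of Section~\ref{sec: dualityabstract}. The operator $h$ is strictly positive since $\en{p}\ge m>0$. The invariance of $\RealSub{}^\infty$ under $e^{-\beta h}$ and $\complexconj$ follows from $\en{p}=\en{-p}$, as already shown in the proof of Lemma~\ref{Lem: identification_thermal}. With these choices $\UOcal=\mathcal{U}(\RealSub{1})$ and $\VOcal=\mathcal{V}(\RealSub{2})$, and the sets $\tilde{\mathcal{U}},\tilde{\mathcal{V}}$ of Proposition~\ref{prop: GenericSub} are exactly $\Tilde{\mathcal{U}}_{\mathbb{M}},\Tilde{\mathcal{V}}_{\mathbb{M}}$. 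To see this, use that $K^\infty$ maps $\RealSub{\mathrm{pos}}^\infty$ onto a dense subspace of $\RealSub{}^\infty$, and that $-\ComplexStr_{\RealHil^\infty}K^\infty$ maps $\ComplexStr_{\RealHil^\infty_\mathrm{pos}}\RealSub{\mathrm{pos}}^\infty$ densely into $\RealSub{}^\infty$. Then pass to closures, using boundedness of $\cosh(\implicitop)$ and $\sinh(\implicitop)$ and the closedness statement in Proposition~\ref{prop: nongeneric}. A sign check is needed here: $\ComplexStr$ anticommutes with $\complexconj$, so the $-\ComplexStr\complexconj\sinh$ entries become $\pm\complexconj\sinh$ acting on $-\ComplexStr K^\infty g$. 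This should reproduce the $\mp$ pattern defining $\mathcal{V}$ and $\tilde{\mathcal{V}}$.

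Proposition~\ref{prop: GenericSub} then gives
\[
\UOcal^\perp=\mathcal{V}(\RealSub{1}^\perp)\oplus\tilde{\mathcal{V}},\qquad \VOcal^\perp=\mathcal{U}(\RealSub{2}^\perp)\oplus\tilde{\mathcal{U}},
\]
with the orthogonals $\RealSub{1}^\perp,\RealSub{2}^\perp$ taken in $\RealSub{}^\infty$. It remains to show $\mathcal{V}(\RealSub{1}^\perp)=\mathcal{V}_{\Ocal'}$ and $\mathcal{U}(\RealSub{2}^\perp)=\mathcal{U}_{\Ocal'}$, which amounts to the two identities
\[
\RealSub{1}^\perp=\ComplexStr K^\infty(\ComplexStr\RealSub{\mathrm{pos}}^\infty)(\Ocal'),\qquad \RealSub{2}^\perp=K^\infty\RealSub{\mathrm{pos}}^\infty(\Ocal').
\]
For these I transport to initial data. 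Via $\delta_0$, and $\delta_0\circ\ComplexStr\circ\delta_1^{-1}$ in the second case, these are unitaries onto $\Fphi$ by Lemma~\ref{lem: isoinitial}. The subspace $\RealSub{1}$ corresponds to $F_R(\subMink)$ and $\RealSub{2}$ to $F_I(\subMink)$, as in \eqref{eq: 1partR}. Lemma~\ref{Lem: review} then gives $F_R(\subMink)^\perp=F_I(\subMink^c)$ and $F_I(\subMink)^\perp=F_R(\subMink^c)$.

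The main obstacle is identifying $F_R(\subMink^c)$ and $F_I(\subMink^c)$ with the closed one-particle subspaces generated by test functions in the unbounded region $\Ocal'=C(\subMink)'$. The causal complement of a time-symmetric diamond is itself time-symmetric, and its initial data live on $\subMink^c$. I would argue this with the support properties of the causal propagator, by approximating data supported in $\subMink^c$ by compactly supported data in the open set $\Sigma_0\setminus\overline{\subMink}$ and propagating them, taking care that $\partial\subMink$ carries no $L^2$ mass. That these closed subspaces are the right ones, rather than defining $\mathcal{U}_{\Ocal'},\mathcal{V}_{\Ocal'}$ through a different closure, should follow by the same density and closedness argument used in the first step. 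Once the identifications are made, the unitarity of the transport maps and the injectivity of the Bogoliubov operator $A$ yield the stated decompositions.
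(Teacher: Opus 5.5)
Your proposal is correct and is the paper's own proof: it applies Proposition~\ref{prop: GenericSub} with exactly your choice of $\RealSub{1},\RealSub{2}$ and reads off $\RealSub{1}^\perp,\RealSub{2}^\perp$ from Lemma~\ref{Lem: review}. The paper leaves implicit the checks you spell out: the sign bookkeeping via $\ComplexStr_{\RealHil^\infty}\complexconj=-\complexconj\ComplexStr_{\RealHil^\infty}$, the equality $\tilde{\mathcal{U}}=\Tilde{\mathcal{U}}_{\mathbb{M}}$, and the matching of $F_R(\subMink^c)$, $F_I(\subMink^c)$ with the subspaces attached to $\Ocal'$. One refinement: for $F_I(\subMink^c)$ the approximation by data compactly supported in $\Sigma_0\setminus\overline{\subMink}$ has to be done in the $\Fpi$ norm (e.g.\ by dilating away from the ball and then mollifying), because $\partial\subMink$ being a null set only settles the $\Fphi$ case.
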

\begin{proof}
    The statement is proved by applying Lemma~\ref{Lem: review} and Proposition~\ref{prop: GenericSub}, where we identify $\RealSub{1} = K^\infty\RealSub{\mathrm{pos}}^\infty(\Ocal)$ and $\RealSub{2} = \ComplexStr_{\RealHil^\infty} K^\infty(\ComplexStr_{\RealHil_\mathrm{pos}^\infty} \RealSub{\mathrm{pos}}^\infty)(\Ocal)$.
\end{proof}
We finally prove the main result of this section 
\begin{theorem}
Let $\mathcal{O} \subset \mathbb{M}$ be a open causal diamond with base $\subMink\subset\Cauchy_t$ (not necessarily invariant under time reflections) on Minkowski spacetime $\mathbb{M}$ and $\mathcal{A}(\Ocal)$ the corresponding abstract Weyl $\operatorname{C}^*$-algebra of a free real massive scalar field. Consider $\omega^{\beta}$, for $0< \beta < \infty$, the quasi-free KMS state with respect to the free dynamics on the quasi-local algebra $\mathcal{A}(\mathbb{M})$ and let $\mathcal{M}_\beta(\Ocal)$ be the represented algebra in the GNS representation induced by the KMS state
\begin{equation*}
    \mathcal{M}_\beta(\Ocal) \coloneqq \pi_{\omega^{\beta}}(\mathcal{A}(\Ocal))''.
\end{equation*}
Then, generalised Haag duality holds
\begin{equation*}
    \mathcal{M}_\beta(\Ocal)' = \mathcal{M_\beta}(\Ocal') \vee J\mathcal{M}_{\beta}(\mathbb{M})J.
\end{equation*}
\end{theorem}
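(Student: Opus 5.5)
The proof proceeds in three steps: reduce to a diamond based at time zero, apply the abstract duality of Proposition~\ref{prop: modcomm} with the subspaces identified in Lemma~\ref{Lem: identification_thermal}, and then identify the resulting algebras with $\mathcal{M}_\beta(\Ocal')$ and $J\mathcal{M}_\beta(\mathbb{M})J$.

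\textbf{Reduction to $t=0$.} Let the base $\subMink$ lie in $\Cauchy_t$. The time translation $\alpha_{-t}$ maps $\mathcal{A}(\Cdiamond(\subMink))$ onto $\mathcal{A}$ of a diamond based on $\Cauchy_0$. Since $\omega^\beta$ is invariant under the free dynamics, $\alpha_t$ is implemented in the KMS representation by a unitary $U_t$ leaving $\Omega_{\omega^\beta}$ invariant; it is the second quantisation of $e^{ith}\oplus e^{-ith}$. Conjugation by $U_t$ maps $\mathcal{M}_\beta(\Ocal)$, $\mathcal{M}_\beta(\Ocal')$ and $\mathcal{M}_\beta(\mathbb{M})$ onto their translates. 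It also commutes with the modular conjugation $J$ of $(\mathcal{M}_\beta(\mathbb{M}),\Omega_{\omega^\beta})$, because $U_t$ preserves both the algebra and the vector and $J$ is uniquely determined by them. Hence the duality relation is invariant under time translation, and we may assume $\Ocal=\Cdiamond(\subMink)$ is time-reflection invariant.

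\textbf{Abstract duality.} By Lemma~\ref{Lem: identification_thermal}, $\mathcal{M}_\beta(\Ocal)=\mathcal{R}_\mathrm{F}(\UOcal,\VOcal)$, with $\UOcal=\mathcal{U}(\RealSub{1})$ and $\VOcal=\mathcal{V}(\RealSub{2})$ for
\[
\RealSub{1}=K^\infty\RealSub{\mathrm{pos}}^\infty(\Ocal),\qquad \RealSub{2}=\ComplexStr_{\RealHil^\infty}K^\infty(\ComplexStr_{\RealHil^\infty_\mathrm{pos}}\RealSub{\mathrm{pos}}^\infty)(\Ocal).
\]
To apply Theorem~\ref{Thm: GS} and Proposition~\ref{prop: modcomm}, I need the invariance of $\RealSub{}^\infty$ under $e^{-\beta h}$ and $\complexconj$, which holds because $\en{p}=\en{-p}$. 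I also need $\RealSub{1},\RealSub{2}$ to be in generic position, or at least $\RealSub{1}\wedge\RealSub{2}^\perp=\RealSub{1}^\perp\wedge\RealSub{2}=\{0\}$ as required by Proposition~\ref{prop: generic}.

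By Lemma~\ref{Lem: review}, transported through $\delta_0$, these conditions become $F_R(\subMink)\wedge F_R(\subMink^c)=\{0\}$ and $F_I(\subMink)\wedge F_I(\subMink^c)=\{0\}$. Both hold: an element of the intersection would be a distribution supported in $\overline{\subMink}\cap\overline{\subMink^c}=\partial\subMink$ lying in $H^{-1/2}$ (respectively $H^{1/2}$), and such distributions vanish. This is the standard ground-state input of~\cite{ArakiHD2}. The remaining intersections, $\RealSub{1}\wedge\RealSub{2}$ and $\RealSub{1}^\perp\wedge\RealSub{2}^\perp$, follow from the Reeh--Schlieder / pre-cyclicity property of the ground state. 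By the first item of Proposition~\ref{prop: nongeneric}, the intersection $\UOcal\wedge\VOcal$ is trivial in any case.

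Proposition~\ref{prop: modcomm} then gives
\[
\mathcal{M}_\beta(\Ocal)'=\mathcal{R}_\mathrm{F}(\mathcal{U}(\RealSub{2}^\perp),\mathcal{V}(\RealSub{1}^\perp))\vee\mathcal{R}_\mathrm{F}(\mathcal{U}(\RealSub{}^\infty),\mathcal{V}(\RealSub{}^\infty))'.
\]

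\textbf{Identification of the two factors.} By Proposition~\ref{prop: ortFRI}, $\mathcal{U}(\RealSub{2}^\perp)=\mathcal{U}_{\Ocal'}$ and $\mathcal{V}(\RealSub{1}^\perp)=\mathcal{V}_{\Ocal'}$, where these are built from initial data supported in $\subMink^c$. I must check that $\mathcal{R}_\mathrm{F}(\mathcal{U}_{\Ocal'},\mathcal{V}_{\Ocal'})=\mathcal{M}_\beta(\Ocal')$.

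\emph{Inclusion $\subseteq$.} $F_R(\subMink^c)$ and $F_I(\subMink^c)$ are the closures of data coming from test functions supported in $\Ocal'$. To see this, approximate $L^2$ data on $\subMink^c$ by smooth compactly supported data away from $\partial\subMink$, and realise them as $\delta_0 f$, $\delta_1 g$ with $f,g$ supported in a thin time slab over the support, hence inside $\Ocal'$. The Weyl operators are strongly continuous in the label (Eq.~\eqref{Eq: tacke_closure}), so the closure suffices.

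\emph{Inclusion $\supseteq$.} Any test function supported in $\Ocal'$ propagates, by the support properties of $\operatorname{E}$, to time-zero data supported in $\subMink^c$.

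Since $\Ocal'$ is unbounded, $\mathcal{M}_\beta(\Ocal')$ is generated by the bounded subregions of $\Ocal'$, and each generator is handled by the same argument.

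Finally, $\mathcal{R}_\mathrm{F}(\mathcal{U}(\RealSub{}^\infty),\mathcal{V}(\RealSub{}^\infty))=\mathcal{M}_\beta(\mathbb{M})$, because $K^\infty$ has dense range and $\mathcal{R}_\mathrm{F}$ depends only on closures. Its commutant is therefore $J\mathcal{M}_\beta(\mathbb{M})J$ by Tomita--Takesaki, and the same $J$ appears in the proof of Proposition~\ref{prop: modcomm}. This yields the claimed equality.

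\textbf{Main obstacle.} The main difficulty is the genericity check together with the identification $F_{R/I}(\subMink^c)\leftrightarrow\Ocal'$ for the unbounded region $\subMink^c$. Both require boundary-regularity arguments for Sobolev-type distributions on $\partial\subMink$, and I would import these directly from~\cite{ArakiHD2}.
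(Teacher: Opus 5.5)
Your proposal is correct and follows essentially the same route as the paper. The steps are: reduction to a time-symmetric diamond by time-translation covariance (the paper does this at the end rather than the start), the identification $\mathcal{M}_\beta(\Ocal)=\mathcal{R}_\mathrm{F}(\mathcal{U}_\Ocal,\mathcal{V}_\Ocal)$, Theorem~\ref{Thm: GS} and Proposition~\ref{prop: modcomm} combined with Proposition~\ref{prop: ortFRI}, and Tomita--Takesaki for the extra factor. You also spell out points the paper leaves implicit or delegates to \cite{ArakiHD2, Garbarz_2022}: the genericity conditions (boundary support of $H^{\mp 1/2}$ data, and Reeh--Schlieder), the identification $\mathcal{R}_\mathrm{F}(\mathcal{U}_{\Ocal'},\mathcal{V}_{\Ocal'})=\mathcal{M}_\beta(\Ocal')$, and the invariance of $J\mathcal{M}_\beta(\mathbb{M})J$ under the implementing unitaries.
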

\begin{proof}
We begin by assuming the diamond $\Ocal$ to be invariant under time reflections and we conclude deriving the general case by exploiting the covariance of the net. As proven in Lemma~\ref{Lem: identification_thermal}
\begin{equation*}
    \mathcal{M}_{\beta}(\Ocal) = \mathcal{R}_{\mathrm{F}}(\UOcal, \VOcal).
\end{equation*}
The conditions $ K^\infty\RealSub{\mathrm{pos}}^\infty(\Ocal)^\perp \wedge \ComplexStr_{\RealHil^\infty} K^\infty(\ComplexStr_{\RealHil_\mathrm{pos}^\infty} \RealSub{\mathrm{pos}}^\infty)(\Ocal)=\{0\}$ and $ K^\infty\RealSub{\mathrm{pos}}^\infty(\Ocal) \wedge \ComplexStr_{\RealHil^\infty} K^\infty(\ComplexStr_{\RealHil_\mathrm{pos}^\infty} \RealSub{\mathrm{pos}}^\infty)(\Ocal)^\perp=\{0\}$ hold as a direct consequence of Lemma~\ref{Lem: review}. Those subspaces are actually known to be in generic position, see~\cite[Appendix~D]{Garbarz_2022}. Then, by Theorem~\ref{Thm: GS} and Proposition~\ref{prop: ortFRI} we have that
\begin{equation*}
    \mathcal{R}_{\mathrm{F}}(\UOcal, \VOcal)' = \mathcal{R}_{\mathrm{F}}(\mathcal{U}_{\Ocal'}, \mathcal{V}_{\Ocal'}) \vee \mathcal{R}_{\mathrm{F}}(\Tilde{\mathcal{U}}_{\mathbb{M}}, \Tilde{\mathcal{V}}_{\mathbb{M}})
\end{equation*}
and equivalently, as proven in~Proposition~\ref{prop: modcomm} 
\begin{align*}
    \mathcal{R}_{\mathrm{F}}(\Tilde{\mathcal{U}}_{\mathbb{M}}, \Tilde{\mathcal{V}}_{\mathbb{M}}) &= \mathcal{R}_\mathrm{F}(\mathcal{U}_\mathbb{M},\mathcal{V}_\mathbb{M})'\\
    &= J \mathcal{R}_{\mathrm{F}}(\mathcal{U}_{\mathbb{M}}, \mathcal{V}_{\mathbb{M}}) J.
\end{align*}
where $J$ is the modular conjugation associated with the pair $(\mathcal{M}_\beta(\mathbb{M}),\Omega_\beta )$.

Suppose now that $\hat{\Ocal}$ is a generic diamond. It is always possible to write $\hat{\Ocal}=\mathfrak{g}(\Ocal)$ with $\Ocal$ a diamond symmetric under time reflections (i.e. a diamond with base on $\Sigma_0$) and $\mathfrak{g}\in\mathscr{G}$ a time translation. Since the KMS state $\omega^\beta$ is invariant under time translations, it follows from the covariance of the net that it exists a unitary operator $U_\mathfrak{g}$ such that
\begin{equation*}
\mathcal{M}_\beta(\hat{\Ocal})=U_\mathfrak{g}^\dagger\mathcal{M}_\beta(\Ocal)U_\mathfrak{g}.
\end{equation*}
Therefore we have
\begin{align*}
\mathcal{M}_\beta(\hat{\Ocal})'&=U_\mathfrak{g}^\dagger\mathcal{M}_\beta(\Ocal)'U_\mathfrak{g}\\&=U_\mathfrak{g}^\dagger \left(\mathcal{M_\beta}(\Ocal') \vee J\mathcal{M}_{\beta}(\mathbb{M})J\right) U_\mathfrak{g}\\
    &=\mathcal{M_\beta}(\hat\Ocal') \vee J\mathcal{M}_{\beta}(\mathbb{M})J.
\end{align*}
This concludes the proof.
\end{proof}

\begin{remark}
    Note that since the KMS state is not invariant under boost transformations, the covariance of the abstract net cannot be directly employed to prove generalised Haag duality for diamonds obtained by boosting the class of diamonds considered in the previous theorem. This situation is different from the one encountered in the GNS representation induced by a vacuum state. In order to generalise the result to boosted diamonds or to more general Cauchy surfaces, the computations performed in this section need to be adequately adapted.
\end{remark}

\vspace{8mm}
{\bf  Acknowledgments:}
The research of S.G.~is funded by the EPSRC Open Fellowship EP/Y014510/1 and, for part of this work, he also benefited from a Short-Term Scientific Mission funded by COST Action CA21109 – CaLISTA, supported by COST (European Cooperation in Science and Technology). S.G.~is also grateful to the National Group of Mathematical Physics (GNFM-INdAM). L.S.~acknowledges financial support by Italian Ministry of University and Research through the grant PRIN 2022ZE8SC4. L.S.~would also like to thank Silvano Tosi for the support.\\
Both authors benefited from discussions with Bernard Kay, Valter Moretti, and Nicola Pinamonti, to whom they are grateful.
Both authors would also like to thank an anonymous referee who pointed out some inaccuracies in the previous version of the paper that have been corrected in the present one.

\vspace{6mm}
{\bf  Conflict of interest:}
On behalf of all authors, the corresponding author states that there is no conflict of interest.

\appendix
\section{Eckmann-Osterwalder theorem}
We briefly recall a result in~\cite{EckmannOsterwalder1973} that is used in the proof of our main theorem
\begin{theorem}[\protect{\cite[Thm.~$2$]{EckmannOsterwalder1973}}]\label{app: EO}
Let $\RealSub{}$ be a real Hilbert space and consider $\RealSub{1},\RealSub{2} \subset \RealSub{}$ closed subspaces. Denoting by $\RealSub{1}^\perp,\RealSub{2}^\perp$ their orthogonal complements in $\RealSub{}$, and assuming that $\RealSub{1},\RealSub{2}$ are in generic position, then it holds
\begin{equation}\label{eq: EckOst}
    \mathcal{R}_\mathrm{F}(\RealSub{1},\RealSub{2})'= \mathcal{R}_\mathrm{F}(\RealSub{2}^\perp,\RealSub{1}^\perp)
\end{equation}
\end{theorem}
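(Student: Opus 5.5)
The statement to prove is the Eckmann--Osterwalder theorem in Weyl form, $\mathcal{R}_\mathrm{F}(\RealSub{1},\RealSub{2})'=\mathcal{R}_\mathrm{F}(\RealSub{2}^\perp,\RealSub{1}^\perp)$, for $\RealSub{1},\RealSub{2}\subset\RealSub{}$ in generic position. I would translate it into Segal form via \eqref{eq: chiusuravN}: $\mathcal{R}_\mathrm{F}(\RealSub{1},\RealSub{2})=\mathcal{R}_\mathrm{S}(\mathcal{H}_1)$ with $\mathcal{H}_1=\RealSub{1}\oplus\ComplexStr_\RealHil\RealSub{2}$. The claim then reduces to the one-particle statement that the symplectic complement of $\mathcal{H}_1$ is $\RealSub{2}^\perp\oplus\ComplexStr_\RealHil\RealSub{1}^\perp$, together with the second-quantisation duality $\mathcal{R}_\mathrm{S}(H)'=\mathcal{R}_\mathrm{S}(H')$ for standard subspaces $H$.

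\textbf{Step 1 (one-particle level).} Using \eqref{Eq: cs_relates_scalarproducts}, $\sigma(v,w)=2(v,\ComplexStr_\RealHil w)$, I would compute the symplectic complement of $\mathcal{H}_1$. For $w=w_1+\ComplexStr_\RealHil w_2$ with $w_1,w_2\in\RealSub{}$, orthogonality of $\RealSub{}$ and $\ComplexStr_\RealHil\RealSub{}$ gives
\[
\sigma(u_1+\ComplexStr_\RealHil u_2,w)=2\big[(u_1,\ComplexStr_\RealHil w_1)-(u_1,w_2)+(u_2,w_1)+(\ComplexStr_\RealHil u_2,\ComplexStr_\RealHil w_2)\big].
\]
The first and last terms have to be handled using $\ComplexStr_\RealHil\RealSub{}\perp\RealSub{}$, so that only the cross terms $-(u_1,w_2)+(u_2,w_1)$ survive. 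Requiring these to vanish for all $u_1\in\RealSub{1}$ and $u_2\in\RealSub{2}$ gives $w_2\in\RealSub{1}^\perp$ and $w_1\in\RealSub{2}^\perp$. Hence $\mathcal{H}_1'=\RealSub{2}^\perp\oplus\ComplexStr_\RealHil\RealSub{1}^\perp$.

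\textbf{Step 2 (standardness).} I would show that $\mathcal{H}_1$ is standard, i.e.\ cyclic and separating. For cyclicity, a vector in $(\mathcal{H}_1+i\mathcal{H}_1)^\perp$ satisfies both real-orthogonality and symplectic-orthogonality to $\mathcal{H}_1$, which forces it into $\mathcal{H}_1^{\perp}\cap\mathcal{H}_1'$. Decomposing, this lands in $(\RealSub{1}^\perp\cap\RealSub{2}^\perp)\oplus\ComplexStr(\ldots)$, which is zero by generic position. For separation, $\mathcal{H}_1\cap\ComplexStr\mathcal{H}_1$ reduces similarly to $\RealSub{1}\cap\RealSub{2}=\{0\}$. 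The remaining generic-position conditions give the same properties for $\mathcal{H}_1'$, via the mixed intersections $\RealSub{1}\cap\RealSub{2}^\perp$ and $\RealSub{1}^\perp\cap\RealSub{2}$.

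\textbf{Step 3 (second quantisation).} For a standard subspace $H$ with Tomita operator $s$, the Fock vacuum is cyclic and separating for $\mathcal{R}_\mathrm{S}(H)$, and the modular data are the second quantisations $J=\Gamma(j)$, $\Delta=\Gamma(\delta)$. Then $J W(h)J=W(jh)$ together with $jH=H'$ (from $H'=$ the standard subspace of $s^*$) gives $\mathcal{R}_\mathrm{S}(H)'=J\mathcal{R}_\mathrm{S}(H)J=\mathcal{R}_\mathrm{S}(H')$. Translating back via \eqref{eq: chiusuravN} yields $\mathcal{R}_\mathrm{F}(\RealSub{2}^\perp,\RealSub{1}^\perp)$.

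\textbf{Main obstacle.} I expect the hardest part to be Step 3's identification of the modular objects at the Fock level. This requires showing that the closure of $A\Omega\mapsto A^*\Omega$ equals $\Gamma(s)$, handling domains via coherent vectors $W(h)\Omega$ and using the density of their span. The first thing I would try is to verify $S\,W(h)\Omega=W(-h)\Omega$ and check that $\Gamma(s)$ acts identically on the exponential vectors.
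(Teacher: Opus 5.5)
The paper gives no proof of this statement; it is quoted from Eckmann--Osterwalder. The natural comparison is therefore with their argument and with Proposition~\ref{prop: modcomm}, where the paper runs the same mechanism itself.

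Your plan is that Tomita--Takesaki strategy: pass to the Segal form with $\mathcal{H}_1=\RealSub{1}\oplus\ComplexStr_\RealHil\RealSub{2}$, show it is standard, identify the Fock modular data as $\Gamma(j),\Gamma(\delta)$, and conclude $\mathcal{R}_\mathrm{S}(\mathcal{H}_1)'=J\mathcal{R}_\mathrm{S}(\mathcal{H}_1)J=\mathcal{R}_\mathrm{S}(j\mathcal{H}_1)$. It is sound, once the domain argument in Step~3 flagged below is made precise.

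The genuine difference is in how $j\mathcal{H}_1$ is identified. Eckmann--Osterwalder, like the paper in Proposition~\ref{prop: modcomm}, compute the one-particle modular objects explicitly; to my recollection they do this by putting the pair $\RealSub{1},\RealSub{2}$ into Halmos' canonical form for two subspaces in generic position. You instead obtain $j\mathcal{H}_1=\mathcal{H}_1'$ abstractly, from $s_{H}^*=s_{H'}$ and $jsj=s^*$, and reduce everything to the elementary computation $\mathcal{H}_1'=\RealSub{2}^\perp\oplus\ComplexStr_\RealHil\RealSub{1}^\perp$.

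This buys a stronger result: you only use $\RealSub{1}\cap\RealSub{2}=\{0\}$ and $\RealSub{1}^\perp\cap\RealSub{2}^\perp=\{0\}$. Standardness of $\mathcal{H}_1$ already gives that of $\mathcal{H}_1'$: $\mathcal{H}_1'$ is cyclic iff $\mathcal{H}_1\cap i\mathcal{H}_1=\{0\}$, and separating iff $\mathcal{H}_1$ is cyclic. So your Step~2 remark attributing the properties of $\mathcal{H}_1'$ to the mixed intersections is inaccurate. Those intersections compute $\mathcal{H}_1\cap\mathcal{H}_1'=(\RealSub{1}\cap\RealSub{2}^\perp)\oplus\ComplexStr_\RealHil(\RealSub{2}\cap\RealSub{1}^\perp)$, which labels the centre of $\mathcal{R}_\mathrm{S}(\mathcal{H}_1)$ and plays no role in duality. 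In the paper's application, your version would need only the first item of Proposition~\ref{prop: nongeneric} and the pre-cyclicity remark, bypassing the reduction to $\mathcal{U}\mathcal{V}$. The explicit route, in exchange, gives you the modular group concretely.

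Three points need repair.

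First, in Step~1 the last term of your expansion should be $-(\ComplexStr_\RealHil u_2,w_2)$, which vanishes because $\RealSub{}\perp\ComplexStr_\RealHil\RealSub{}$. The term you wrote, $(\ComplexStr_\RealHil u_2,\ComplexStr_\RealHil w_2)=(u_2,w_2)$, does not vanish for that reason. Your conclusion is unaffected.

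Second, and more substantively, in Step~3 agreement of $S$ and $\Gamma(s)$ on the dense set $D_0=\operatorname{span}\{W(h)\Omega:\,h\in\mathcal{H}_1\}$ does not by itself give $S=\Gamma(s)$. You need $D_0$ to be a core for both closed operators.
For $S$, this follows from Kaplansky's density theorem, since $D_0=\mathcal{W}\Omega$ with $\mathcal{W}$ the $^*$-algebra spanned by the Weyl operators.
For $\Gamma(s)=\Gamma(j)\Gamma(\delta)^{1/2}$, it follows because $D_0\subset D(\Gamma(\delta)^{1/2})$ is dense and invariant under $\Gamma(\delta)^{it}$, as $\delta^{it}\mathcal{H}_1=\mathcal{H}_1$.
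Alternatively, verify the KMS condition for $\operatorname{Ad}\Gamma(\delta)^{it}$ on Weyl operators and invoke Takesaki's uniqueness theorem.

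Third, antilinearity of $J$ gives $JW(h)J=W(-jh)$ rather than $W(jh)$. This is harmless because $j\mathcal{H}_1$ is a real linear subspace.
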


\section{Technical Results}
\begin{lemma}\label{Lem: bounded_inverse}
    Let $\Hil$ be a Hilbert space with scalar product $\ComplexProduct{\cdot}{\cdot}$ and $A$ an operator in $\boundedop{\Hil}$ with bounded inverse $A^{-1}$. Let $V\subset\Hil$ be a closed linear subspace and $V^\perp$ its orthogonal. Then the following relation holds
    \begin{equation}
        (AV)^\perp=(A^*)^{-1}V^\perp.
    \end{equation}
\end{lemma}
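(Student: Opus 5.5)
The claim is $(AV)^\perp=(A^*)^{-1}V^\perp$ for a bounded $A$ with bounded inverse and a closed subspace $V$. The plan is to prove the two inclusions separately, using only the defining identity of the adjoint, $\langle Av,w\rangle=\langle v,A^*w\rangle$.

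First I record that $A^*$ is bounded and invertible, with $(A^*)^{-1}=(A^{-1})^*$. This follows by taking adjoints in $AA^{-1}=A^{-1}A=\mathbbm{1}$. In particular $(A^*)^{-1}V^\perp$ is a well-defined subspace, and it is closed because $(A^*)^{-1}$ is a homeomorphism.

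For the inclusion $\supseteq$, take $w=(A^*)^{-1}x$ with $x\in V^\perp$. For every $v\in V$ we have
$$\langle Av,w\rangle=\langle v,A^*(A^*)^{-1}x\rangle=\langle v,x\rangle=0,$$
so $w\in(AV)^\perp$.

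For the inclusion $\subseteq$, take $w\in(AV)^\perp$ and set $x\coloneqq A^*w$. Then $\langle v,x\rangle=\langle Av,w\rangle=0$ for all $v\in V$, so $x\in V^\perp$. Moreover $w=(A^*)^{-1}x$, hence $w\in(A^*)^{-1}V^\perp$.

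The argument is symmetric and short, and closedness of $V$ is not actually needed for the identity. The only point requiring care is the setting in which the paper applies the lemma: there it is used for real-linear operators on the real Hilbert space $\RealSub{}\oplus\RealSub{}$. I would note that the same computation holds verbatim for a real inner product, with $A^*$ the real adjoint; in the paper's application $A^*=A$, so the formula reduces to $(AV)^\perp=A^{-1}V^\perp$. No step is a genuine obstacle.
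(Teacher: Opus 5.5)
Your proof is correct and is essentially the paper's argument. The paper writes the same adjoint identity as a single chain of equivalences, $v\in(AV)^\perp\iff A^*v\in V^\perp\iff v\in(A^*)^{-1}V^\perp$, using $(A^*)^{-1}=(A^{-1})^*$ and remarking that the image is closed; you reach the same conclusion by checking the two inclusions separately.
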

\begin{proof}
We have
    \begin{equation*}
        v\in(AV)^\perp\iff\ComplexProduct{v}{Aw}=0, \;\, \forall w\in V\iff \ComplexProduct{A^*v}{w}=0, \;\, \forall w\in V\iff A^*v\in V^\perp\iff v\in(A^*)^{-1}V^\perp,
    \end{equation*}
where in the last implication we used the hypothesis to conclude that $(A^*)^{-1}=(A^{-1})^*$ exists as a bounded operator. Note that $(A^*)^{-1}V^\perp$ is a closed set by the open mapping theorem. 
\end{proof}

\begin{lemma}[\protect{\cite[App.~C]{Garbarz_2022}}]\label{lem: isoinitial}
The real subspaces $\RealSub{_\mathrm{pos}}^\infty$ and $\ComplexStr_{\RealHil_\mathrm{pos}^\infty} \RealSub{_\mathrm{pos}}^\infty$ are isomorphic to the real Hilbert spaces of initial conditions. Namely
\begin{equation*}
    \RealSub{_\mathrm{pos}}^\infty \cong \Fphi \, , \qquad \ComplexStr_{\RealHil_\mathrm{pos}^\infty} \RealSub{_\mathrm{pos}}^\infty \cong \Fpi.
\end{equation*}
\end{lemma}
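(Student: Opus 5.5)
The plan is to construct $\delta_0$ and $\delta_1$ explicitly on the dense subspaces of time-symmetric and time-antisymmetric test functions. On these subspaces we show they are isometries, up to a fixed normalisation constant, onto dense subspaces of $\Fphi$ and $\Fpi$. They then extend by continuity to unitaries between the completions.

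\textbf{Step 1: the maps are well defined.} Take $f\in C^\infty_c(\Mink,\mathbb{R})$ with $f(x^0,\textbf{x})=f(-x^0,\textbf{x})$. The computation already done in the text gives $\overline{\hat f(\en{p},\textbf{p})}=\hat f(\en{p},-\textbf{p})$. Hence $\delta_0 f=\mathcal{F}^{-1}(\hat f\vert_{\massh})$ is a real-valued function, and it is Schwartz on $\mathbb{R}^3$ as long as $\hat f\vert_{\massh}$ decays rapidly in $\textbf{p}$.

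For $g$ antisymmetric in time we have $\overline{\hat g(\en{p},\textbf{p})}=-\hat g(\en{p},-\textbf{p})$. Dividing by $i\en{p}$ therefore again produces a function whose inverse spatial Fourier transform is real, so $\delta_1 g$ is real as well.

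The maps descend to the quotient by $\ker\operatorname{E}$, and are injective there. The reason is that $f\in\ker\operatorname{E}$ is equivalent to $\hat f\vert_{\massh}=0$, since $\operatorname{E}$ is built from the two restrictions of $\hat f$ to the two mass shells.

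\textbf{Step 2: isometry and supports.} By Plancherel,
\begin{equation*}
\|\delta_0 f\|_\varphi^2=\int\frac{\di^3\textbf{p}}{(2\pi)^3}\,\en{p}^{-1}|\hat f(\en{p},\textbf{p})|^2=2\|K^\infty f\|^2_{\Hil^\infty}.
\end{equation*}
The right-hand side equals $2\mu^\infty(f,f)$, so $\delta_0$ is an isometry up to the harmless factor $\sqrt2$. In the same way,
\begin{equation*}
\|\delta_1 g\|_\pi^2=\int\frac{\di^3\textbf{p}}{(2\pi)^3}\,\en{p}\,\en{p}^{-2}|\hat g|^2,
\end{equation*}
which is the same expression, so $\delta_1$ is likewise an isometry up to $\sqrt2$.

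To get the support statement used in the text, I would write $\operatorname{E}f$ in momentum space. For symmetric $f$ the Cauchy datum $(\operatorname{E}f)(0,\cdot)$ vanishes, and $\partial_t(\operatorname{E}f)(0,\cdot)$ is a constant multiple of $\delta_0 f$. For antisymmetric $g$ the situation is reversed: $(\operatorname{E}g)(0,\cdot)$ is proportional to $\delta_1 g$. Since $\operatorname{supp}\operatorname{E}f\subset J(\operatorname{supp}f)$, functions supported in $C(\subMink)$ give Cauchy data supported in $\subMink$.

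\textbf{Step 3: dense range.} I expect this to be the main step. Suppose $\psi\in\Fphi$ is orthogonal to every $\delta_0 f$. I would test against products $f=\chi(x^0)\phi(\textbf{x})$, where $\chi=\rho*\rho$ and $\rho$ is even, real and compactly supported. Then $\hat\chi=|\hat\rho|^2\ge0$, and $\hat\chi$ is entire and not identically zero, so its real zeros are discrete. Orthogonality gives
\begin{equation*}
\int \en{p}^{-1}\,\overline{\mathcal{F}\psi(\textbf{p})}\,\hat\chi(\en{p})\,\mathcal{F}\phi(\textbf{p})\,\di^3\textbf{p}=0\qquad\text{for all }\phi\in C^\infty_c(\mathbb{R}^3,\mathbb{R}).
\end{equation*}
Real $\phi$ suffice here because $\psi$ is real. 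It follows that $\en{p}^{-1}\overline{\mathcal{F}\psi}\,\hat\chi(\en{p})=0$ almost everywhere. The zero set of $\hat\chi(\en{p})$ is a countable union of spheres, which has measure zero, so $\psi=0$. For $\delta_1$ the same argument works with $\chi$ odd, for instance $\chi=\rho'*\rho$, whose transform is $i\,p_0|\hat\rho|^2$; this exactly cancels the factor $(i\en{p})^{-1}$ up to a nonvanishing function.

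\textbf{Step 4: conclusion.} A densely defined isometry with dense range extends uniquely to a unitary between the completions. This gives $\RealSub{\mathrm{pos}}^\infty\cong\Fphi$ and $\ComplexStr_{\RealHil^\infty_\mathrm{pos}}\RealSub{\mathrm{pos}}^\infty\cong\Fpi$.
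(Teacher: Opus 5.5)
Your argument is correct. Steps 1, 2 and 4 coincide with the paper's, and you also keep the factor $\sqrt{2}$ that the paper's computation silently drops, which is harmless.

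The genuine difference is how you obtain dense range.

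\textbf{The paper's route.} It argues constructively. It computes that the Cauchy data of $\operatorname{E}h$ on $\Sigma_0$ are $(\delta_1 h_-,\,-\delta_0 h_+)$. Then, for compactly supported data $(f,g)$, it solves the Klein--Gordon Cauchy problem and sets $h=P(\chi\phi)$, with $\chi$ a time cut-off switching from $0$ to $1$ inside $(-\epsilon,\epsilon)$, so that $\operatorname{E}h=\phi$. Hence every compactly supported datum lies exactly in the range.

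\textbf{Your route.} You show instead that the range has trivial orthogonal complement. You test with products $\chi(x^0)\phi(\mathbf{x})$ and use Paley--Wiener, so that $\hat\chi(\omega_{\mathbf p})$ vanishes only on a null union of spheres. This is purely Fourier-analytic and needs neither well-posedness of the Cauchy problem nor the support theory of $\operatorname{E}$.

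\textbf{What each buys.} The paper's construction localises for free. Its $h$ lives in a thin slab around $\Sigma_0$ inside $J(\operatorname{supp} f\cup\operatorname{supp} g)$. So data supported in a compact subset of $\mathbf{B}$ come from test functions supported in $C(\mathbf{B})$. This is the converse inclusion behind identifying $\delta_0(\mathscr{K}^\infty_{\mathrm{pos}}(\mathcal{O}))$ with $F_R(\mathbf{B})$, and likewise for $F_I(\mathbf{B})$.

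A global orthogonality argument does not give this directly. Your ansatz does localise, though, if you let $\chi$ run through a shrinking approximate identity:
\begin{itemize}
\item use an even $\chi_n$ with $\int\chi_n=1$ for $\delta_0$;
\item use $\chi_n=-\rho_n'$, with $\rho_n$ an even approximate identity, for $\delta_1$.
\end{itemize}
In both cases the Fourier multiplier tends to $1$ boundedly. Dominated convergence then gives $\delta_0(\chi_n\phi)\to\phi$ in $\mathscr{F}_\varphi$ and $\delta_1(\chi_n\phi)\to\phi$ in $\mathscr{F}_\pi$, with $\chi_n\phi$ supported in $C(\mathbf{B})$ for large $n$.
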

\begin{proof}
We prove this statement by constructing explicitly the isomorphism. Let us define 
\begin{equation*}
    \delta_0 : \RealSub{_\mathrm{pos}}^\infty \to \Fphi\, , \qquad \delta_1 : \ComplexStr_{\RealHil_\mathrm{pos}^\infty} \RealSub{_\mathrm{pos}}^\infty \to \Fpi,
\end{equation*}
with explicit action on $f \in S(\mathbb{R}^4,\mathbb{R})  \cap \RealSub{_\mathrm{pos}}^\infty$ and $g \in S(\mathbb{R}^4,\mathbb{R}) \cap \ComplexStr_{\RealHil_\mathrm{pos}^\infty} \RealSub{_\mathrm{pos}}^\infty$ given by
\begin{equation*}
    \delta_0 f \coloneqq \mathcal{F}^{-1}\left( \hat{f}\vert_{\massh}(\mathbf{p}) \right) \, , \qquad \delta_1 g \coloneqq \mathcal{F}^{-1}\left( (i \omega_{\mathbf{p}})^{-1}\hat{g}\vert_{\massh}(\mathbf{p}) \right).
\end{equation*}
We start proving that they are isometries of Hilbert spaces and as such their definition can be extended over all of $\RealSub{_\mathrm{pos}}^\infty$ and $\ComplexStr_{\RealHil_\mathrm{pos}^\infty} \RealSub{_\mathrm{pos}}^\infty$.
This follows from the following computation for $f$
\begin{align*}
    \| \delta_0 f \|_{\varphi}^2 &= (\omega^{-\frac{1}{2}} \delta_0 f, \omega^{-\frac{1}{2}} \delta_0 f)\\
    &= \int \frac{\di^3 \mathbf{p}}{\omega_{\mathbf{p}}} \left| \hat{f}\vert_{\massh}(\mathbf{p})\right|^2 = \| f \|_{\mathcal{H}^{\infty}}^2
\end{align*}
and similarly for $g$
\begin{align*}
    \| \delta_1 g \|_{\varphi}^2 &= (\omega^{\frac{1}{2}} \delta_1 g, \omega^{\frac{1}{2}} \delta_1 g)\\
    &= \int \frac{\di^3 \mathbf{p}}{\omega_{\mathbf{p}}} \left| \hat{g}\vert_{\massh}(\mathbf{p})\right|^2 = \| g \|_{\mathcal{H}^{\infty}}^2.
\end{align*}
As isometries, $\delta_0$ and $\delta_1$ and their extensions are injective. Let us further show that they are surjective maps using the symplectomorphism between the symplectic space of initial data and that of sources for the Klein-Gordon equation. Since the operator $P = \Box - m^2$ is Green hyperbolic, we define the corresponding causal propagator (Pauli-Jordan function) as the operator defined on time compact functions $\mathrm{E}: C^\infty_{\mathrm{tc}}(\mathbb{M},\mathbb{R}) \to C^{\infty}(\mathbb{M},\mathbb{R})$ with the property $\mathrm{supp}(\mathrm{E}(f)) \subset J(\mathrm{supp}(f))$ and such that
\begin{equation*}
    \mathrm{E} \circ P \vert_{C^\infty_{\mathrm{tc}}} = 0 \, , \qquad P \circ \mathrm{E} \vert_{C^\infty_{\mathrm{tc}}} = 0.
\end{equation*}
For any $h \in C^\infty_{c}(\mathbb{M},\mathbb{R})$, we have
\begin{equation*}
    (\mathrm{E} h)(x) = \int \mathrm{E}(x,y) h(y)\di^4 y
\end{equation*}
where, using Schwartz's kernel theorem, we denoted by $\mathrm{E}(x,y)$ the distributional kernel of the causal propagator. Then, denoting by $h_{\pm}(x) = \frac{h(x^0,\mathbf{x}) \pm h(-x^0,\mathbf{x})}{2}$, we can explicitly compute
\begin{align}\nonumber
    (\mathrm{E} h)(0,\mathbf{x}) &= \frac{1}{(2\pi)^4}\int e^{i\mathbf{p} \mathbf{x}}\hat{\mathrm{E}}(p)\hat{h}(p) \di^4 p\\
    &= \frac{1}{(2 \pi)^3}\int \hat{h}_-\vert_{\massh}(\mathbf{p}) \frac{\di^3 \mathbf{p}}{i\omega_{\mathbf{p}}} = (\delta_1 h_-)(\mathbf{x})\label{Eq: appB_lem4_1}
\end{align}
and 
\begin{align}\label{Eq: appB_lem4_2}
    -(\partial_{x^0} (\mathrm{E} h))(0,\mathbf{x}) &= (\delta_0 h_+)(\mathbf{x}).
\end{align}
On the other hand consider a pair of initial conditions $f,g \in C^{\infty}_c(\mathbb{R}^3,\mathbb{R})$, which is a dense subset of $\Fphi,\Fpi$. Respectively, identify $f \in \Fpi$ and $g \in \Fphi$. We associate to this pair the corresponding unique spatially compact solution of the homogeneous Klein-Gordon equation $\phi(x^0,\mathbf{x})$ with $f,g$ its initial conditions on the Cauchy hypersurface at $x^0 = 0$ (see e.g.~\cite{Dimock:1980, BaerGinouxPfaffle}). Then, if we consider $\chi \in C^{\infty}(\mathbb{R},\mathbb{R})$ such that, for some $\epsilon > 0$, $\chi(t) = 0$ if $t < -\epsilon$ and $\chi(t) = 1$ for $t > \epsilon$, we can define the test function
\begin{equation*}
    h(x^0,\mathbf{x}) \coloneqq P (\chi(x^0) \phi(x^0,\mathbf{x}))
\end{equation*}
which, for a different choice of $\chi'$ with the same above hypotheses, determines the same element in the symplectic space $C^\infty_c(\Mink,\mathbb{R})/\ker(\operatorname{E})$. Then, since $P (\chi(x^0) \phi(x^0,\mathbf{x})) = - P ((1 - \chi(x^0)) \phi(x^0,\mathbf{x}))$, it follows that $\operatorname{E} h = \phi$ and therefore $h$ satisfies by construction
\begin{equation*}
    (\mathrm{E} h)(0,\mathbf{x}) = f(\mathbf{x}) \, , \qquad (\partial_{x^0} (\mathrm{E} h))(0,\mathbf{x}) = g(\mathbf{x}).
\end{equation*}
Namely, $f,g$ are the initial conditions for the solution sourced by $h$. Together with Equations~\eqref{Eq: appB_lem4_1},\eqref{Eq: appB_lem4_2} this proves surjectivity.
\end{proof}

\section{Pre-cyclicity in the thermal sector}\label{thm: Reeh}
Let $\Ocal$ be an non-empty open subset of Minkowski spacetime $\Mink$ and $\RealHil_\Ocal^\beta\subset\Hil^\infty\oplus\Hil^\infty$, defined as
\begin{equation*}
    \RealHil_\Ocal^\beta\coloneq\left\{\psi_1\oplus\psi_2\in\Hil^\infty\oplus\Hil^\infty:\;\psi_1=\complexconj\sinh (\implicitop)K^\infty f,\;\psi_2=\cosh (\implicitop)K^\infty f,\;f\in C^\infty_c(\Ocal)\right\},
\end{equation*}
the real subspace that labels the local von Neumann algebra $\mathcal{R}_\mathrm{S}(\RealHil_\Ocal)$ associated with $\Ocal$ in the KMS representation of the massive scalar field. We prove that the closure of $\RealHil_\Ocal^\beta$ is standard 
\begin{equation*}
    \overline{\RealHil_\Ocal^\beta+i\RealHil_\Ocal^\beta}=\Hil^\infty\oplus\Hil^\infty.
\end{equation*}
We follow an argument similar to the one used for the proof of~\cite[Prop.~5.6.]{BCV} (see also the proof in~\cite[Sec.~2.5]{Sthesis}).
As proven in~\cite[App.~A.2]{Kay:1985yx}, the complex span of $\RealHil_\Mink^\beta$ is dense in $\Hil^\infty\oplus\Hil^\infty$
\begin{equation*}
    \overline{\RealHil_\Mink^\beta+i\RealHil_\Mink^\beta}=\Hil^\infty\oplus\Hil^\infty.
\end{equation*}
We say that the size of the support of a test function $f\in C^\infty_c(\Mink)$ is smaller than $\Ocal$, and we write $\mathrm{supp}f\prec\Ocal$, if it exists a non empty $I=I_1\times I_2\times I_3\times I_4$ subset of $\mathbb{R}^4$ such that 
\begin{equation*}
    \alpha_u(\mathrm{supp}f)\subset\Ocal\quad \forall u\in I,\quad \alpha_u(\mathrm{supp}f)=\left\{x\in\Mink:\;x-u\in\mathrm{supp}f\right\}.
\end{equation*}
For a generic vector $\Psi \in\RealHil_\Mink^\beta$
\begin{equation*}
    \Psi=\complexconj\sinh (\implicitop)K^\infty f\oplus\;\cosh (\implicitop)K^\infty f
\end{equation*}
we denote by $\Psi_u$ the corresponding translated vector
\begin{equation*}
    \Psi=\complexconj\sinh (\implicitop)K^\infty f_u\oplus\;\cosh (\implicitop)K^\infty f_u,\quad f_u(x)=f(x-u).
\end{equation*}
Every test function $f\in C^\infty_c(\Mink)$ can be decomposed in a finite sum of test functions whose support has size smaller than $\Ocal$ by considering a sufficiently fine partition of the identity. More in details, it always exists a sufficiently large $N\in\mathbb{N}$ and a set of test functions $\chi_j\in C^\infty_c(\Mink),\; j\in[1,N]$ with $\mathrm{supp}\chi_j\subset\mathrm{supp}f$ and $\mathrm{supp}\chi_i\prec\Ocal$ such that
\begin{equation*}
    f=(1-\sum_{j=1}^N\chi_j)f+\sum_{j=1}^N\chi_jf,
\end{equation*}
and $\mathrm{supp}(1-\sum_{j=1}^N\chi_j)f\prec \Ocal$. Therefore we have
\begin{equation*}
    \overline{\operatorname{span}(\RealHil_\Mink^\prec+i\RealHil_\Mink^\prec})=\Hil^\infty\oplus\Hil^\infty,
\end{equation*}
where 
\begin{equation*}
    \RealHil_\Mink^\prec\coloneq\left\{\psi_1\oplus\psi_2\in\Hil^\infty\oplus\Hil^\infty:\;\psi=\complexconj\sinh (\implicitop)K^\infty f,\;\xi=\cosh (\implicitop)K^\infty f,\;f\in C^\infty_c(\Mink),\; \mathrm{supp}f\prec\Ocal \right\}.
\end{equation*}
Suppose now that a vector $\Psi\in\Hil^\infty\oplus\Hil^\infty$ belongs to $(\RealHil_\Ocal^\beta+i\RealHil_\Ocal^\beta)^\perp$ (where $\perp$ denotes the orthogonal with respect to the complex inner product). We show that $\Psi=0$ by showing that it is orthogonal to every vector in the dense set $\operatorname{span}(\RealHil_\Mink^\prec+i\RealHil_\Mink^\prec)$. 
    If $\Xi=\mathscr{R}(\Xi)+i\mathscr{I}(\Xi)$ is a generic vector in $\RealHil_\Mink^\prec+i\RealHil_\Mink^\prec$, the vectors $(\mathscr{R}(\Xi))_u$ and $i(\mathscr{I}(\Xi))_{u'}$
belong to $\RealHil_\Ocal^\beta+i\RealHil_\Ocal^\beta$ for $u$ and $u'$ in non-empty intervals $I,I'\subset\mathbb{R}^4$. We focus on $(\mathscr{R}(\Xi))_u$ since the argument for $i(\mathscr{I}(\Xi))_{u'}$ is analogous. If $u\in I$ the scalar product $\ComplexProduct{(\mathscr{R}(\Xi))_u}{\Psi}$ vanishes by assumption. Thanks to the presence of the operator $\sinh (\implicitop)$, dominated convergence theorem can be used to show that the function $u\mapsto\ComplexProduct{(\mathscr{R}(\Xi))_u}{\Psi}$ is analytic in a complex tube $\mathbb{R}^4+iV^{\beta/2}$ (see~\cite[Sec.~2.5]{Sthesis}). Here $V^{\beta/2}$ denotes the convex set
\begin{equation*}
    V^{\beta/2}=\left\{x\in\mathbb{R}^4:\; x\in V_+\cap(\frac{\beta}{2}e+V_{-})\right\},
\end{equation*}
where $V_{\pm}=\{x\in\mathbb{R}^4:\;\pm x^0>0,x_\mu x^\mu<0\}$ are the future ($+$) and past ($-$) directed causal cones and $e=(1,0,0,0)$. Analyticity then implies that $\ComplexProduct{(\mathscr{R}(\Xi))_u}{\Psi}$ vanishes for any $u\in\mathbb{R}^4$ and, in particular, for $u=0$. By conjugate linearity it follows that $\Psi$ is orthogonal to $\Xi$ and to the dense set $\operatorname{span}(\RealHil_\Mink^\prec+i\RealHil_\Mink^\prec)$ and so $\Psi=0$. In conclusion, since the \textit{complex linear} subspace $\RealHil_\Ocal^\beta+i\RealHil_\Ocal^\beta$ has trivial orthogonal, it is dense in $\Hil^\infty\oplus\Hil^\infty$. 

\printbibliography
\end{document}